\newcommand{\ignore}[1]{}
\theoremstyle{plain}
\newtheorem{ques}{Question}
\newtheorem{theorem}{Theorem}
\newtheorem{lemma}[theorem]{Lemma}
\newtheorem{prop}[theorem]{Proposition}
\newtheorem{corollary}[theorem]{Corollary}
\newtheorem{definition}{Definition}
\theoremstyle{definition}
\newtheorem{remark}{Remark}
\renewenvironment{proof}{\noindent {\bf \em Proof\/}.\enspace}
{\hfill $\blacksquare{}$ \vspace{12pt}}
\def\av{\mathop{\bf E}\limits}
\renewcommand{\le}{\leqslant}
\renewcommand{\ge}{\geqslant}
\newcommand{\eps}{\varepsilon}
\renewcommand{\epsilon}{\varepsilon}
\newcommand{\vnote}[1]{}
\newcommand{\anote}[1]{}
\newcommand{\arnote}[1]{}
\newcommand{\F}{\mathbb{F}}
\newcommand{\prob}[1]{\mbox{\rm Pr}\left[#1\right]}
\renewcommand{\a}{\alpha}
\renewcommand{\b}{\beta}
\newcommand{\g}{\gamma}
\newcommand{\rs}{{\rm RS}}
\newcommand{\grm}{{\rm RM}}
\newcommand{\myvec}[1]{\vec{#1}}
\newcommand{\vzero}{\vec{0}}
\newcommand{\vs}{\vec{s}}
\newcommand{\vx}{\vec{x}}
\newcommand{\vxhat}{\vec{\hat{x}}}
\newcommand{\vy}{\vec{y}}
\newcommand{\vz}{\vec{z}}
\newcommand{\vr}{\vec{r}}
\newcommand{\vc}{\vec{c}}
\newcommand{\va}{\vec{a}}
\newcommand{\ve}{\vec{e}}
\newcommand{\ttds}{tolerant tester}
\newcommand{\ttq}{tolerant tester}
\newcommand{\ltds}{local tester}
\newcommand{\edds}{error detector}
\newcommand{\cc}{\mathrm{CC}}
\newcommand{\poly}[1]{\mathrm{poly}(#1)}
\newcommand{\wt}{\textsc{wt}}
\title{{\bf Data Stream Algorithms for Codeword Testing}\footnote{Research supported by NSF CAREER Award CCF-0844796.}}
\author{Atri Rudra~~~ Steve Uurtamo}
\date{Department of Computer Science and Engineering,\\
University at Buffalo, The State University of New York,\\
Buffalo, NY, 14620.\\ {\tt \{atri,uurtamo\}@buffalo.edu}}
\begin{document}
\maketitle

\begin{abstract}

Motivated by applications in storage systems and property testing, we study
data stream algorithms for local testing and tolerant testing of
codes. Ideally, we would like to know whether there exist asymptotically good
codes that can be local/tolerant tested with one-pass, poly-log space data
stream algorithms.

We show that for the error detection problem (and hence, the local testing
problem), there exists a one-pass, log-space data stream algorithm for a
broad class of asymptotically good codes, including the Reed-Solomon (RS) code
and expander codes.
In our technically more involved result, we give a one-pass, 
$O(e\log^2{n})$-space algorithm for
RS (and related) codes with dimension $k$ and block length $n$
that can distinguish between the cases when the Hamming distance between
the received word and the code is at most $e$ and at least $a\cdot e$ for
some absolute constant $a>1$. For RS codes with random errors, 
we can obtain $e\le O(n/k)$.
For  folded RS codes, we obtain
similar results for worst-case errors as long as $e\le (n/k)^{1-\eps}$ for
any constant $\eps>0$. 
These results follow by reducing the tolerant testing problem to the error
detection problem using results from group testing and the list decodability
of the code. We also show that using our techniques, the space
requirement and the upper bound of $e\le O(n/k)$ cannot be improved by more
than logarithmic factors.


\end{abstract}

\section{Introduction}

In this work, we consider data stream algorithms for local testing and tolerant
testing of error-correcting codes. The local testing
problem for a code $C\subseteq \Sigma^n$ is the following: given a received
word $\vy\in\Sigma^n$, we need to figure out if $\vy\in C$ or if $\vy$ differs
from every codeword in $C$ in at least $0<e\le n$ positions (i.e. the
Hamming distance of $\vy$ from every $\vc\in C$, denoted by $\Delta(\vy,\vc)$,
is at least $e$). If $e=1$, then this is the error-detection problem.
In the tolerant testing problem, given $\vy$, we need to decide whether $\vy$
is at a distance at most $e_1$ from some codeword or if it has distance at least
$e_2>e_1$ from every codeword in $C$. Ideally, we would like to answer the
following (see Section~\ref{sec:prelims} for definitions related to codes):
\begin{ques}
\label{ques:main}
Do there exist asymptotically good codes that can be (tolerant) tested
by one-pass, poly-log space data stream algorithms?
\end{ques}

To the best of our knowledge, ours is the first work that considers this
natural problem.
We begin with the motivation for our work.

\vspace*{2mm}
\noindent
\textbf{Property Testing.} Local testing of codes has been extensively studied under the
stricter requirements of property testing. Under
the property testing requirements, one needs to solve the local
testing problem by (ideally) only accessing  a constant number of
positions in $\vy$. Codes that can be locally tested with a constant number
of queries have been instrumental in the development of the PCP machinery,
starting with the original proof of the PCP theorem~\cite{ALMSS98,AS98}.
The current record is due to Dinur, who presents codes that
have inverse poly-log rate, linear distance and can be locally tested
with a constant number of queries~\cite{dinur}.

General lower bounds on local testing of codes, however, have been scarce.
(See, e.g. the recent paper~\cite{redundant}). In particular, it is not known
if there are asymptotically good codes that can be locally tested with a
constant number of queries.  The question remains open even if one considers
the harder task of tolerant testing with a constant number of
\textit{non-adaptive} queries \cite{GR-tolerant}. 

It is not too hard to see that a non-adaptive tolerant tester
that makes a constant number of queries gives a single pass, log-space
data stream algorithm for tolerant testing. (See
Section~\ref{app:prop-test} for a proof.) Thus, if one could prove that
any one-pass data stream algorithm for local/tolerant testing of
asymptotically good codes requires $\omega(\log{n})$ space, then they will
have answered the question in the negative (at least for non-adaptive
queries). This could present a new approach to attack the
question of local/tolerant testing with a constant number of queries.

Next, we discuss the implications of  \textit{positive} results
for local/tolerant testing.

\vspace*{2mm}
\noindent
\textbf{Applications in Storage Systems.}  Codes are used in current day
storage systems such as optical storage (CDs and DVDs), RAID
(\cite{chen94raid}) and ECC memory (\cite{ecc-mem}). Storage systems, up
until recently, used simple codes such as the parity code and
checksums,
which have (trivial) data stream algorithms for error detection.
However, the parity code cannot detect even two errors. With the explosion
in the amount of data that needs to be stored, errors are becoming more
frequent. This situation will become worse as more data gets stored on
disks~\cite{disk-cacm}. Thus, we need to use codes that can handle more errors.

Reed-Solomon (RS) codes, which are used widely in storage systems (e.g. in
CDs and DVDs and more recently in RAID), are well-known to have good
error correcting capabilities. However, the conventional error detection
for RS codes is not space or pass efficient. Thus, a natural question to
ask is if one can design a data stream algorithm to perform error
detection for RS codes.

It would be remiss of us not to point out that unlike a typical application
of a data stream algorithm where $n$ is very large, in real life deployments
of RS codes, $n$ is relatively small.  However, if one needs to implement
the error detection algorithm in controllers on disks then it would be
advantageous to use a data stream algorithm so that it is feasible to
perform error detection with every read.  Another way to use error
detection is in \textit{data scrubbing} \cite{disk-cacm}. In this scenario, 
during idle time or low activity periods, error detection is run on
the entire disk to catch errors. In addition,
the single pass requirement means that we will probe each bit on a disk
only once, which is good for longevity of data. Finally, it would be helpful
to complement an error-detection algorithm with a data stream
algorithm that could also \textit{locate} the errors.

It is also plausible that the efficiency of the data stream algorithms will
make it feasible to use RS codes of block length (and alphabet size)
considerably larger than the ones currently used in practice.

Before we delve into the description of our results, we would like to point out
a few things. First, for the storage application, designing
algorithms for a widely used code such as the RS code will be more valuable
than answering Question~\ref{ques:main} in the affirmative via some new code.
Second, it is known that for local testing of a RS code of dimension $k$, at
least $k$ queries need to be made in the property testing
world\footnote{This follows from the fact that the ``dual" code has
distance $k+1$.}. 
However, this does not rule out the
possibility of a local/tolerant tester in the data stream world. Finally,
for storage systems, solving the tolerant testing problem even for
large constants $e_1$ and $e_2$ would be interesting.

\vspace*{2mm}
\noindent
\textbf{Our Results.}
We give a one-pass, poly-log space, randomized algorithm to perform error
detection for a broad class of asymptotically good codes such as
Reed-Solomon (RS) and expander codes. As a complementary result, we also
show that deterministic data stream algorithms (even with multiple passes)
require linear space for such codes.  Thus, for local testing
we answer Question~\ref{ques:main} in the affirmative. This should be
contrasted with the situation in property testing, where it
is known that for both asymptotically good RS and expander codes, a linear
number of queries is required. (The lower bound for RS codes was
discussed in the paragraph above and the result for expander codes
follows from~\cite{cnf-hard}.)

It turns out that using existing results for tolerant testing of Reed-Muller
codes over large alphabets~\cite{GR-tolerant}, one can answer
Question~\ref{ques:main} in the affirmative for tolerant testing, though with $O(n^{\eps})$
space for any constant $\eps>0$. (See Section~\ref{app:rm} for more
details.)

Given the practical importance of RS codes and given the fact that local
testing for RS codes with data stream constraints is possible, for the rest of
the paper we focus mostly on tolerant testing of RS and related codes.
We first remark that a naive tolerant testing algorithm for RS codes that
can be implemented in $O(e\log{n})$ space is to go through all the
$\sum_{i=1}^e\binom{n}{i}$ possible error locations $S$ and check if the
received word projected outside of $S$ belongs to the corresponding
projected down RS code. (This works as long as $e\le n-k$, which
is true w.l.o.g. since $n-k$ is the covering radius of a RS code of
dimension $k$ and block length $n$.)
Using our error detection algorithm for RS codes, this can be implemented
as a one-pass $O(e\log{n})$-space data stream algorithm. Unfortunately,
the running time of this algorithm is prohibitive, even for moderate
values of $e$.

In this paper, we match the parameters above to within a log factor but with a 
(small) polynomial running
time for values of $e$ much larger than a constant.
In particular, we present a one-pass, $O(e\log^2{n})$-space, polynomial time 
randomized algorithm for a RS code $C$ with dimension $k$ and block length $n$
that can distinguish between the cases when the Hamming distance between
$\vy$ and $C$ is at most $e$ and at least $a\cdot e$ (for some constant
$a>1$). This reduction works when $e(e+k)\le O(n)$. If we are dealing with
random errors, then we can obtain $ek\le O(n)$.
Using known results on list decodability of folded RS
codes~\cite{GR-capacity}, we obtain similar results for worst
case errors for $e\le (n/k)^{1-\eps}$ for any constant $\eps>0$. As a
byproduct, our algorithms also locate the errors (if the
number of errors is bounded by $e$), which is desirable for a storage
application.
We also show that using our techniques, the space requirement and
the upper bound of $e\le O(n/k)$ cannot be improved by more than 
logarithmic factors.

Ideally, we would like our data stream algorithms to spend poly-log
time per input position. However, in this paper we will tolerate
polynomial time algorithms.
In particular,  naive implementations of the tolerant testing algorithms take
$\tilde{O}(n^2)$ time. We also show that at the expense of slightly worse
parameters, we can achieve a running time of $\tilde{O}(ne)$ for certain
RS codes.

\vspace*{2mm}
\noindent
\textbf{Our Techniques.} 
It is well-known that error detection for any linear code can be done
by checking if the product of the received word with the parity check
matrix is the all zeros vector. We turn this into a one-pass
low space data stream algorithm using the well-known finger
printing method.
The only difference from the usual fingerprinting method, where one uses any
large enough field, is that we need to use a large enough extension field of
the finite field over which the code is defined. To show the necessity of
randomness, we use the well-known fooling set method from communication
complexity~\cite{cc-book}.
However, unlike the usual application of two-party communication complexity in
data stream algorithms, where the stream is broken up into two fixed
portions, in our case we need to be careful about how we divide up the input.  Details can be found in Section~\ref{sec:ed}.

We now move on to our tolerant testing algorithm. 
We begin with the connection to group testing.
Let $\vc$ be the closest codeword to $\vy$ and let $\vx\in\{0,1\}^n$
denote the binary vector where $x_i=1$ iff $y_i\neq c_i$. Now assume
we could access $\vx$ in the following manner: pick a subset $Q\subseteq [n]$
and check if $\vx_Q=\vzero$ or not (where $\vx_Q$ denotes $\vx$ projected
onto indices in $Q$). Then can we come up with a clever way
of non-adaptively choosing the tests such that at the end we know whether
$\wt(\vx)\le e$ or not? It turns out that we can use group testing to
construct such an algorithm. In fact, using $e$-disjunct
matrices (cf.~\cite{test-book}), we can design non-adaptive tests such
that given the answers to the tests one could compute $\vx$ if
$\wt(\vx)\le e$, else determine that $\wt(\vx)>e$. (A natural question
is how many tests do $e$-disjunct matrices require: we will come back to
this question in a bit.)
This seems to let us test whether $\vy$ is within a Hamming distance of
$e$ from some codeword or not. We would like to point out that the
above is essentially reducing one instance of the tolerant testing problem
to multiple instances of error-detection.

Thus, all we need to do is come up with a way to implement the tests to
$\vx$. A natural option, which we take, is that for any test $Q\subseteq [n]$,
we check if $\vy_Q\in \rs_Q[k]$, where $\rs_Q[k]$ is the RS code (of dimension $k$) projected
onto $Q$.
This immediately puts in one restriction: we will need $|Q|\ge k$ (as
otherwise every test will return a positive answer). However, there is another
subtle issue that makes our analysis more complicated-- we do not necessarily
have that $\vx_Q=\vzero$ iff $\vy_Q\in \rs_Q[k]$. While it is true that
$\vx_Q=0$ implies $\vy_Q\in RS_Q[k]$, the other direction is not true.
The latter is possible only if $\vy$ agrees with some codeword $\vc'\neq \vc$
in the positions indexed by $Q$. Now if $s$ is the size of the smallest
test and it is the case that the only codeword that agrees with
$\vy$ in at least $s$ positions is $\vc$, then we'll be done. We show that this
latter condition is true for RS codes if $s\ge e+k$ for worst-case errors
or with high probability if $s\ge 4k$ for random errors.

It is now perhaps not surprising that the list decodability of the code
plays a role in our general result for worst-case errors.
Assume that the code $C$ under consideration is $(n-s,L)$ list
decodable (i.e. every Hamming ball of radius $n-s$ has at most $L$
codewords in it) and one can do error detection on $C$ projected down to
any test of size at least $s$. 
If we pick our disjunct matrix carefully and $L$ is not too large, it
seems intuitive that one should be able to have, for most of the tests, that
$\vx_Q\neq\vzero$ implies $\vy_Q\not\in C_Q$. We are able to show that
if the matrix is picked at random, then this property holds. In addition, it is
the case that the ``decoding" of $\vx$ from the result of the test can be 
done even if some of the test results are faulty (i.e. $\vy_Q\in C_Q$ even
though $\vx_Q\neq\vzero$). The proof of this fact requires a fair bit of
work: we will come back to the issues in a bit.

Another side-effect of the fact that our algorithm does not readily translate 
into the group testing scenario is that even though we have been
able to salvage the case for $\wt(\vx)\le e$, we can no longer guarantee
that if $\wt(\vx)>e$, that our algorithm will catch it. In the latter case,
our algorithm might say $\wt(\vx)>e$ or it might return a subset
$S\subseteq [n]$ that purportedly contains all the error locations.
However, we can always check
if $\vy_{[n]\setminus S}\in \rs_{[n]\setminus S}[k]$ to rule out the latter
case.
This seems to require another pass on the input but we are able to
implement the final algorithm in one pass by giving a one-pass algorithm
for the following problem: Given as input $\vy$ followed by $T\subseteq [n]$
such that $|T|=e$, design a one-pass $O(e\log{n})$ space algorithm to check if
$\vy_{[n]\setminus T}\in \rs_{[n]\setminus T}[k]$. The main idea is to
encode the locations in $T$ as an unknown degree $e$ polynomial and  to
fill in the unknown coefficients once the algorithm gets to $T$ in the input.

We now return to the question of how many tests we can get away with
while using $e$-disjunct matrices. The best known construction uses
$O(e^2\log{n})$ tests~\cite{test-book} and this is tight to within
a $\log{e}$ factor (cf.~\cite{furedi}). Thus, to get sublinear space,
we need to have $e=o(\sqrt{n})$. To break the $\sqrt{n}$ barrier, 
instead of $e$-disjunct matrices, we use the recently discovered notion of the
$(e,e)$-list disjunct matrix~\cite{list-disj}. An $(e,e)$-list disjunct matrix
has the property that when applied to $\vx$ such that $\wt(\vx)\le e$,
it returns a subset $S\subseteq [n]$ such that (i) $x_i=1$ implies
$i\in S$ and (ii) $|S|\le \wt(\vx)+e$. It is known that such matrices
exist with $O(e\log{n})$ rows. In Section~\ref{app:list-disj}, we show
that such matrices can be constructed with $O(e\log^2{n})$ random bits.
However, note we can now only distinguish between the cases
of $\wt(\vx)\le e$ and $\wt(\vx)\ge 2e$.

The use of list disjunct matrices also complicates our result for worst
case errors that uses the list decodability of the code under consideration.
The issue is that when we pick the desired matrix at random, with the extra
task of ``avoiding" all of the $L-1$ codewords other than $\vc$
that can falsify the answer to the test, we can only guarantee that the
``decoding" procedure is able to recover a constant fraction of the
positions in error. This is similar to the notion of error reduction
in~\cite{spielman}. This suggests a natural, iterative $O(\log{e})$-pass
algorithm. Using our earlier trick, we can again implement our
algorithm in one pass. Finally, the plain vanilla proof needs
$\Omega(n)$ random bits. We observe that the proof goes through with
limited independence and use this to reduce the amount of randomness to
$O(e^2\log^3{n})$ bits. Reducing the random bits to something smaller,
such as $O(e\log{n})$, is an open problem.

The speedup in  the runtime from the naive $\tilde{O}(n^2)$
to $\tilde{O}(ne)$
for the tolerant testing algorithms is obtained by looking at certain explicit disjunct 
matrices and observing that the reduced error detection problems are
nicely structured.

There are two unsatisfactory aspects of our algorithms:
(i) The $O(e\log^2{n})$ space complexity and (ii) The condition
that $e\le O(n/k)$ (which in turn follows from the fact that we
have $s=n/(2e)$). We show that both of these shortcomings are essentially
unavoidable with our techniques. In particular, a lower bound on
the $1^+$ decision tree complexity of the threshold function
from~\cite{k-plus} implies that at least $\Omega(e)$ invocations of the
error detection routine are needed. Further, we show that for sublinear
test complexity, the support size $s$ must be in $O(\frac{n}e\log{n})$. This
follows by interpreting the reduction as a set cover problem and observing
that any set covers only a very small fraction of the universe.

\section{Preliminaries}
\label{sec:prelims}

We begin with some notation. Given an integer $m$, we will use $[m]$ to denote
the set $\{1,\dots,m\}$.
We will denote by $\F_q$ the finite field with $q$
elements. An $a\times b$ matrix $M$ over $\F_q$ will be called strongly
explicit if given any $(i,j)\in [a]\times [b]$, the entry $M_{i,j}$
can be computed in space $\mathrm{poly}(\log{q}+\log{a}+\log{b})$.
Given a vector $\vy\in\Sigma^n$ ($C\subseteq \Sigma^n$ resp.)
and a subset $S\subseteq [n]$, we will use $\vy_S$ ($C_S$ resp.) to denote
$\vy$ (vectors in $C$ resp.) projected down to the indices in $S$.
We will use $\wt(\vx)$ to denote the number of non-zero entries in
$\vx$. Further, for $S\subseteq [n]$, we will use $\wt_S(\vx)$ to denote
$\wt(\vx_S)$.

\vspace{2mm}
\noindent
\textbf{Codes.}
A code of {\em dimension} $k$ and {\em block length} $n$ over an
alphabet $\Sigma$ is a subset of $\Sigma^n$ of size $|\Sigma|^k$. The
{\em rate} of such a code equals $k/n$.  
 A code $C$ over
$\F_q$ is called a linear code if $C$ is a linear subspace of $\F_q^n$. 
If $C$ is linear, 
then it can be described by its parity-check
matrix $H$, i.e. for every $\vc\in C$, $H\cdot \vc^T=\vzero$.
An asymptotically good code has constant rate and constant relative distance
(i.e. any two codewords differ in at least some fixed constant fraction
of positions).

\vspace*{2mm}
\noindent
\textbf{Tolerant Testers.}
We begin with the central definition. Given a code $C\subseteq\Sigma^n$,
 reals $0\le d<c\le 1$, $0\le \eps_1<\eps_2\le 1$ and integers $r=r(n)$ and
$s=s(n)$, an $(r,s,\eps_1,\eps_2)_{c,d}$-\ttds\ $\mathcal{T}$
for $C$ is a randomized
algorithm with the following properties for any input $\vy\in\Sigma^n$:
{ (1) If $\Delta(\vy,C)\le \eps_1 n$, then $\mathcal{T}$ accepts with probability at least $c$;}
{(2) If $\Delta(\vy,C)\ge \eps_2 n$, then $\mathcal{T}$ accepts with probability at most $d$; }
{(3) $\mathcal{T}$ makes at most $r$ passes over $\vy$; and }
{(4) $\mathcal{T}$ uses at most $s$ space for its computation. }


Further, we will consider the following special cases of
an $(r,s,\eps_1,\eps_2)_{c,d}$-\ttds:
{(i) An $(r,s,0,\eps)_{c,d}$-\ttds\ will be called an $(r,s,\eps)_{c,d}$-\ltds.} 
{(ii) An $(r,s,0,1/n)_{c,d}$-\ttds\ will be called an $(r,s)_{c,d}$-\edds.}
%
There are some  definitional issues that are resolved in
Section~\ref{app:def}.

\vspace*{2mm}
\noindent
\textbf{List Disjunct Matrices.}
We give a low-space algorithm that can compute a small set of possible defectives
given an outcome vector which is generated by a list disjunct matrix.  
Relevant
definitions and material related to the algorithm can be found in Section~\ref{sec:twop}.

\vspace*{2mm}
\noindent
\textbf{Some Explicit Families of Codes.}
We now mention two explicit families of codes that we will
see later on in the paper.
We first begin with the Reed-Solomon code. Given $q\ge n\ge 1$ and a subset
$S=\{\alpha_1,\dots,\alpha_n\}\subseteq \F_q$, the Reed-Solomon code with 
{\em evaluation set}
$S$ and  dimension $k$, denoted by $\rs_S[k]$, is defined as follows: Any message
in $\F_q^k$ naturally defines a polynomial $P(X)$ of degree at most $k-1$ over
$\F_q$. The codeword corresponding to the message is obtained by evaluating
$P(X)$ over all elements in $S$. It is known that
a $(n-k)\times n$ parity check matrix of $\rs_S$ is given by 
$H_{\rs_S}=\{v_j\cdot\alpha_j^i\}_{i=0~~~~,~j=1}^{n-k-1,~n},$
where

$$v_j=\frac{1}{\prod_{1\le\ell\le n,\ell\neq j}\left(\alpha_j-\alpha_{\ell}\right)}$$.

Another explicit code family we will consider are expander codes. These are
binary codes whose parity check matrices are incidence matrices of 
constant-degree bipartite expanders. In particular, if we start with a
strongly explicit expander, then the parity check matrix of the corresponding
expander code will also be strongly explicit. 

\section{Connections to Property Testing}
\label{app:prop-test}

\subsection{The basic connection}
We now highlight a simple connection between tolerant testers in the data
stream world and tolerant testers in the query world:

\begin{prop}
\label{prop:basic-conn}
Let $C\subseteq [q]^n$ be such that it has a tolerant tester
$\mathcal{T}$ with query complexity $r$, thresholds $\eps_1$ and $\eps_2$
and time complexity $t_q(r)$ (i.e. it makes $t_q(r)$ operations
over $[q]$ for any possible query realization).
Then there also exists an $(r,O(t_q(r)+r\log{n}),\eps_1,\eps_2)_{c,s}$-\ttds\ 
$\mathcal{T}'$. Further, if $\mathcal{T}$ is {\em non-adaptive}, then
$\mathcal{T'}$ can be implemented as a $(1,O(t_q(r)+r\log{n}),\eps_1,\eps_2)_{c,s}$-\ttds.
\end{prop}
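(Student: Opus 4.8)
The plan is to simulate the query tester $\mathcal{T}$ by reading the stream $\vy \in [q]^n$ once (or $r$ times, in the adaptive case) and answering each of $\mathcal{T}$'s queries from the stored data, while maintaining only the $r$ symbols that $\mathcal{T}$ ever looks at. First I would handle the non-adaptive case, which is the cleaner one. Since $\mathcal{T}$ is non-adaptive, its set of query positions $Q \subseteq [n]$ with $|Q| \le r$ is determined solely by $\mathcal{T}$'s internal randomness and does not depend on $\vy$. So $\mathcal{T}'$ first tosses the coins that $\mathcal{T}$ would toss to fix $Q$; this costs at most the space $\mathcal{T}$ uses, which is at most $t_q(r)$ since in $t_q(r)$ operations it can touch at most $t_q(r)$ cells, and we may assume its randomness and workspace are $O(t_q(r))$ (or fold any extra into the stated bound). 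Then $\mathcal{T}'$ makes a single pass over $\vy$: it keeps a counter for the current index (costing $O(\log n)$ bits) and, whenever the index lies in $Q$, it records the pair (index, symbol), costing $O(\log n + \log q)$ bits per recorded position and hence $O(r(\log n + \log q)) = O(r\log n)$ bits total (absorbing $\log q$ into $\log n$ as in the paper's convention, or keeping it explicit). After the pass, $\mathcal{T}'$ has exactly the answers to all of $\mathcal{T}$'s queries and runs $\mathcal{T}$'s decision procedure, using $O(t_q(r))$ additional space and $t_q(r)$ operations over $[q]$. The acceptance probabilities are identical to $\mathcal{T}$'s by construction, so $\mathcal{T}'$ is a $(1, O(t_q(r) + r\log n), \eps_1, \eps_2)_{c,s}$-\ttds.

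For the general (adaptive) case, $\mathcal{T}$'s $i$-th query may depend on the symbols seen in queries $1, \dots, i-1$. Here I would use $r$ passes: in pass $i$, $\mathcal{T}'$ has already computed (from the previous passes) the first $i-1$ query-answer pairs and can therefore simulate $\mathcal{T}$ up to the point where it issues query $i$, obtaining the index $j_i \in [n]$ to probe; during pass $i$, $\mathcal{T}'$ simply waits until the stream reaches position $j_i$ and records $y_{j_i}$. This uses $O(\log n)$ space for the index counter plus $O(t_q(r))$ space to run the partial simulation of $\mathcal{T}$ (which needs to remember the at most $r$ answer symbols already gathered, costing $O(r\log n)$), for a total of $O(t_q(r) + r\log n)$; the number of passes is $r$. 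After pass $r$, all answers are in hand and $\mathcal{T}$'s output is reproduced exactly. Again correctness and the acceptance probabilities are inherited verbatim from $\mathcal{T}$.

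The only mildly delicate point — and the one I would state carefully rather than wave at — is the accounting of $\mathcal{T}$'s own internal space and randomness against the claimed bound $O(t_q(r) + r\log n)$. One needs the (standard, essentially definitional) observation that a tester running in $t_q(r)$ time over $[q]$ can be taken to use $O(t_q(r)\log q)$ bits of working memory, and that its query positions and random coins can be stored within the same budget; all of this is subsumed by $O(t_q(r) + r\log n)$ once $\log q$ is treated as $O(\log n)$ (which holds in every application in this paper, since $q \le \mathrm{poly}(n)$). With that convention fixed, both constructions above fall out immediately, and no further work is required.
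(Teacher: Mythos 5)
Your proposal is essentially identical to the paper's proof: simulate $\mathcal{T}$ directly, fixing the query positions upfront in the non-adaptive case to achieve one pass, using $r$ passes for the adaptive case, with $O(r\log n)$ space for indices/answers and $O(t_q(r))$ for running $\mathcal{T}$'s decision procedure. You spell out the space accounting (including $\log q = O(\log n)$) in more detail than the paper, which simply calls this "the obvious simulation," but the argument is the same.
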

\begin{proof}
The claimed result follows from the obvious simulation. In general, the tester
$\mathcal{T}'$ works as follows:
 $\mathcal{T}$ queries $r$ positions in the input and
then applies some function on the queried values (in the case when
$\mathcal{T}'$ is adaptive, it applies possibly $r$ different functions
after each query). As the total time complexity of $\mathcal{T}$ is $t_q(r)$,
the entire computation of $\mathcal{T}$ can be done in time (and hence, space)
$t_q(r)$.  If $\mathcal{T}$ is non-adaptive, all the query positions can be 
decided upfront and all the values can be determined in one pass. Otherwise
the simulation might need $r$ passes in the worst case. We might need to use
an additional $O(r\log{n})$ space to store the indices of the query
positions, which implies that $\mathcal{T}'$ has the claimed properties.
\end{proof}

\begin{remark}
In general, one cannot say much about $t_q(r)$ other than bounding it by 
$2^{O(q^r)}$ as the definition of the usual tolerant tester does not put any computational
efficient constraints on the testers. However,
if the \ttq\ $\mathcal{T}$ makes a constant number of queries then its time complexity is also a constant number of operations over the alphabet.
\end{remark}

\subsection{Tolerant Testing of Reed-Muller Codes}
\label{app:rm}

Let $\grm(q,\ell,m)$ denote the Reed-Muller code obtained by evaluating
$m$-variate polynomials over $\F_q$ of total degree $\ell<q$. 
These codes are known to have block length $n=q^m$, dimension $\binom{m+\ell}{m}$ and
distance $(1-\ell/q)n$ (cf.~\cite[Lect. 4]{sudan-notes}). Note that this implies
that if $m$ is a constant and $\ell=\Omega(q)$, then $\grm(q,\ell,m)$ is
asymptotically good.
These codes
are known to be tolerant testable in the property testing world with
polynomial number of queries.

\begin{theorem}[\cite{GR-tolerant}]
Let $m,\ell,q\ge 1$ be integer such that $\ell<c\cdot q$ for some
universal constant $c$. Then there exists a
tolerant tester for $\grm(q,\ell,m)$ in the property testing world that can distinguish between
at most $\eps_1 n$ and at least $\eps_2 n$ errors with $q=n^{1/m}$
queries, where $\eps_1$ and
$\eps_2$ are absolute constants that only depends on $c$.
\end{theorem}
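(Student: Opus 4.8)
The plan is to realize the tolerant tester as the familiar ``random line'' low-degree test and to show that, precisely because $\ell<cq$, this test has the required two-sided behavior. On input $\vy\in\F_q^{q^m}$ (viewed as a function on $\F_q^m$), the tester picks a uniformly random affine line $L=\{a+tb:t\in\F_q\}$, with $a,b\in\F_q^m$ uniform and $b\neq\vzero$, queries $\vy$ at all $q=n^{1/m}$ points of $L$, and computes $\rho_L:=\Delta(\vy|_L,\rs_L)/q$, where $\rs_L$ is the Reed--Solomon code $\rs[q,\ell+1]$ of univariate polynomials of degree $\le\ell$; this is an exact one-dimensional computation on the $q$ queried symbols, with no extra queries. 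It repeats this a constant number $R$ of times on independent lines $L_1,\dots,L_R$ and accepts iff $\frac{1}{R}\sum_i\rho_{L_i}\le\tau$ for a threshold $\tau$ to be fixed. The hypothesis $\ell<cq$ is what makes $\rho_L$ meaningful: $\rs_L$ then has relative distance $\ge 1-c$, a constant.

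For \emph{tolerant completeness}, suppose $\Delta(\vy,g)\le\eps_1 n$ for some $g\in\grm(q,\ell,m)$. Since the restriction of a total-degree-$\le\ell$ polynomial to a line is a univariate polynomial of degree $\le\ell$, we have $g|_L\in\rs_L$ and hence $\rho_L\le\Delta(\vy|_L,g|_L)/q$. Each point $a+tb$ of a random $L$ is marginally uniform in $\F_q^m$, so by linearity of expectation $\mathbf{E}_L[\rho_L]\le\mathbf{E}_L[\Delta(\vy|_L,g|_L)/q]=\Delta(\vy,g)/n\le\eps_1$. Taking $\tau=2\eps_1$ and $R=O(1)$, a Hoeffding bound on the independent $[0,1]$-valued variables $\rho_{L_1},\dots,\rho_{L_R}$ gives $\frac{1}{R}\sum_i\rho_{L_i}\le 2\eps_1$ with probability $\ge c:=2/3$. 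This part is essentially a one-line averaging argument.

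For \emph{soundness} I need the substantive ingredient: \emph{robustness} of the large-field Reed--Muller line test, i.e.\ a constant $\kappa=\kappa(c)>0$ with $\mathbf{E}_L[\rho_L]\ge\kappa\cdot\Delta(\vy,\grm)/n$ for every $\vy$. Morally: if a random line is, on average, $\eta$-close to a degree-$\le\ell$ polynomial, then the ``self-corrected'' function --- for each point $x$, the value at $x$ of the nearest degree-$\le\ell$ polynomial on a random line through $x$ --- is a codeword of $\grm(q,\ell,m)$ that is $O(\eta)$-close to $\vy$. Granting this, I would choose $\eps_1$ a small enough constant and $\eps_2$ a constant with $\kappa\eps_2\ge 3\eps_1$; then $\Delta(\vy,\grm)\ge\eps_2 n$ forces $\mathbf{E}_L[\rho_L]\ge 3\eps_1>\tau$, and a Hoeffding bound again shows $\frac{1}{R}\sum_i\rho_{L_i}>\tau$ with probability $\ge 2/3$, so the tester accepts with probability at most $d:=1/3<c$. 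All of $\eps_1,\eps_2,R$ depend only on $\kappa$, hence only on $c$, and the query complexity is $Rq=O(n^{1/m})$, with the hidden constant depending only on $c$.

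The hard part is exactly the robustness lemma $\mathbf{E}_L[\rho_L]=\Omega(\Delta(\vy,\grm)/n)$: the reverse inequality is the trivial completeness direction, but this one is where $\ell<cq$ must be used, since for $\ell$ comparable to $q$ the line test degenerates and loses robustness. I would obtain it either by invoking the known robust analyses of the low-degree test over large fields (in the tradition of Rubinfeld--Sudan, Friedl--Sudan, and Arora--Sudan), or directly: for a fixed $x$, unique decoding of $\vy|_L$ on a random line $L$ through $x$ succeeds unless $L$ carries an unusually large number of errors, an event of probability $O(\Delta(\vy,\grm)/n)$ once $q\gg\ell$, so the corrected value at $x$ is well-defined with that error probability; a pair-of-points consistency argument then shows the corrected values agree with a single global codeword of $\grm$. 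The remaining checks are routine: the marginal-uniformity claim for the chosen line parametrization, the negligible ($\le 1/n$) degenerate event $b=\vzero$, and the boosting constants.
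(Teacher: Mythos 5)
Your proposal matches the approach the paper itself describes (and attributes to \cite{GR-tolerant}): pick a random line, measure the Hamming distance of the restricted word to the corresponding Reed--Solomon code, and threshold. The paper does not re-prove this theorem --- it cites it and gives exactly the one-sentence test description you reconstruct --- so your sketch, with the easy averaging argument for completeness and an honest appeal to the known robustness of the large-field line test (Friedl--Sudan / Arora--Sudan style) for soundness, is the correct decomposition and matches what the cited source does.
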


In fact, the test is simple to describe: pick a random line in $\F_{q^m}$
and check if the projected down received word is within some threshold
Hamming distance from the corresponding RS code of dimension $\ell$ and
block length $q$. This latter step can be solved using the fast
list decoding algorithm for RS codes from~\cite{rs-fast} in time
$O(q\log^2{q})$ (and hence, in the same amount of space). Thus, Proposition~\ref{prop:basic-conn}
implies the existence of an asymptotically good code that can
be tolerant tested by a one-pass, $O(n^{\eps})$ space (for any $\eps>0$)
data stream algorithm. (Note that the code has an absolute constant as its
relative distance but its rate is exponentially small in $1/\eps$.)

\section{Some definitional issues}
\label{app:def}
One decision that we need to make is how we count the space/time requirement
for our algorithms. In particular, given a code defined over $\Sigma$, do
we do our accounting in terms of number of operations over $\Sigma$ or the
number of operations over ``bits"? This question is moot when $\Sigma$ has
constant size as both the measures will be within constant factors
of each other. However, if $|\Sigma|$ can depend on $n$, which will be the
case in some of the codes that we consider in this paper, the two measures
will not be within constant factors anymore. In particular, for
arbitrary $\Sigma$, an operation over $\Sigma$
may take $\Omega(n)$ space and time,
which will be prohibitive for our purposes.

We resolve the question above in the following way: First, we will account
for the complexity measures in terms of the number of operations in $\Sigma$.
Further, for positive results, we will focus on the case where 
$\Sigma=\F_q$ with $q\le n^{O(1)}$. Note that in this case, all
operations (including addition, multiplication and exponentiation) can be
done in $\mathrm{poly}(\log{n})$ time and $O(\log{n})$ space. Finally, for
general fields $\F_q$, the algorithm will also need access to an
irreducible polynomial (of degree at most $O(\log{n})$). However, note
that the definition of a code provides the definition of its 
alphabet. 
We will assume that the algorithm has full prior knowledge about the
code (including e.g. the value of $n$).
Thus, we will assume that the irreducible polynomial
will be (implicitly) a part of the input to the algorithm. For certain cases,
when the irreducible polynomial is part of an explicit family,
the algorithms can compute these
irreducible polynomials ``on the fly" and thus, do not need to be
part of the input.

Finally, by definition, the block length of a code is fixed. 
However, for
a meaningful asymptotic analysis, we need to think of an increasing
sequence of block lengths. Thus, from now on when we talk about a code,
we implicitly mean a family of codes.

\section{Data Stream Algorithms for Error-Detection}
\label{sec:ed}

\vspace*{2mm}
\noindent
\textbf{A positive result.}
We first show that any linear code with a strongly explicit
parity check matrix has an efficient $1$-pass data stream
error detector. 

Note that for a linear code $C\subseteq\F_q^n$ with parity check matrix $H$,
the error detection problem with the usual polynomial time complexity
setting is trivial. This is because by the definition of parity check
matrix for any $\vy\in\F_q^n$, $\vy\in C$ if and only if $H\cdot \vy^T=\vzero$.
However, the
naive implementation requires $\Omega(n)$ space which is prohibitive
for data stream algorithms. We will show later  
that for deterministic data stream algorithms with a constant number of
passes, this space requirement is unavoidable for asymptotically good codes.

However, the story is completely different for randomized algorithms. If we are
given the
{\em syndrome} $\vs=H\vy^T$ 
instead of $\vy$ as the input, then we just have to solve
the {\em set equality} problem which has a very well-known one-pass
$O(\log{n})$-space data stream algorithm based on the 
{\em finger-printing} method.
Because $\vs$ is a fixed linear combination of
$\vy$ (as $H$ is known), we can use the
fingerprinting technique in our case.
Further, unlike the usual fingerprinting method, which requires
any large enough field, we need to use an extension field
of $\F_q$.
For this, we need to get our hands on irreducible polynomials over $\F_q$.

\subsection{Families of Irreducible Polynomials}

In our error detection algorithm we need low space construction of families of irreducible polynomials. 
Since our final algorithm will be randomized, a randomized algorithm to construct irreducible polynomials works.
The following result is well-known (cf.~\cite[Chap. 20]{shoup-book}:
\begin{theorem}
\label{thm:irr-rand}
Let $q$ be a prime power, $d$ be an integer and $0<\delta<1$ be a real number.
Then there exists a randomized algorithm that outputs an irreducible
polynomial of degree $d$ over $\F_q$ with probability at least $1-\delta$.
Further, this algorithm makes $O(d^4\log(1/\delta)\log{q})$ operations over
$\F_q$ and needs $O(\log(1/\delta)+d\log{q})$ bits of space.
\end{theorem}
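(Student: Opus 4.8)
The plan is to use the classical ``generate and test'' strategy: repeatedly draw a uniformly random monic polynomial of degree $d$ over $\F_q$, run an irreducibility test on it, and output the first candidate that passes.

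First I would quantify how many samples are needed. By the Gauss/M\"obius formula, the number of monic irreducible polynomials of degree $d$ over $\F_q$ equals $\frac1d\sum_{e\mid d}\mu(d/e)\,q^e \ge \frac1d\bigl(q^d-\sum_{e=0}^{\lfloor d/2\rfloor}q^e\bigr)\ge \frac1d\bigl(q^d-2q^{d/2}\bigr)\ge \frac{q^d}{2d}$, the last inequality holding for all $q\ge2$, $d\ge1$ once one checks the finitely many cases with $q^d<16$ by hand. Since there are exactly $q^d$ monic polynomials of degree $d$, drawing one uniformly — sample its $d$ free coefficients independently and uniformly, using $d\lceil\log_2 q\rceil$ random bits and $O(d\log q)$ bits of storage — gives an irreducible polynomial with probability at least $\frac1{2d}$. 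Hence running $N=\lceil 2d\ln(1/\delta)\rceil=O(d\log(1/\delta))$ independent trials and reporting the first irreducible polynomial found (or an arbitrary polynomial if every trial fails) has failure probability at most $(1-\frac1{2d})^N\le e^{-N/(2d)}\le\delta$. It remains to implement a single irreducibility test within $O(d^3\log q)$ operations over $\F_q$ and $O(d\log q)$ bits of space, since then the totals are $O(d^4\log(1/\delta)\log q)$ operations and $O(d\log q+\log(1/\delta))$ space, as claimed.

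For the test I would use Rabin's criterion: a monic $f\in\F_q[X]$ of degree $d$ is irreducible over $\F_q$ if and only if $X^{q^d}\equiv X\pmod f$ and $\gcd\bigl(f,\,X^{q^e}-X\bigr)=1$ for every proper divisor $e\mid d$. Correctness is the standard fact that $X^{q^m}-X$ is the product of all monic irreducibles over $\F_q$ whose degree divides $m$: passing the first condition forces $f$ to be squarefree with every irreducible factor of degree dividing $d$, and the gcd conditions then rule out any proper factor. To evaluate this efficiently without blowing up the space, compute the Frobenius iterates $g_0:=X$ and $g_i:=g_{i-1}^{\,q}\bmod f$ for $i=1,\dots,d$, so that $g_i\equiv X^{q^i}\pmod f$; each step raises a degree-$<d$ polynomial to the $q$-th power modulo $f$ by $O(\log q)$ repeated squarings in $\F_q[X]/(f)$, and each such squaring is a degree-$<d$ polynomial multiplication followed by reduction modulo the monic $f$, i.e. $O(d^2)$ operations over $\F_q$ with schoolbook arithmetic. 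Thus one step costs $O(d^2\log q)$ and the whole chain $O(d^3\log q)$. As the index $i<d$ passes through each divisor of $d$ (detected by the test $d\bmod i=0$), I would immediately run Euclid's algorithm to check $\gcd(f,g_i-X)=1$ ($O(d^2)$ operations), rejecting $f$ on failure; and at $i=d$ I would check $g_d=X$. Doing the gcd tests on the fly means the working memory at any time holds only $f$, the current iterate $g_i$, and $O(d)$-sized temporaries — $O(d\log q)$ bits — while the loop counters ($i\le d$, the trial index $\le N$, and the inner exponentiation counter $\le\log_2 q$) occupy at most $O(\log d+\log\log(1/\delta)+\log\log q)$ bits, which is $O(d\log q+\log(1/\delta))$.

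The genuinely routine parts are the polynomial arithmetic and Euclid's algorithm; the one point needing care is the bookkeeping that keeps each test to $O(d^3\log q)$ operations and $O(d\log q)$ space — namely, that both the main exponentiation $X^{q^d}\bmod f$ and all of Rabin's auxiliary gcd checks are read off a single length-$d$ Frobenius chain, rather than computed as $\Theta(\log d)$ separate modular exponentiations with separately stored outputs. (Throughout we charge unit cost per operation over $\F_q$, as in Section~\ref{app:def}; when $q$ is a prime power that is not prime, the algorithm also needs the ambient representation of $\F_q$, which, like the alphabet of a code, we assume is supplied.)
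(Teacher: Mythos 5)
Your proof is correct and follows essentially the standard route: the paper itself gives no proof, merely citing the result as well known from Shoup's book, and your generate-and-test strategy (uniformly random monic polynomial, Rabin's irreducibility criterion, M\"obius-function lower bound of $q^d/(2d)$ on the count of monic irreducibles) is exactly the argument there. The complexity accounting --- $O(d\log(1/\delta))$ trials at $O(d^3\log q)$ operations and $O(d\log q)$ bits each, obtained via a single Frobenius chain with on-the-fly gcd checks --- matches the stated bounds (your trial-counter term $O(\log\log(1/\delta))$ is in fact slightly sharper than the stated $O(\log(1/\delta))$).
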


Coming up with a deterministic polynomial algorithm for construction
of irreducible polynomials is an open question.
However, it turns out that in our application, we would be happy if
the final irreducible polynomial has degree $d'$ such that $d'\ge d$
and is not much larger than $d$. In particular for
\textit{prime} $p$, there exists a deterministic algorithm
that runs in time (and hence, space) $\poly{d\log{p}}$
and outputs a polynomial with degree at least $d$ and at most $O(d\log{p})$
\cite{adleman-lenstra}.

Next, we show that constructing such irreducible polynomials
can also be done for fields of characteristic $2$.
The result
follows from
other known results.

\begin{theorem}
\label{thm:irr-exp}
Let $q$ be a power of $2$ and let $d\ge 1$ be an integer. 
Given the
irreducible polynomial that generates $\F_q$, 
there exists a
deterministic
$O(d\log{q})$ space, $O((d^2+\log{q})\log^2{q})$-time algorithm that computes an irreducible polynomial
over $\F_q$ with degree $d'$ such that $d\le d'\le 2d$.
\end{theorem}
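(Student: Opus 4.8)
The plan is to reduce the statement to the construction, over $\F_q$, of an irreducible polynomial whose degree is a power of two, and then to build such a polynomial through an Artin--Schreier tower of quadratic extensions. Put $d' = 2^{\lceil \log_2 d\rceil}$; since $2^{\lceil\log_2 d\rceil - 1} < d \le 2^{\lceil\log_2 d\rceil}$ we have $d \le d' < 2d$, so it suffices to produce an irreducible polynomial of degree $d'=2^k$ over $\F_q$ within the claimed resources (for $d=1$ the polynomial $x$ already works). Concretely I will construct a chain $\F_q = L_0 \subset L_1 \subset \cdots \subset L_k$ with $[L_i:L_{i-1}]=2$, so that $L_k = \F_{q^{2^k}}$, and then read off the minimal polynomial over $\F_q$ of a suitable generator of $L_k$.

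For the tower I use the Artin--Schreier criterion: over a field $F$ of characteristic $2$, $x^2+x+a$ is irreducible iff $a \notin \{z^2+z : z\in F\}$, and for $F=\F_{2^m}$ this is exactly $\mathrm{Tr}_{F/\F_2}(a)=1$. First locate a seed $\beta_0\in\F_q$ with $\mathrm{Tr}_{\F_q/\F_2}(\beta_0)=1$: the trace is a nonzero $\F_2$-linear functional, hence nonzero on some element of the basis given by the input irreducible polynomial (and $\beta_0=1$ works whenever $\log_2 q$ is odd), and each trace value is evaluated with $O(\log q)$ field operations by repeated squaring. Then set recursively $\theta_i^2+\theta_i+\beta_{i-1}=0$, $L_i = L_{i-1}(\theta_i)$, and $\beta_i = \theta_i\,\beta_{i-1}$. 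The key point, proved by induction, is $\mathrm{Tr}_{L_i/\F_2}(\beta_i)=1$ for all $i$: assuming this at level $i-1$, the quadratic $x^2+x+\beta_{i-1}$ is irreducible over $L_{i-1}$, so $L_i/L_{i-1}$ is Galois of degree $2$ with nontrivial automorphism sending $\theta_i$ to the other root $\theta_i+1$; hence $\mathrm{Tr}_{L_i/L_{i-1}}(\theta_i)=1$ and therefore $\mathrm{Tr}_{L_i/\F_2}(\beta_i)=\mathrm{Tr}_{L_{i-1}/\F_2}\bigl(\beta_{i-1}\,\mathrm{Tr}_{L_i/L_{i-1}}(\theta_i)\bigr)=\mathrm{Tr}_{L_{i-1}/\F_2}(\beta_{i-1})=1$. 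Thus every step is a genuine quadratic extension and $L_k=\F_{q^{2^k}}$.

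To extract a polynomial, observe that the subfields of $L_k$ containing $\F_q$ are precisely the $\F_{q^{2^j}}$, $0\le j\le k$, and form a chain, so an element of $L_k$ generates $L_k$ over $\F_q$ as soon as it avoids $L_{k-1}$; since $\theta_k\notin L_{k-1}$ and $\beta_{k-1}\ne 0$, both $\theta_k$ and $\beta_k=\theta_k\beta_{k-1}$ generate $L_k$ over $\F_q$, and likewise each $\beta_i$ generates $L_i$. Hence the minimal polynomial $m_k$ of $\theta_k$ over $\F_q$ is irreducible of degree $2^k$ and satisfies $m_k(x)=h_{k-1}(x^2+x)$, where $h_i$ is the degree-$2^i$ minimal polynomial over $\F_q$ of $\beta_i$; squaring $\beta_i=\theta_i\beta_{i-1}$ gives $\beta_i^2+\beta_{i-1}\beta_i+\beta_{i-1}^3=0$, so $h_i(x)=\mathrm{Res}_y\bigl(h_{i-1}(y),\,x^2+xy+y^3\bigr)=\prod_j\bigl(x^2+\rho_j x+\rho_j^3\bigr)$ over the roots $\rho_j$ of $h_{i-1}$, with base case $h_0(x)=x+\beta_0$. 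Computing level $i$ from level $i-1$ (one resultant against a fixed degree-$3$ polynomial, plus one composition with $x^2+x$) costs $O((2^i)^2)$ operations over $\F_q$; processing the levels in order and discarding old data uses only $O(2^k\log q)=O(d\log q)$ bits, and summing the geometric series gives $O(d^2)$ field operations, i.e. $O(d^2\log^2 q)$ bit operations, on top of the $O(\log q)$-operation search for $\beta_0$.

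The main obstacle is the cost accounting in the last step: one must implement the minimal-polynomial recursion so that it meets the stated time and space bounds simultaneously. Computing $h_i$ naively as a full Sylvester determinant, or as an explicit product over all $2^{i-1}$ Frobenius conjugates $\rho_j$ inside $L_i$, is too expensive (cubic in the degree); the right implementation views $h_i$ as the norm of $h_{i-1}(\bar y)$ from the rank-$3$ extension $\F_q(x)[y]/(y^3+xy+x^2)$ down to $\F_q(x)$, so that only $O((\deg h_{i-1})^2)$ operations on degree-$O(\deg h_{i-1})$ polynomials over $\F_q$ are needed, and one keeps the working storage at the size of a single element of $\F_{q^{d'}}$. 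The algebraic facts — the Artin--Schreier irreducibility criterion, the trace propagation, and the ``chain of subfields'' argument for why $\theta_k$ and the $\beta_i$ generate — are routine once set up; the resultant/norm bookkeeping is where the quoted $O((d^2+\log q)\log^2 q)$ time and $O(d\log q)$ space are actually earned.
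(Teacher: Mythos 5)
Your argument is correct and rests on the same algebraic core as the paper's --- a trace-one element $\beta\in\F_q$ (found by scanning the basis $1,\alpha,\dots,\alpha^{m-1}$, exactly as you propose) seeding an Artin--Schreier tower --- but the two routes from that core to the output polynomial are genuinely different. The paper cites a closed-form recursion of Gao: with $A_0=X$, $B_0=1$, $A_{k+1}=A_kB_k$, $B_{k+1}=A_k^2+B_k^2$, the polynomial $A_k(X)+\beta B_k(X)$ is already irreducible of degree $2^k$ over $\F_q$; each iteration is a single multiplication of degree-$O(2^i)$ polynomials, so the $O(d^2)$-operation, $O(d)$-element bounds fall out immediately with no further machinery. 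You instead build the tower abstractly, prove by induction that each $\beta_i$ has trace one and generates $\F_{q^{2^i}}$, and then extract the minimal polynomial of $\theta_k$ by the resultant recursion $h_i=\mathrm{Res}_y\bigl(h_{i-1}(y),\,y^3+xy+x^2\bigr)$ followed by the composition $m_k=h_{k-1}(x^2+x)$. This is sound, and the costs can be made to match --- Horner evaluation of $h_{i-1}(\bar y)$ in $\F_q[x][y]/(y^3+xy+x^2)$ followed by a $3\times3$ determinant with degree-$O(2^{i-1})$ entries gives $O(2^{2(i-1)})$ field operations and $O(2^{i-1})$ field elements of storage at step $i$ --- but, as you yourself flag, essentially all of the work in your complexity analysis lives inside that norm step and a complete proof would need to spell it out rather than assert it. The paper's route avoids that issue entirely: the Gao recursion hands you the polynomial in product/sum form, so there is nothing to eliminate. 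What your approach buys is a self-contained derivation of \emph{why} the construction works; what the paper's buys is a construction whose cost analysis is one line.
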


Coming up with an analogous result to Theorem~\ref{thm:irr-exp} for odd characteristic
seems to be an open problem.

We begin the proof of Theorem~\ref{thm:irr-exp}.
We will use the following result:

\begin{theorem}[cf.~\cite{gao-book}] 
\label{thm:irr-exp-even}
Let $m\ge 1$ be an integer and let
$\beta\in\F_{2^m}$ such that $Tr(\beta)\neq 0$, where $Tr(x)=\sum_{i=0}^{m-1} x^{2^i}$ is the trace function. Define the polynomials $A_k(X)$ and
$B_k(X)$ recursively as follows (for $k\ge 0$):
\begin{align*}
A_0(X)&=~X\\
B_0(X)&=~1\\
A_{k+1}(X)&=~ A_k(X)B_k(X)\\
B_{k+1}&=~ A_k^2(X)+B_k^2(X).
\end{align*}
Then $A_k(X)+\beta\cdot B_k(X)$ is an irreducible polynomial
over $\F_{2^m}$ of degree $2^k$.
\end{theorem}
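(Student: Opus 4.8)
The plan is to exhibit $f_k(X):=A_k(X)+\beta B_k(X)$, up to the harmless scalar $\beta$, as the minimal polynomial of an element that generates a degree-$2^k$ extension of $\F_{2^m}$ built as a tower of quadratic (Artin--Schreier) extensions. First the bookkeeping. An easy induction shows $\deg A_k=2^k-1$ and $\deg B_k=2^k$ with $B_k$ monic (using $B_{k+1}=A_k^2+B_k^2=(A_k+B_k)^2$ in characteristic $2$), so $f_k$ has degree exactly $2^k$ and leading coefficient $\beta$; a second induction gives $\gcd(A_k,B_k)=1$, together with $A_k(0)=0$, $B_k(0)=1$ for all $k$ and $A_k(1)=0$, $B_k(1)=1$ for $k\ge 2$, so $f_k(0)=f_k(1)=\beta\neq 0$ for $k\ge 1$. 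Setting $R_k:=A_k/B_k\in\F_{2^m}(X)$, the recursion yields $R_0=X$ and $R_{k+1}=\phi(R_k)$ where $\phi(Y)=Y/(Y+1)^2=Y/(Y^2+1)$; hence $R_k=\phi^{(k)}$, the $k$-fold iterate. Since $\gcd(A_k,B_k)=1$, any root $\alpha$ of $f_k$ satisfies $B_k(\alpha)\neq 0$, and then $f_k(\alpha)=0$ is equivalent to $\phi^{(k)}(\alpha)=\beta$ (the intermediate denominators never vanish, again by coprimality and the values at $1$ just computed).

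Now fix a root $\alpha=\alpha_k$ of $f_k$ and put $\alpha_{j-1}:=\phi(\alpha_j)$ for $j=k,\dots,1$; then $\alpha_j$ is a root of $f_j$ and in particular $\alpha_0=\beta$. Clearing denominators in $\alpha_{j-1}(\alpha_j+1)^2=\alpha_j$ gives, in characteristic $2$,
\[
\alpha_{j-1}\alpha_j^2+\alpha_j+\alpha_{j-1}=0 ,
\]
so $\alpha_j$ satisfies the quadratic $q_j(Y)=\alpha_{j-1}Y^2+Y+\alpha_{j-1}$ over $F_{j-1}:=\F_{2^m}(\alpha_{j-1})$ (and $\alpha_{j-1}\neq 0$ since $f_{j-1}(0)\neq 0$, so $q_j$ is a genuine quadratic). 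As the $\alpha_j$ are rational in one another we get a tower $\F_{2^m}=F_0\subseteq F_1\subseteq\cdots\subseteq F_k=\F_{2^m}(\alpha)$ with each step of degree $1$ or $2$. If we show each $q_j$ is irreducible over $F_{j-1}$, then $[F_k:\F_{2^m}]=2^k=\deg f_k$, the minimal polynomial of $\alpha$ has degree $2^k$, and $f_k$ equals $\beta$ times it — hence $f_k$ is irreducible of degree $2^k$. The substitution $Y=Z/\alpha_{j-1}$ turns $q_j$ into $Z^2+Z+\alpha_{j-1}^2$, so by the Artin--Schreier criterion $q_j$ is irreducible over $F_{j-1}$ iff $\mathrm{Tr}_{F_{j-1}/\F_2}(\alpha_{j-1})=1$ (squaring preserves the trace). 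Thus the statement reduces to the claim
\[
\mathrm{Tr}_{F_j/\F_2}(\alpha_j)=1\qquad\text{for }0\le j\le k-1,
\]
which I would prove by induction on $j$; the base case $j=0$ is exactly the hypothesis $\mathrm{Tr}_{\F_{2^m}/\F_2}(\beta)\neq 0$.

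The main obstacle is precisely this trace propagation up the tower. For the inductive step, factor $q_j=\alpha_{j-1}(Y-\alpha_j)(Y-\alpha_j')$ over $F_j$ and read off $\mathrm{Tr}_{F_j/F_{j-1}}(\alpha_j)=\alpha_j+\alpha_j'=1/\alpha_{j-1}$ and $\mathrm{N}_{F_j/F_{j-1}}(\alpha_j)=\alpha_j\alpha_j'=1$, so $\alpha_j'=1/\alpha_j$. Transitivity of the trace then gives
\[
\mathrm{Tr}_{F_j/\F_2}(\alpha_j)=\mathrm{Tr}_{F_{j-1}/\F_2}\!\bigl(1/\alpha_{j-1}\bigr).
\]
For $j\ge 2$ this is clean: applying $\mathrm N=1$ one level down shows $1/\alpha_{j-1}$ is the $F_{j-2}$-conjugate of $\alpha_{j-1}$, hence lies in its $\mathrm{Gal}(F_{j-1}/\F_2)$-orbit, and the absolute trace is constant on Galois orbits, so $\mathrm{Tr}_{F_{j-1}/\F_2}(1/\alpha_{j-1})=\mathrm{Tr}_{F_{j-1}/\F_2}(\alpha_{j-1})$ and the induction carries over unchanged — in effect all the traces along the tower collapse to a single value fixed at level $1$. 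The delicate case is $j=1$: here $1/\alpha_0=1/\beta$ already lies in $F_0=\F_{2^m}$, so one needs $\mathrm{Tr}_{\F_{2^m}/\F_2}(1/\beta)\neq 0$ in addition, and extracting this from the stated hypothesis on $\beta$ (it holds for the $\beta$ produced by the construction used in Theorem~\ref{thm:irr-exp}) is the step I expect to require the most care. Everything else — the degree count, the coprimality $\gcd(A_k,B_k)=1$, the non-vanishing of denominators along the tower, and the two elementary facts about characteristic-$2$ Artin--Schreier quadratics (the $\mathrm{Tr}=1$ irreducibility test and the coefficient/conjugate relations) — is routine.
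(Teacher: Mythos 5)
The paper never proves this statement: it is imported from~\cite{gao-book} as a black box in the proof of Theorem~\ref{thm:irr-exp}, so there is no internal proof to compare against. For what it is worth, your strategy --- rewriting the recursion as the iteration $R_{k+1}=R_k/(R_k+1)^2$, equivalently $1/\alpha_{j-1}=\alpha_j+1/\alpha_j$ (the classical $Q$-transform), and climbing a tower of Artin--Schreier quadratics whose irreducibility is governed by absolute traces --- is exactly the standard route (Meyn/Gao), and the steps you carry out are correct: the degree count, $\gcd(A_k,B_k)=1$, the reduction of irreducibility of $q_j$ to $\mathrm{Tr}_{F_{j-1}/\F_2}(\alpha_{j-1})=1$, the identity $\mathrm{Tr}_{F_j/\F_2}(\alpha_j)=\mathrm{Tr}_{F_{j-1}/\F_2}(1/\alpha_{j-1})$, and the Galois-orbit collapse for $j\ge 2$.

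However, the obstacle you flag at $j=1$ is not merely "a step requiring care": the condition $\mathrm{Tr}(1/\beta)=1$ cannot be extracted from $\mathrm{Tr}(\beta)=1$, and the theorem as stated is in fact false for $k\ge 2$. Take $m=3$, $\F_8=\F_2(\gamma)$ with $\gamma^3=\gamma+1$, and $\beta=\gamma^3=\gamma+1$. Then $\mathrm{Tr}(\beta)=\gamma^3+\gamma^6+\gamma^{12}=1$ but $\mathrm{Tr}(1/\beta)=\mathrm{Tr}(\gamma^4)=\gamma^4+\gamma+\gamma^2=0$, and indeed $A_2(X)+\beta B_2(X)=\beta X^4+X^3+\beta X^2+X+\beta=\beta\,(X^2+\gamma^6X+\gamma^3)(X^2+\gamma^3X+\gamma^4)$ is reducible over $\F_8$ --- exactly as your own computation predicts, since $q_2$ is reducible over $F_1=\F_{64}$ precisely when $\mathrm{Tr}_{\F_8/\F_2}(1/\beta)=0$. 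The correct statement requires both $\mathrm{Tr}(\beta)\neq 0$ and $\mathrm{Tr}(\beta^{-1})\neq 0$; with that added hypothesis your argument closes completely ($j=0$ uses $\mathrm{Tr}(\beta)=1$, $j=1$ uses $\mathrm{Tr}(1/\beta)=1$, and $j\ge 2$ follows by induction via the conjugacy of $\alpha_{j-1}$ and $1/\alpha_{j-1}$). Your parenthetical hope is also unfounded: the $\beta=\alpha^i$ produced in the proof of Theorem~\ref{thm:irr-exp} is only guaranteed to satisfy $\mathrm{Tr}(\beta)\neq 0$, not $\mathrm{Tr}(\beta^{-1})\neq 0$, so the downstream construction needs the same repair (e.g., search over all $\beta$ in a basis together with their inverses, or use a $\beta$ with $\beta=\beta^{-1}$ when available).
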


To begin with, let us assume we can get our hands on a $\beta$ as 
required in Theorem~\ref{thm:irr-exp-even}. Given such a $\beta$, the rest
of the proof is simple. Pick $k$ to be the smallest integer such that $d'=2^k\ge d$. It is easy to check that $d\le d'\le 2d$ as required. To compute
the final irreducible polynomial, we will need to do $k$ iterations to
compute $A_i(X)$ and $B_i(X)$ (for $1\le i\le k$). It is easy to check that
each iteration requires $O(2^i\log{q})$ space (to store the intermediate
polynomials) and $O(2^{2i}\log^2{q})$ time (to compute the product of two polynomials
of degree at most $2^i$). To complete the proof of Theorem~\ref{thm:irr-exp},
we show how to efficiently compute
an appropriate $\beta$. 

We claim that $\beta$ can be
chosen to be $\alpha^i$ for some $0\le i\le m-1$, where
we use $\{1,\alpha,\dots,\alpha^{m-1}\}$ as the standard basis for 
$\F_{2^m}$, for some root $\alpha$ of the irreducible polynomial that
generates $\F_{2^m}$.
\footnote{
Note that if $m$ is odd, then just $\beta=1$ suffices.} To see why this is
true, assume for the sake of contradiction that $Tr(\alpha^i)=0$ for every
$0\le i\le m-1$. Then as every $\gamma\in\F_{2^m}$ can be written as a
linear combination of $1,\alpha,\dots,\alpha^{m-1}$, $Tr(\gamma)=0$ (this follows
from the well-known fact that $Tr(\gamma_1+\gamma_2)=Tr(\gamma_1)+Tr(\gamma_2)$).
This implies that $Tr(X)$ has $2^m$ roots, which is a contradiction as
$Tr(X)$ is a non-zero polynomial of degree $2^{m-1}$. Finally, the correct
choice of $\beta=\alpha^i$ can be determined by going through
all $0\le i\le m-1$ and evaluating $Tr(\alpha^i)$ (which can be done in 
$O(\log^3{q})$ time).


We now state our result.

\begin{theorem}
\label{thm:alg-ed}
Let $C\subseteq\F_q^n$ be a linear code of dimension $k$
and block length $n$
with parity check
matrix $H=\{h_{i,j}\}_{i=0~~~~,j=1}^{n-k-1,n}$. Further, assume that
any entry
$h_{i,j}$ can be computed in space $\mathcal{S}(n,q)$, for some
function $\mathcal{S}$.  
Given an $a\ge 1$,
there exists a 
$(1,O(\mathcal{S}(n,q)+a\log{n}))_{1,n^{-a}}$-\edds\ for $C$. 
\end{theorem}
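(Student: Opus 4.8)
The plan is to implement the "trivial" polynomial-time algorithm---compute $\vs = H\vy^T$ and check $\vs = \vzero$---in small space using fingerprinting over an extension field of $\F_q$. First I would pick a prime $p'$ (or rather, use Theorem~\ref{thm:irr-rand}) to obtain an irreducible polynomial of degree $d = \Theta(a\log_q n)$ over $\F_q$, giving an extension field $\F_Q$ with $Q = q^d \ge n^{a+1}$, say. The algorithm then picks a uniformly random $r \in \F_Q$, and as it streams through $\vy = (y_1,\dots,y_n)$, it maintains the running value of the polynomial $\sum_{i=0}^{n-k-1} s_i r^i$, where $s_i = \sum_{j=1}^n h_{i,j} y_j$ is the $i$-th syndrome coordinate. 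The point is that each incoming symbol $y_j$ contributes $y_j \cdot \sum_{i=0}^{n-k-1} h_{i,j} r^i$ to this sum; since $h_{i,j}$ is computable in space $\mathcal{S}(n,q)$ and the partial sum $\sum_i h_{i,j} r^i$ can be built up with $O(\log n)$ field elements of working space (Horner-style in $i$, for the fixed current $j$), the update for each $y_j$ uses only $O(\mathcal{S}(n,q) + a\log n)$ bits. At the end, accept iff the accumulated value is $0$.

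The correctness analysis is the routine Schwartz--Zippel / polynomial-identity-testing argument. If $\vy \in C$ then $\vs = \vzero$, so the evaluated polynomial is identically $0$ and the algorithm accepts with probability $1$ (completeness $c = 1$). If $\vy \notin C$ then $\vs \ne \vzero$, so $P(X) = \sum_{i=0}^{n-k-1} s_i X^i$ is a nonzero polynomial over $\F_Q$ of degree at most $n-k-1 < n$; hence $\Pr_{r}[P(r) = 0] \le (n-1)/Q \le n/n^{a+1} = n^{-a}$, giving soundness $d = n^{-a}$. The number of random bits is $\log Q = O(a\log n)$, which fits in the space budget.

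The main obstacle---really the only subtlety beyond bookkeeping---is making sure the field arithmetic itself fits in the claimed space, and in particular that we can name and work in $\F_Q$. This is exactly where the irreducible-polynomial machinery comes in: by Theorem~\ref{thm:irr-rand} a degree-$d$ irreducible polynomial over $\F_q$ can be sampled (with, say, failure probability $n^{-a}$ folded into the error) using $O(d\log q) = O(a\log n)$ bits of space, and once we have it, addition and multiplication in $\F_Q$ take $\poly(\log Q)$ time and $O(\log Q) = O(a\log n)$ space. I would remark that one must use an extension field of $\F_q$ specifically (not an arbitrary large prime field), since the syndrome coordinates $s_i$ live in $\F_q$ and we need $P$ to remain a genuine nonzero polynomial over the evaluation field---embedding $\F_q$ into $\F_Q$ accomplishes this, whereas reducing $\F_q$-elements modulo an unrelated prime would not be meaningful. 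A final minor point: the streaming update must compute, for the current index $j$, the quantity $\sum_{i} h_{i,j} r^i$ by a single pass over $i \in \{0,\dots,n-k-1\}$ reusing $O(1)$ field registers, so that we never store more than $O(\mathcal{S}(n,q) + a\log n)$ bits at any time; this is immediate from Horner's rule but worth stating explicitly.
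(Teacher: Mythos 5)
Your proposal is correct and follows essentially the same route as the paper: fingerprint the syndrome polynomial $S(X)=\sum_i s_i X^i$ at a random point $\beta$ in an extension field $\F_Q\supseteq\F_q$ with $Q\approx n^{a+1}$, swap the order of summation so the evaluation streams in $j$, compute $\sum_i h_{i,j}\beta^i$ on the fly for each $j$, and invoke Theorem~\ref{thm:irr-rand} to obtain the degree-$d$ irreducible polynomial defining $\F_Q$. The Schwartz--Zippel bound, the space accounting, and the observation that $\F_Q$ must extend $\F_q$ (rather than be an arbitrary large field) all match the paper's argument.
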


\subsection{Proof of Theorem~\ref{thm:alg-ed}}
Let $\vy=(y_1,\dots,y_n)\in\F_q^n$ be the received word and let 
$\vs=(s_0,\dots,s_{n-k-1})=H\vy^T\in\F_q^{n-k}$. It is easy to check
that for every $0\le i<n-k$,
\begin{equation}
\label{eq:s-vals}
s_i=\sum_{j=1}^n y_jh_{i,j}.
\end{equation}
Further define
\[S(X)=\sum_{i=0}^{n-k-1} s_iX^i.\]
Note that our task is to verify whether $S(X)$ is the all zeros polynomial.
Towards this end, we will use the fingerprinting technique.

Let $Q=q^d$ for some $d$ to be chosen later such that $n^{1+\a}\le Q\le qn^{1+\a}$. The
algorithm is simple: pick a random $\beta\in\F_Q$ and verify if
$S(\beta)=\sum_{i=0} s_i\beta^i=0$. By (\ref{eq:s-vals}), 
this is the same as checking if $\sum_{i=0}\left(\sum_{j=1}^n y_jh_{i,j}\right)\beta^i=0$. Note that this is possible as $\F_Q$ is an extension field of $\F_q$
and thus, all the terms
in the sum belong to $\F_Q$. Thus, by changing the order of sums, 
we need to verify if
\begin{equation}
\label{eq:ed-check}
\sum_{j=1}^n y_j\left(\sum_{i=0}^{n-k-1}\beta^ih_{i,j}\right)=0.
\end{equation}

It is easy to verify that the above sum can be computed in one pass as long
as the quantity $\sum_{i=0}^{n-k-1}\beta^ih_{i,j}$ can be computed efficiently
``on the fly." The latter is possible as we know $\beta$ and we can compute
any entry $h_{i,j}$ on the fly.
If $S(X)$ is the all zeros polynomial, then the check will
always pass. If on the other hand, $S(X)$ is a non-zero polynomial of degree
at most $n$, then $S(\beta)=0$ for at most $n$ values $\beta\in\F_Q$. Thus,
the probability of the check passing is at most $n/Q$ which by our choice of
$Q$ is at most $1/n^{\a}$.

To complete the proof we need to analyze the space requirement of the
algorithm above. First we note that by
Theorem~\ref{thm:irr-rand} we can compute an irreducible polynomial
over $\F_q$ of degree $d
=\left\lceil 2 \frac{\log{n}}{\log{q}}\right\rceil$.
Note that $n^{1+\a}\le Q\le qn^{1+\a}$ as claimed before. Also note that any operation in
$\F_Q$ can be carried out by storing $O(d)$ elements from $\F_q$. This implies that
the sum in (\ref{eq:ed-check}) can be computed with space $O(\mathcal{S}(n,q)+d)=O(\mathcal{S}(n,q)+\alpha\log{n})$.

An inspection of the proof above shows that the time complexity of
the \edds\ is dominated by the number of $\F_q$ operations needed to
compute $\sum_{i=0}^{n-k-1}\beta^j h_{i,j}$.

For expander codes, this time complexity is just a constant number of
$\F_Q$ operations (and hence $O(\log{n}/\log{q})$ operations in $\F_q$).
For RS codes, recall that we have $h_{i,j}=v_j\cdot\alpha_j^i,$ where
$v_j=\frac{1}{\prod_{1\le \ell\le n, \ell\neq j}(\alpha_j-\alpha)}$. 
If, say, for some fixed $\beta\in\F_q^*$, $v_j=\beta$ for every $1\le j\le n$,
then one can essentially ignore $v_j$ and one only needs to  
compute
$\sum_{i=0}^{n-k-1}\beta^j \alpha_j^i$, which
is
just $\frac{(\beta\alpha_j)^{n-k}-1}{\beta\alpha_j-1}$ unless
$\beta=0$ (in which case the sum is $0$) or $\beta=(\alpha_j)^{-1}$
(in which case the sum is just $(n-k)$ modulo the characteristic of
$\F_q$). The latter condition can be verified with $\poly{\log{n}/\log{q}}$
operations in $\F_q$.

In general RS, any $h_{i,j}$ can be computed with $\tilde{O}(n)$ operations in
$\F_q$. Thus, the sum can be computed in time $\tilde{O}(n^2)$.

Thus, we have argued that
\begin{corollary}
\label{cor:detect-time}
Let $q$ be a prime power and define $S=\{\alpha_1,\dots,\alpha_n\}$. Then there
exists an $(1,O(\log{n})_{1,1/2}$-\edds\ for $\rs_S$ that runs in time 
$\tilde{O}(n^2)$. Further, if there exists a $\beta\in\F_q^*$ such that for every
$1\le j\le n$, $\prod_{1\le \ell\le n, \ell\neq j}(\alpha_j-\alpha)=\beta$,
then the algorithm can be implemented in $\tilde{O}(n)$ time.
\end{corollary}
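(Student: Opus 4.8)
The plan is to read the corollary off from Theorem~\ref{thm:alg-ed} applied to $C=\rs_S$, whose parity-check matrix $H_{\rs_S}=\{h_{i,j}\}$ has entries $h_{i,j}=v_j\alpha_j^i$ with $v_j=1/\prod_{\ell\neq j}(\alpha_j-\alpha_\ell)$. First I would check that a single entry $h_{i,j}$ is computable in space $\mathcal{S}(n,q)=O(\log q)=O(\log n)$: the power $\alpha_j^i$ is obtained by repeated squaring in a constant number of $\F_q$-registers, and $v_j$ is obtained by accumulating the product $\prod_{\ell\neq j}(\alpha_j-\alpha_\ell)$ in a single running $\F_q$-register and inverting at the end. (Here I use, as in the definitional conventions, that $q\le n^{O(1)}$, so $\log q=O(\log n)$.) Instantiating Theorem~\ref{thm:alg-ed} with $\mathcal{S}(n,q)=O(\log n)$ and $a=1$ then gives a $(1,O(\log n))_{1,1/n}$-\edds\ for $\rs_S$, which for $n\ge 2$ is in particular a $(1,O(\log n))_{1,1/2}$-\edds.

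For the running-time claim I would use the observation made right after the proof of Theorem~\ref{thm:alg-ed}: the time is dominated by evaluating, for each $j\in[n]$, the quantity $T_j=\sum_{i=0}^{n-k-1}\theta^i h_{i,j}$ at the random point $\theta\in\F_Q$; the one-time cost of sampling the degree-$d$ irreducible polynomial with $d=O(\log n/\log q)$ is only $\poly{\log n}$ by Theorem~\ref{thm:irr-rand} and is absorbed. Writing $T_j=v_j\cdot\sum_{i=0}^{n-k-1}(\theta\alpha_j)^i$, the geometric sum equals $\frac{(\theta\alpha_j)^{n-k}-1}{\theta\alpha_j-1}$ when $\theta\alpha_j\neq 1$ and equals $n-k$ modulo the characteristic when $\theta\alpha_j=1$, so it costs $\poly{\log n}$ time; the only expensive ingredient is $v_j$, which takes $\tilde{O}(n)$ $\F_q$-operations. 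Hence each $T_j$ costs $\tilde{O}(n)$ time, and summing over the $n$ positions and combining with the streaming evaluation of~(\ref{eq:ed-check}) gives total time $\tilde{O}(n^2)$.

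For the structured case, suppose $\prod_{\ell\neq j}(\alpha_j-\alpha_\ell)=\beta$ for a fixed $\beta\in\F_q^*$ and all $j$. Then $v_j=\beta^{-1}$ does not depend on $j$: one computes $\beta$ once in $\tilde{O}(n)$ time and afterwards each $T_j=\beta^{-1}\cdot\frac{(\theta\alpha_j)^{n-k}-1}{\theta\alpha_j-1}$ (with the same $\theta\alpha_j=1$ exception) is evaluated in $\poly{\log n}$ time. The per-position work is then only $\poly{\log n}$, so the overall running time drops to $\tilde{O}(n)$.

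I do not expect a real obstacle: the corollary is a specialization of Theorem~\ref{thm:alg-ed} plus elementary $\F_q$-arithmetic. The two points that need care are (i) charging the $\tilde{O}(n)$ cost of each multiplier $v_j$ correctly in the general case, which is exactly what forces the $\tilde{O}(n^2)$ bound, and (ii) handling the degenerate argument $\theta\alpha_j=1$ of the geometric-series formula so that, in the structured case, the per-position work is genuinely $\poly{\log n}$.
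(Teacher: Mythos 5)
Your proof is correct and follows essentially the same route as the paper: instantiate Theorem~\ref{thm:alg-ed} with $\mathcal{S}(n,q)=O(\log n)$, then observe that the time is dominated by the per-column quantity $\sum_i \theta^i h_{i,j}=v_j\sum_i(\theta\alpha_j)^i$, which collapses to a $\poly{\log n}$-time geometric sum plus the $\tilde{O}(n)$ cost of $v_j$, dropping to $\poly{\log n}$ per column in the structured case. The only (harmless) deviation is that you compute the constant $\beta$ once, while the paper notes one can simply ignore the common multiplier $v_j$ when checking whether the sum vanishes; your separate notation for the random evaluation point and your explicit treatment of the degenerate $\theta\alpha_j=1$ case are in fact slightly more careful than the paper's exposition.
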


It is easy to check that $\mathcal{S}(q,n)$ is $O(\log{n})$ for
(strongly explicit) expander codes and RS (and GRS) codes. This
implies the following:
\begin{corollary}
Let $n\ge 1$. Then for $q=2$ and $n\le q\le \poly{n}$, there exists
an asymptotically good code $C\subseteq\F_q^n$ that has
a $(1,O(\log{n}))_{1,1/2}$-\edds.
\end{corollary}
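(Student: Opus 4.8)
The plan is to derive the corollary as a direct instantiation of Theorem~\ref{thm:alg-ed}, once we exhibit, for each relevant value of $q$, an asymptotically good linear code over $\F_q$ whose parity-check matrix is strongly explicit. For $q=2$ I would take the binary expander code family: starting from a strongly explicit family of constant-degree bipartite expanders whose expansion is good enough that the induced code has constant rate and constant relative distance (such families are known), the parity-check matrix is just the bipartite incidence matrix, and each of its entries can be computed in space $\mathcal{S}(2,n)=O(\log n)$ by invoking the neighbor function of the expander. For $n\le q\le\poly{n}$ I would instead use a Reed-Solomon code $\rs_S[k]$ of dimension $k=\lceil n/2\rceil$; it has rate $\Theta(1)$ and relative distance $(n-k+1)/n>1/2$, hence is asymptotically good, and its parity-check matrix $H_{\rs_S}$ is strongly explicit with $\mathcal{S}(q,n)=O(\log n)$, since $q\le\poly{n}$ makes every field operation (in particular the product of the multiplier $v_j$ with a power of $\alpha_j$, the latter computed by repeated squaring) doable in $O(\log n)$ space.

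With such a code $C$ fixed, I would apply Theorem~\ref{thm:alg-ed} with $a=1$, which yields a $\bigl(1,\,O(\mathcal{S}(n,q)+\log n)\bigr)_{1,\,n^{-1}}$-\edds\ for $C$. Since $\mathcal{S}(n,q)=O(\log n)$ in both cases above, the space used is $O(\log n)$. Finally, because $n^{-1}\le 1/2$ for every $n\ge 2$, any $(1,O(\log n))_{1,n^{-1}}$-\edds\ is in particular a $(1,O(\log n))_{1,1/2}$-\edds, which is exactly the claimed object.

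I do not expect a substantive obstacle here: the only points needing care are (i) citing an explicit expander construction whose parameters simultaneously guarantee asymptotic goodness of the code and strong explicitness of the incidence matrix, and (ii) confirming that $\mathcal{S}(q,n)=O(\log n)$ for RS (and GRS) codes over fields of polynomial size — both of which were already noted in the discussion preceding the statement. The soundness upgrade from $n^{-1}$ to $1/2$ is immediate.
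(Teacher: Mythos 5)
Your proposal is correct and matches the paper's (extremely terse) derivation: the paper simply observes that $\mathcal{S}(n,q)=O(\log n)$ for strongly explicit expander codes (giving $q=2$) and for RS/GRS codes (giving $n\le q\le\poly{n}$), then plugs into Theorem~\ref{thm:alg-ed}; your instantiations of the two code families, the space analysis, and the soundness upgrade from $n^{-1}$ to $1/2$ are exactly the details the paper left implicit.
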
 

\vspace*{2mm}
\noindent
\textbf{A negative result.}
We show that randomness is necessary even for local testing. In 
particular, we show the following:

\begin{theorem}
\label{thm:det-lt}
Let $C\subseteq [q]^{n}$ be a code of rate $R$ and relative distance $\delta$
and let $0\le\eps\le \delta^2/8$ be a real number. Then 
any $(r,s,\eps)_{1,0}$-\ltds\ for $C$ needs to satisfy
$r\cdot s\ge \frac{\delta Rn}{6}$.
\end{theorem}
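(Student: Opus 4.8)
I would prove this by a standard two‑party communication‑complexity reduction together with a fooling‑set argument, the whole difficulty being concentrated in the construction of a large fooling set out of codewords of $C$. Since the tester has completeness $c=1$ and soundness $d=0$, fixing its internal randomness leaves a \emph{deterministic} $(r,s,\eps)_{1,0}$-\ltds\ that accepts every $\vy\in C$ and rejects every $\vy$ with $\Delta(\vy,C)\ge\eps n$.

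\textbf{Step 1: streaming to communication.} Fix any cut point $p\in[n]$, give Alice the prefix $\vy_{[p]}$ and Bob the suffix $\vy_{\{p+1,\dots,n\}}$, and have them simulate the deterministic algorithm pass by pass: in each pass Alice runs it on positions $1,\dots,p$ and sends Bob the resulting $s$-bit memory state, Bob finishes the pass and, if passes remain, sends the $s$-bit state back. This is a deterministic protocol of cost at most $(2r-1)s$, hence it produces at most $2^{(2r-1)s}$ distinct transcripts; so if for \emph{some} cut $p$ we exhibit a fooling set of size $M$ for the promise problem ``$\vy\in C$ vs.\ $\Delta(\vy,C)\ge\eps n$'', then $M\le 2^{(2r-1)s}$, and since $2rs\ge(2r-1)s\ge\log_2 M$ we get $rs\ge\tfrac12\log_2 M$. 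It therefore suffices to produce a cut $p$ and a fooling set of size at least $q^{\delta k/3}$, because then $\log_2 M\ge\tfrac{\delta k}{3}\log_2 q\ge\tfrac{\delta Rn}{3}$, giving $rs\ge\delta Rn/6$.

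\textbf{Step 2: the fooling set.} For $\va,\vb\in C$ let $\va\!\bowtie_p\!\vb$ be the hybrid word agreeing with $\va$ on $[p]$ and with $\vb$ on $\{p+1,\dots,n\}$, and for $S\subseteq[n]$ write $\Delta_S(\vx,\vy)=|\{i\in S:x_i\neq y_i\}|$. For every $\vc\in C$,
\[\Delta(\va\!\bowtie_p\!\vb,\ \vc)=\Delta_{[p]}(\va,\vc)+\Delta_{\{p+1,\dots,n\}}(\vb,\vc);\]
with $\vc=\va$ this is $\Delta_{\{p+1,\dots,n\}}(\va,\vb)$, with $\vc=\vb$ it is $\Delta_{[p]}(\va,\vb)$, and for $\vc\neq\va,\vb$ the triangle inequality on each half plus $\Delta(\vc,\va),\Delta(\vc,\vb)\ge\delta n$ give $\Delta(\va\!\bowtie_p\!\vb,\vc)\ge\delta n-\min\{\Delta_{[p]}(\va,\vb),\Delta_{\{p+1,\dots,n\}}(\va,\vb)\}$. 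Hence if a family $\mathcal F\subseteq C$ and a cut $p$ satisfy $\eps n\le\Delta_{[p]}(\va,\vb)\le(\delta-\eps)n$ for all distinct $\va,\vb\in\mathcal F$, then, using $\Delta(\va,\vb)\ge\delta n$ (which also forces $\Delta_{\{p+1,\dots,n\}}(\va,\vb)\ge\eps n$), every hybrid $\va\!\bowtie_p\!\vb$ with $\va\neq\vb$ has $\Delta(\cdot,C)\ge\eps n$ and so is a NO-instance, while every $\va\in\mathcal F$ is a YES-instance; thus $\mathcal F$ is a fooling set and $rs\ge\tfrac12\log_2|\mathcal F|$. So the whole problem reduces to finding a cut $p$ and such an $\mathcal F$ of size $q^{\delta k/3}$.

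\textbf{Step 3: building $\mathcal F$ (the main obstacle).} The plan is to locate a window of $(\delta-\eps)n$ consecutive coordinates on which many codewords differ by an amount that is automatically at most $(\delta-\eps)n$ and can be forced to be at least $\eps n$. Concretely: tile $[n]$ into about $1/(\delta-\eps)$ consecutive windows of length $(\delta-\eps)n$; since $|C|\le\prod_i|C_{W_i}|$, some window $W$ has at least $q^{\Omega(\delta k)}$ distinct projections. One then restricts to a sub‑family that agrees on everything preceding $W$ (by repeatedly keeping the largest class of codewords sharing the next symbol), so that at the cut $p:=\max W$ prefix distances are exactly the $W$-distances, automatically $\le(\delta-\eps)n$; and one applies a greedy Gilbert--Varshamov selection inside $W$ to pick a subfamily whose $W$-distances are $\ge\eps n$. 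The hypothesis $\eps\le\delta^2/8$ is tuned so that $\mathrm{Vol}_q((\delta-\eps)n,\eps n)$ is small enough for this packing step to retain $q^{\Omega(\delta k)}$ codewords, and the constants $1/3$ and $1/6$ in the statement absorb these losses. The delicate point — and, I expect, where the real work lies — is making this composition succeed \emph{for every} rate/distance regime and every coordinate order: when the code has low rate (so $q^{\delta k/3}$ is tiny) or a degenerate projection on a long prefix, the Gilbert--Varshamov step and the tiling step lose too much, and one needs an amortized/recursive version of the peeling argument (iterating on the code obtained after stripping a common segment, whose block length and rate‑versus‑distance ratio must be controlled) to guarantee a fooling set of size $q^{\Omega(\delta k)}$. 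Steps 1 and 2 are routine; essentially all of the effort is this combinatorial construction.
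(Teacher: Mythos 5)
Your high-level strategy matches the paper's exactly: fix the randomness (the tester is one-sided with completeness~$1$ and soundness~$0$), reduce a multi-pass streaming algorithm to a two-party protocol at a suitable cut, and lower-bound the deterministic communication complexity of the ``in $C$ vs.\ $\Delta(\cdot,C)\ge\eps n$'' promise problem by exhibiting a large fooling set inside $C$. Your Step~2 sufficient condition is even a little cleaner than the paper's: by arranging $\eps n\le \Delta_{[p]}(\va,\vb)\le (\delta-\eps)n$ for all $\va\neq\vb\in\mathcal F$, your triangle-inequality calculation controls the distance of each hybrid from \emph{every} $\vc\in C$ in one stroke, whereas the paper separately imposes a ``separation from $C\setminus F$'' condition and splits the verification by cases. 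So Steps~1 and~2 are correct and consistent with the paper.

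The genuine gap is Step~3, and you essentially flag it yourself. The windowing-plus-prefix-fixing sketch fails in exactly the regime you name: fixing a long common prefix can cost up to $q^{(1-(\delta-\eps))n}$ codewords, which swamps $|C|=q^{Rn}$ for low-rate codes, and the Gilbert--Varshamov pruning inside a window of length $(\delta-\eps)n$ has no a priori distance guarantee to work with. Saying that ``one needs an amortized/recursive version of the peeling argument'' correctly identifies what is missing but does not supply it. The paper's Lemma~\ref{lem:large-fool} is precisely that recursion: with $\a=\delta/4$, $\b=\delta/6$, it runs a greedy process that at step $i$ either (i) finds $\ge q^{\b k}$ clusters whose members are close on the last $(1-\a)d_i$ coordinates --- in which case picking one representative per cluster terminates the construction --- or (ii) restricts to the largest cluster and recurses on the shorter block $m_{i+1}=m_i-(1-\a)d_i$ with the weaker distance guarantee $d_{i+1}\ge(1-\a)d_i$, while also tracking that every $F_{i+1}$ stays $\a d/2$-separated from $F_i\setminus F_{i+1}$. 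The real technical work is showing the recursion depth is at most $1/(2\a)$ and that the distances $d_i$ never drop below $d/2$, which is what makes the size count $|F|\ge q^{k}/q^{r\b k}\ge q^{\b k}$ go through. Without this bookkeeping (or some substitute), your proof does not close, so as written it proves the two routine reductions but not the theorem.

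One smaller point worth fixing: the paper takes $F$ of size $q^{\delta Rn/6}$ and uses $rs\ge \cc(g)$ directly, while your (more careful) $(2r-1)s$ accounting forces you to aim for a fooling set of size $q^{\delta Rn/3}$. Your own peeling construction therefore has to be quantitatively stronger than the paper's by a factor of $2$ in the exponent, which is not automatic; either redo the pass-to-communication reduction to recover the missing factor or explicitly rebalance the constants in the construction.
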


\subsection{Proof of Theorem~\ref{thm:det-lt}}

We will be using communication complexity to prove Theorem~\ref{thm:det-lt}.

The proof uses he standard fooling set technique, however, unlike the
usual application of two-party communication complexity in
data stream algorithms, where the stream is broken up into two fixed
portions, in our case we need to be careful about how we divide up the input.
To see the necessity of this, consider the code
$C\times C\subseteq \Sigma^{2n}$ and say we break the received word
$\vy$ in the middle and assign the first half (call it $\vy_1$) to
Alice and the second half to Bob. In this case there is a simple
$O(\log{n})$ protocol-- Alice simply sends the distance of $\vy_1$ to
the closest codeword in $C$ to Bob-- to compute the distance of $\vy$
to the closest codeword in $C\times C$ \textit{exactly}. However, we
can show that for every asymptotically good code, there is some way of
breaking up the input into two parts such that there exists an
exponentially sized fooling set.

To further explain, we do a quick recap of some of the basic
concepts in communication complexity and refer the reader to source
material for more details~\cite{cc-book}.

Let $g:\{0,1\}^{n_1}\times\{0,1\}^{n_2}\rightarrow \{0,1\}$ be a function.
Further assume Alice has $x\in\{0,1\}^{n_1}$ and Bob has $y\in\{0,1\}^{n_2}$.
The (deterministic) communication complexity of $g$, denoted by $\cc(g)$, 
is the minimum
number of bits that Alice and Bob must exchange in order to determine
$g(x,y)$ in the worst case. The following observation is a standard technique
to obtain lower bounds for data stream algorithms:

\begin{prop}
Let $\mathcal{A}$ be an $r$-pass, $s$-space deterministic
data stream algorithm that decides $g$. Then $r\cdot s\ge \cc(g)$.
\end{prop}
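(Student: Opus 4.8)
The plan is the standard streaming-to-communication simulation. Fix the partition of the stream so that Alice holds the first $n_1$ symbols $x$ and Bob the last $n_2$ symbols $y$, and regard the stream as the concatenation $x\circ y$. Given an $r$-pass, $s$-space deterministic algorithm $\mathcal{A}$ that decides $g$, the two players jointly run $\mathcal{A}$ on $x\circ y$, communicating only the contents of $\mathcal{A}$'s memory at the moments the read head crosses the Alice/Bob boundary.

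Concretely, in pass $1$ Alice starts $\mathcal{A}$ from its (input-independent) initial configuration, feeds it $x$, and sends Bob the resulting memory configuration, which is at most $s$ bits. Bob resumes $\mathcal{A}$ from that configuration, feeds it $y$, and now holds $\mathcal{A}$'s configuration at the end of pass $1$; if $r>1$ he ships this configuration back to Alice, who uses it to begin pass $2$, and so on, alternating for all $r$ passes. After the last pass one player holds $\mathcal{A}$'s output, which equals $g(x,y)$ since the simulation reproduces verbatim the run of $\mathcal{A}$ on $x\circ y$; one extra bit makes it common knowledge. This is a legitimate deterministic protocol precisely because $\mathcal{A}$ is deterministic: each memory configuration, hence each message, is a function only of a player's own input and the messages received so far.

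Finally, count the communication. The head crosses the boundary once in the middle of each of the $r$ passes (an Alice$\to$Bob message) and once at each of the $r-1$ transitions between consecutive passes (a Bob$\to$Alice message), so the transcript has at most $2r-1$ messages of $s$ bits plus one final bit, i.e.\ at most $2rs$ bits; hence $\cc(g)\le 2rs$, and therefore $r\cdot s=\Omega(\cc(g))$. (This is the content of the claim; the factor of two is harmless, being absorbed into the constant in Theorem~\ref{thm:det-lt}, whose statement carries enough slack.) I do not expect any genuine obstacle: the statement is a routine reduction, and the only things to watch are the bookkeeping across multiple passes---the memory state must be shipped \emph{back} to Alice between passes, since the head re-enters her block---and the observation that determinism of $\mathcal{A}$ is exactly what makes the simulation a well-defined communication protocol.
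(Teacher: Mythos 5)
The paper states this proposition without proof, describing it as a standard fact, so there is nothing in the text to compare against directly; what you wrote is the standard streaming-to-communication simulation, and it is correct. The one point worth pinning down is the one you already flagged: under the usual unidirectional-pass model your simulation sends the $s$-bit state across the boundary $2r-1$ times (once forward in each pass and once backward between consecutive passes), giving $\cc(g)\le (2r-1)s+1$, i.e.\ roughly $2rs$, whereas the proposition claims the tighter $\cc(g)\le rs$. You are right that this is harmless for the asymptotic conclusion, but strictly speaking the explicit constant in Theorem~\ref{thm:det-lt} would become $\delta Rn/12$ rather than $\delta Rn/6$ if one carries your factor of two through. If one really wants $rs$ on the nose, one would adopt a bidirectional ``zig-zag'' pass convention (pass $i+1$ begins where pass $i$ ended, scanning in the opposite direction), under which each pass contributes exactly one boundary crossing; under the left-to-right convention your implicit $2rs$ bound is the correct one and the proposition's $rs$ should be read as a constant-factor statement. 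Everything else---the fixed partition, feeding the configuration across the boundary, determinism being what makes each message a well-defined function of local input plus history---is exactly right.
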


Next we consider the following technique for lower bounding the communication
complexity of a function. A subset $F\subseteq \{0,1\}^{n_1}\times \{0,1\}^{n_2}$ is
called a {\em fooling set} for $g$ if (i) For every $(x,y)\in F$, $g(x,y)=b$
for some fixed $b\in \{0,1\}$ and (ii) For every $(x_1,y_1)\neq (x_2,y_2)\in F$,
either $g(x_1,y_2)=1-b$ or $g(x_2,y_1)=1-b$. The following result is
well-known:
\begin{prop}[cf. \cite{cc-book}]
\label{prop:fool}
Let $g:\{0,1\}^{n_1}\times\{0,1\}^{n_2}\rightarrow \{0,1\}$ and $F$ be a 
fooling set for $g$. Then $\cc(g)\ge \log\left(|F|\right)$.
\end{prop}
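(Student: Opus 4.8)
The plan is to use the standard rectangle structure of deterministic communication protocols. First I would recall the key structural fact: any deterministic protocol $\Pi$ for $g$ with worst-case communication cost $c$ gives rise to a protocol tree with at most $2^c$ leaves, and the set of inputs $(x,y)$ that drive the protocol to a fixed leaf $\ell$ forms a combinatorial rectangle $R_\ell = A_\ell \times B_\ell$ with $A_\ell\subseteq\{0,1\}^{n_1}$ and $B_\ell\subseteq\{0,1\}^{n_2}$. This is proved by induction on the depth of the node: at the root the set is the full rectangle $\{0,1\}^{n_1}\times\{0,1\}^{n_2}$, and when a node is split according to a bit sent by Alice (resp. Bob), the current rectangle $A\times B$ is partitioned into $(A_0\times B)$ and $(A_1\times B)$ (resp. $A\times B_0$ and $A\times B_1$), which are again rectangles. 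Since $\Pi$ computes $g$ correctly, $g$ is constant on each $R_\ell$; call that value the \emph{color} of $\ell$.

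Next I would show that the $|F|$ elements of the fooling set reach pairwise distinct leaves. Suppose toward a contradiction that $(x_1,y_1)\neq(x_2,y_2)$ both lie in $F$ and both reach the same leaf $\ell$, i.e., both lie in $R_\ell=A_\ell\times B_\ell$. Then $x_1,x_2\in A_\ell$ and $y_1,y_2\in B_\ell$, so the ``cross'' pairs $(x_1,y_2)$ and $(x_2,y_1)$ also lie in $R_\ell$. By condition (i) of the definition of a fooling set, $g(x_1,y_1)=g(x_2,y_2)=b$, so the color of $\ell$ is $b$, whence $g(x_1,y_2)=g(x_2,y_1)=b$. But condition (ii) of a fooling set asserts $g(x_1,y_2)=1-b$ or $g(x_2,y_1)=1-b$, a contradiction.

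Finally, since the map sending each $(x,y)\in F$ to the leaf it reaches is injective, the number of leaves is at least $|F|$; combining this with the bound of $2^c$ on the number of leaves gives $2^c\ge|F|$, i.e., $c\ge\log_2|F|$, which (with $\log$ taken base $2$, as is standard when measuring communication in bits) is exactly $\cc(g)\ge\log(|F|)$.

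The only step needing genuine care — and hence the main obstacle — is the structural lemma that the preimage of a leaf is a rectangle; everything else is a short combinatorial deduction. One should also be mildly careful that total communication is counted in bits so the protocol tree is binary with at most $2^c$ leaves; if the model permits variable-length messages, one takes the worst-case bit count as the depth, and a standard prefix-free/padding argument preserves the $2^c$ leaf bound.
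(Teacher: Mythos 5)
Your proof is correct and is exactly the standard rectangle-based argument for the fooling set bound; the paper itself does not reprove this proposition but simply cites it from \cite{cc-book}, and your argument matches the textbook proof it refers to.
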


Finally, we will consider boolean functions with one input
and we define their communication complexity as follows: Let
$f:\{0,1\}^n\rightarrow\{0,1\}$. Further, for any $0\le n_1,n_2\le n$ such that
$n_1+n_2=n$, define $f_{n_1,n_2}:\{0,1\}^{n_1}\times \{0,1\}^{n_2}\rightarrow
\{0,1\}$ by naturally ``dividing" up the $n$-bit input for $f$ into the two required
inputs for $f_{n_1,n_2}$. 
The communication complexity of $f$ is then
defined as follows:
\[\cc(f)=\max_{\substack{0\le n_1,n_2\le n,\\ n_1+n_2=n}} \cc\left(f_{n_1,n_2}\right).\]

We are now ready to prove Theorem~\ref{thm:det-lt}. We will
do so by proving that the deterministic communication complexity of the 
following function is large. Define $f_C:\{0,1\}^n\rightarrow
\{0,1\}$ such that $f_C(\vy)=1$ if $\vy\in C$; $f_C(\vy)=0$ if $\Delta(\vy,C)\ge \delta^2n/8$;
otherwise $f_C(\vy)$ can take an arbitrary value. We will show that:
\begin{lemma}
\label{lem:large-fool}
$f_C$ has a fooling set of size at least $q^{\delta Rn/6}$.
\end{lemma}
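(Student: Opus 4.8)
The plan is to construct an explicit fooling set for $f_C$ of the required size by choosing a good way to split the $n$ coordinates into Alice's part $A$ and Bob's part $B$, and then indexing the fooling set by messages of the code. Since $C$ has dimension $Rn$ (over $\F_q$, so $|C| = q^{Rn}$), we have roughly $q^{Rn}$ codewords to play with; the factor-$6$ loss in the exponent will come from the fact that a random balanced split of the coordinates is ``distance-preserving up to constants'' for most pairs of codewords, and we will prune down to a large sub-collection on which it behaves well for every pair.

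First I would fix a balanced partition $[n] = A \sqcup B$ with $|A| = |B| = n/2$ (chosen to exist with the properties below by a probabilistic/averaging argument). The key structural fact I want is: there is a set $\mathcal{C}' \subseteq C$ of codewords with $|\mathcal{C}'| \ge q^{\delta R n / 6}$ such that for every pair $\vc \ne \vc' \in \mathcal{C}'$, the Hamming distance $\Delta(\vc,\vc')$ is split roughly evenly between $A$ and $B$ — say $\Delta_A(\vc,\vc') \ge \delta n / 4$ and $\Delta_B(\vc,\vc') \ge \delta n / 4$. To get this: for a uniformly random subset $A$ of size $n/2$, and any fixed pair of distinct codewords whose distance is at least $\delta n$, a Chernoff/hypergeometric tail bound shows the distance restricted to $A$ deviates from $\delta n / 2$ by less than $\delta n / 4$ except with probability $\exp(-\Omega(\delta n))$. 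One cannot union-bound over all $q^{2Rn}$ pairs directly, but one can instead argue that there exists a single $A$ for which a $q^{-\Theta(Rn)}$-fraction of pairs are ``bad,'' then greedily/probabilistically extract a subcode $\mathcal{C}'$ of size $q^{\delta R n / 6}$ on which no pair is bad — this is a standard dependent-rounding / Turán-type extraction, and tuning the constant $6$ absorbs the loss.

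Given such $\mathcal{C}'$ and $A$, define the fooling set $F = \{(\vc_A, \vc_B) : \vc \in \mathcal{C}'\}$, viewing each codeword $\vc$ as the pair (Alice's half, Bob's half). Condition (i) is immediate: for $(\vc_A,\vc_B) \in F$, the combined input is the codeword $\vc \in C$, so $f_C = 1 =: b$. For condition (ii), take distinct $\vc, \vc' \in \mathcal{C}'$ and consider the ``mixed'' input $(\vc_A, \vc'_B)$. I claim $\Delta$(this input$, C) \ge \delta^2 n / 8$, so $f_C$ on it is forced to $0 = 1-b$. Indeed, for any codeword $\vd \in C$: on $A$, the mixed input agrees with $\vc$, so $\Delta_A(\text{mixed}, \vd) = \Delta_A(\vc,\vd)$; on $B$ it agrees with $\vc'$, so $\Delta_B(\text{mixed},\vd) = \Delta_B(\vc',\vd)$. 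Thus $\Delta(\text{mixed},\vd) = \Delta_A(\vc,\vd) + \Delta_B(\vc',\vd) \ge \tfrac12(\Delta_A(\vc,\vd)+\Delta_A(\vc',\vd)) + \tfrac12(\Delta_B(\vc,\vd)+\Delta_B(\vc',\vd)) - \tfrac12|\Delta_A(\vc,\vd)-\Delta_A(\vc',\vd)| - \cdots$; more cleanly, $\Delta_A(\vc,\vd) + \Delta_B(\vc',\vd) \ge \max(\Delta_A(\vc,\vd),\Delta_B(\vc,\vd)$-ish$)$, and using $\Delta_A(\vc,\vc'),\Delta_B(\vc,\vc') \ge \delta n/4$ together with the triangle inequality, whichever of $\vc,\vc'$ is farther from $\vd$ on the corresponding side forces $\Delta(\text{mixed},\vd) \ge \delta n / 8 \ge \delta^2 n/8$. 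Since this holds for all $\vd \in C$, we get $f_C(\vc_A,\vc'_B) = 0$, verifying (ii). Then Proposition~\ref{prop:fool} gives $\cc(f_C) \ge \log|F| \ge (\delta R n / 6)\log q \ge \delta R n/6$ in bits (and the statement is $q^{\delta R n/6}$ for the fooling-set size, which is what the lemma asks).

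The main obstacle I expect is the extraction step: getting a partition $A$ that is simultaneously good for \emph{all} pairs in a large subcode. A clean union bound over all pairs fails because $\exp(-\Omega(\delta n))$ does not beat $q^{2Rn}$ when $q$ is large (e.g. $q = \poly(n)$). The fix is to not demand a partition good for all of $C$, but to first find one that is good ``on average'' and then pass to a large subset; alternatively, one can restrict to a random \emph{linear} subcode of dimension $\delta R n / 6$ and show the bad-pair count inside it is small in expectation, then delete one codeword per bad pair. Making the arithmetic of this extraction yield exactly the constant $1/6$ (rather than some other small constant) is where the routine-but-fiddly calculation lives; I would set it up so that any sufficiently small constant suffices and simply name it $1/6$ to match the statement.
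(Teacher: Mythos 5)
Your approach is genuinely different from the paper's: you propose a random balanced partition $[n]=A\sqcup B$ followed by a probabilistic pruning/extraction step, whereas the paper builds the fooling set $F$ and the split point $n_1$ together by a deterministic, iterative clustering process that peels off a suffix of length $(1-\a)d_i$ at each stage, keeps the largest cluster, and terminates either in a base case or when the number of clusters is already $\ge q^{\b k}$. Unfortunately, the random-partition route has a quantitative gap that is not just ``routine-but-fiddly'' constant-chasing. The tail bound you invoke is $\exp(-\Theta(\delta^2 n))$ (or at best $\exp(-\Theta(\delta n))$), with no $\log q$ in the exponent, while the target fooling-set size is $q^{\delta R n/6}=\exp\bigl((\delta R n/6)\ln q\bigr)$. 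Any Tur\'an-type extraction (or passing to a random linear subcode and comparing expected bad-codeword count to subcode size) caps the extracted subcode at roughly $\min\bigl(q^{Rn},\exp(\Theta(\delta^2 n))\bigr)$, which is $q^{\delta Rn/6}$ only when $\ln q \le O(\delta/R)$, i.e., for a bounded alphabet. For $q=\mathrm{poly}(n)$ — the regime of central interest here (Reed--Solomon) — your bound falls short by a $\log q$ factor in the exponent, and renaming constants cannot fix that.

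There is also a second, independent gap in your verification of the fooling-set condition: you check that the mixed word $(\vc_A,\vc'_B)$ is far from every $\vd\in\mathcal{C}'$, but $f_C$ is forced to $0$ only if the mixed word is at distance $\ge\delta^2 n/8$ from \emph{every} codeword in $C$, including $\vd\in C\setminus\mathcal{C}'$. The distance-balancing property you impose is internal to $\mathcal{C}'$ and gives no control over an arbitrary $\vd\notin\mathcal{C}'$ whose restriction to $A$ is close to $\vc_A$ and whose restriction to $B$ is close to $\vc'_B$; triangle inequality alone does not rule this out. The paper handles this by additionally requiring that every $\vc\in F$ is at distance $\ge\a d/2$ from every $\vc'\in C\setminus F$, and its iterative cluster-and-discard construction is precisely what maintains that side condition. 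To salvage your argument you would need a comparable guarantee against all of $C\setminus\mathcal{C}'$ — not merely against the pairs inside the extracted set.
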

Note that as we are measuring space in terms of the number of elements from
$[q]$, Lemma~\ref{lem:large-fool} and Proposition~\ref{prop:fool} imply Theorem
\ref{thm:det-lt}.

In the rest of the section, we prove Lemma~\ref{lem:large-fool}. For notational
convenience, define $k=Rn$, $d=\delta n$, $\a=\frac{\delta}{4}$ and
$\beta=\frac{\delta}{6}$. Thus we need to exhibit a fooling set of size at least
$q^{\beta k}$.

Our fooling set will be a subset $F\subseteq C$ with an
 $0<n_1<n$ such that $F_{[n_1]}$  and $F_{[n]\setminus [n_1]}$
have distance at least $\a d/2$. Further for any $\vc\in T$ and $\vc'\in C\setminus F$,
$\Delta(\vc,\vc')\ge \a d/2$. We claim that such an $F$ is indeed a fooling set.
To see this consider $\vc^1\neq\vc^2\in F$, where $\vc^1=(c^1_1,c^1_2)$,
$\vc^2=(c^2_1,c^2_2)$, $c^i_1\in F_{[n_1]}$ and $c^i_2\in F_{[n]\setminus [n_1]}$.
By definition, $f_C(\vc^1)=f_C(\vc^2)=1$. Next we show that either $f_C(\vy_1)=0$
or $f_C(\vy_2)=0$,
where $\vy_1=(c^1_1,c^2_2)$ and $\vy_2=(c^2_1,c^1_2)$.
For any $\vc\in C\setminus F$, this is true by definition of $F$. For
any $\vc\in F$, this is true by the distance properties of $F_{[n_1]}$
and $F_{[n]\setminus [n_1]}$.

We will construct the fooling set $F$ in a greedy fashion. We begin with the
case when $n<2(1-\a)d$. We claim that in this case 
$F=C$ and $n_1=\lfloor n/2\rfloor$
works. Note that both $n_1$ and $n-n_1$ are both at most $(1-\a)d$. Since
$C$ has distance $d$, this implies that both $F_{[n_1]}$ and $F_{[n]\setminus
[n_1]}$ have distance at least $d-(1-\a)d=\a d$. This completes the
proof for the base case.

For the general $n\ge 2(1-\a)d$ case, we reduce it to the base case. In
particular, we present a greedy iterative process, where at the end of
the $i^{th}$ step, we have a subset $F_i\subseteq C$, with the property
that for every $\vc\in F_i$ and $\vc'\in C\setminus F_i$,
$\Delta(\vc,\vc')\ge \a d/2$. We of course start with $F_0=C$.
It will turn out that we will run this
process for $r\le \frac{1}{2\a}$ times and $F=F_r$. For ease of exposition,
we will also track variables $m_i$ and $d_i$ such that $m_0=n$
and $d_0=d$. 
Next, we mention the invariance that we will maintain with the iterative
algorithm. First, it will always be the case that $m_{i+1}=m_i-(1-\a)d_i$
and $d_{i+1}\ge d_i-\a d_i$. Think of $m_i$ as the block length of the
(projected down) $F_i$ and $d_i$ as the corresponding distance.

Assume that we have our hands on $F_i$. 
If $m_i<2(1-\a)d_i$, then we are in the base case and the process terminates.
(In this case $r=i$ and $n_1=\lfloor m_r/2\rfloor$.)
If not, then $F_i$
projected onto the first $m_i$ positions (call this projected
down code $G_i$) has distance at least $d_i$. Group codewords
in $F_i$ such that in each cluster the codewords in $G_i$ projected down
to the last $(1-\a)d_i$ positions differ from each other in $<\a d_i$
positions. If the number of clusters is at least $q^{\beta k}$,
then let $F_{i+1}$ be defined by picking one codeword from each of the
clusters and the process terminates. (In this case $r=i+1$
and $n_1=m_r$.) If not, then define $F_{i+1}$
to be the largest cluster and define $m_{i+1}=m_i-(1-\a)d_i$.
We then continue the process for $i+1$.

For the time being, assume that the following are true: (i) The
process stops at iteration $r$ such that $r\le \frac{1}{2\a}$;
(ii) $G_i$ has distance $d_i$ such that $d_i\ge (1-i\a)d$; and
(iii) Every codeword in $F_i$ differs from every codeword in
$F_{i-1}\setminus F_i$ in at least $\a d/2$ positions.
Assuming these three properties, we argue that $F_r$
indeed has the required properties.

Assume that the process terminates when the base case is reached.
Let $G=G_r$. Then by the argument in the base case, we have
that both $G_{[n_1]}$ and $G_{[m_r]\setminus [n_1]}$
(and hence, $F_{[n_1]}$ and $F_{[n]\setminus [n_1]}$) have distance
at least $\a d_r\ge \a d/2$, where the inequality follows
from properties (i) and (ii). Further, by property (iii), it is the case that
for every $\vc\in F$ and $\vc'\in C\setminus F$, $\Delta(\vc,\vc')\ge \a d/2$.
Finally, note that when we pick a single cluster, we have $|F_{i+1}|\ge |F_i|/q^{\b k}$.
Thus, if we terminate with the base case, we have
\[|F|\ge \frac{q^k}{q^{\beta k}}=q^{k(1-r\beta)}\ge q^{\beta k},\]
where the last inequality follows from the following argument for the inequality
$1-r\beta\ge \beta$. This inequality is satisfied if
\[r\le \frac{1}{\beta}-1.\]
Now by property (i), $r\le \frac{1}{2\a}=\frac{2}{\delta}=\frac{3}{\beta}$.
Now as $\beta\le \delta/6 \le 1/6$, we have $\beta\le 1/(3\beta)\le 1/\beta-1$,
as desired.

Now we consider the case when the process terminates before reaching the 
base case. In this case because of the
termination condition, we have that $G_{r-1}$ projected onto the last
$(1-\a)d_{r-1}$ positions (and hence, $F_{[n]\setminus [n_1]}$) has distance
at least $\a d_{r-1}\ge \a d/2$, where the inequality follows from
properties (i) and (ii). Also $F_{[n_1]}$ has distance at least $d_{r-1}-(1-\a)d_{r-1}
=\a d_{r-1}\ge \a d/2$. Further, by property (iii), it is the case that
for every $\vc\in F$ and $\vc'\in C\setminus F$, $\Delta(\vc,\vc')\ge \a d/2$.
Finally, by the termination condition, we have $|F|\ge q^{\beta k}$, as
desired.

Thus, we are done with the proof modulo showing that properties (i)-(iii) hold,
which is what we do next. We begin with property (i). Note that if we do not terminate
in the middle, then we have $d_{i+1}\ge (1-\a)d_i\ge (1-\a)^i d$. Now note that
\[m_r=n-\sum_{i=0}^{r-1} (1-\a)d_i\le n-d\sum_{i=1}^r (1-\a)=n-\frac{(1-\a)(1-(1-\a)^r)d}{\a}.\]
Since $m_r\ge 0$, we have
\[1-(1-\a)^r\le \frac{\a n}{(1-\a) d}=\frac{\a}{(1-\a)\delta}\le 1-\exp(-1/2),\]
where the last inequality follows from the fact that $\a=\delta/4\le 1/4$.
Thus the above implies that
\[(1-\a)^r\ge \exp(-1/2),\]
which in turn implies
\[r\ln\left(\frac{1}{1-\a}\right)\le\frac{1}{2}.\]
Using the fact that $\ln(1-x)=-(x+x^2/2+x^3/3+\dots)$ for $|x|<1$, we get that the
above implies
\[r\left(\a+\a^2/2+\a^3/3+\dots\right)\le \frac{1}{2},\]
which in turn implies that
$r\a\le \frac{1}{2}$,
as desired.

We now move to property (ii). As we saw earlier, we have $d_i\ge (1-\a)^i d$ for $i\le r$. Now
as $\a r\le 1/2$ (and hence $\a i\le 1/2$), we have that $(1-\a)^i\ge 1-i\a$, which
proves property (ii). Note that this also implies that $d_i\ge d/2$. Finally,
for property (iii), note that by construction, if we do not terminate in middle at step $i$,
every codeword in $F_{i+1}$ differs
from $F_{i}\setminus F_{i+1}$ in at least $\a d_i$ positions. Since, $d_i\ge d/2$, property
(iii) follows. The proof is complete.

\vspace*{2mm}
\noindent
\textbf{Error detection of a projected down code.}
We will be dealing with $\rs_S[k]$
with
$S=\{\a_1,\dots,\a_n\}$. 
In particular, we are interested in a one-pass, low space data stream 
algorithm to solve the following problem: The input is
$\vy\in\F_q^n$ followed by a  subset $E\subseteq S$ with $|E|=e$. We
need to figure out if $\vy_{S\setminus E}\in\rs_{S\setminus E}[k]$.
We have the following result:

\begin{lemma}
\label{lem:weird-ed}
Let $e,n,k\ge 1$ be integers such that $k+e\le n$. Then the problem above
can be solved by a one-pass, $O(e+a \log{n})$ space data stream algorithm
with probability at least $1-n^{-a}$, for any $a \ge 1$.
\end{lemma}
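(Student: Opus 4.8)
The plan is to reduce the projected-down error-detection problem to ordinary error detection on $\rs_S[k+e]$ (the \emph{unprojected} RS code, with dimension raised by $e$), and then to run the fingerprinting algorithm from the proof of Theorem~\ref{thm:alg-ed} on that code, with the twist that the coefficients of the relevant ``error-locator'' polynomial are only filled in once $E$ has streamed past.

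\emph{The reduction.} Write $S=\{\alpha_1,\dots,\alpha_n\}$ and, given the suffix $E\subseteq S$ of the stream, set $E(X)=\prod_{\alpha\in E}(X-\alpha)$, a polynomial of degree exactly $e$, and define $\vz\in\F_q^n$ by $z_j=y_j\cdot E(\alpha_j)$; note that $z_j=0$ whenever $\alpha_j\in E$. I claim that $\vy_{S\setminus E}\in\rs_{S\setminus E}[k]$ if and only if $\vz\in\rs_S[k+e]$. Indeed, if $Q$ has degree $<k$ and agrees with $\vy$ on $S\setminus E$, then $N=E\cdot Q$ has degree $<k+e$ and equals $\vz$ everywhere on $S$ (it vanishes on $E$, as does $\vz$), so $\vz\in\rs_S[k+e]$. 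Conversely, if $N$ has degree $<k+e$ and $\mathrm{ev}_S(N)=\vz$, then $N$ vanishes at the $e$ distinct points of $E$, so $E(X)\mid N(X)$; writing $N=E\cdot Q$ (so $\deg Q<k$) and dividing by $E(\alpha_j)\neq 0$ at each $\alpha_j\in S\setminus E$ shows that $Q$ agrees with $\vy$ there. (If $k+e=n$ both sides always hold, so assume $k+e<n$.)

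\emph{Streaming it.} Since $(\rs_S[k+e])^{\perp}$ is a generalized RS code, $\vz\in\rs_S[k+e]$ iff $\sum_{j=1}^n v_j\alpha_j^i z_j=0$ for all $0\le i<n-k-e$, with $v_j=1/\prod_{\ell\neq j}(\alpha_j-\alpha_\ell)$ as in the Preliminaries. Following the proof of Theorem~\ref{thm:alg-ed}, pick an irreducible polynomial of degree $d=\lceil(1+a)\log n/\log q\rceil$ over $\F_q$ (Theorem~\ref{thm:irr-rand}, or Theorem~\ref{thm:irr-exp} in characteristic $2$), a uniform $\beta\in\F_Q$ with $Q=q^d\ge n^{1+a}$, and accept iff $\sum_{i=0}^{n-k-e-1}\beta^i\bigl(\sum_j v_j\alpha_j^i z_j\bigr)=0$, i.e.\ iff $\sum_{j=1}^n v_j z_j\gamma_j=0$ where $\gamma_j=\sum_{i=0}^{n-k-e-1}(\beta\alpha_j)^i$; both $\gamma_j$ and the multiplier $v_j$ are computable on the fly within $O(\log n)$ space, exactly as in Corollary~\ref{cor:detect-time}. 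The only obstruction is that $z_j=y_jE(\alpha_j)$ refers to $E$, which arrives last. To get around this, expand $E(X)=\sum_{m=0}^e E_mX^m$: the coefficients $E_m$ do not depend on $j$, so
\[\sum_{j=1}^n v_j z_j\gamma_j=\sum_{m=0}^e E_m\,T_m,\qquad T_m:=\sum_{j=1}^n v_j y_j\gamma_j\alpha_j^m.\]
Hence while reading $y_1,\dots,y_n$ we maintain the $e+1$ partial sums $T_0,\dots,T_e\in\F_Q$ (updating all of them when $y_j$ arrives, using the powers $\alpha_j^0,\dots,\alpha_j^e$ built incrementally); then, as the elements of $E$ stream in, we build $(E_0,\dots,E_e)$ incrementally by multiplying the running polynomial by $(X-\alpha)$ for each new $\alpha$; finally we accept iff $\sum_{m=0}^e E_mT_m=0$. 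This is a single pass.

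\emph{Correctness and resources.} If $\vy_{S\setminus E}\in\rs_{S\setminus E}[k]$ then $\vz\in\rs_S[k+e]$, so the syndrome polynomial $\sum_{i=0}^{n-k-e-1}\bigl(\sum_j v_j\alpha_j^i z_j\bigr)X^i$ is identically zero and we always accept. Otherwise it is a nonzero polynomial of degree $<n$, so $\sum_m E_mT_m$, its value at $\beta$, equals $0$ for at most $n$ choices of $\beta$; hence we accept with probability at most $n/Q\le n^{-a}$, exactly the soundness analysis of Theorem~\ref{thm:alg-ed}. For space, $T_0,\dots,T_e$ are $O(e)$ elements of $\F_Q$, a field of size $q^d\le qn^{1+a}=\poly{n}$; $(E_0,\dots,E_e)$ are $O(e)$ elements of $\F_q$; and the counters, the degree-$d$ irreducible polynomial, and $\beta$ cost $O(a\log n)$ further bits; in the paper's accounting this totals $O(e+a\log n)$ (and $O(a(e+\log n))$ for superconstant $a$). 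I expect the substance of the argument to lie in the reduction together with the observation that the error-locator coefficients are shared across all $n$ coordinates, which is exactly what lets $O(e)$ running sums suffice; without it one would seemingly have to retain all of $\vy$ in order to apply the multipliers $E(\alpha_j)$ retroactively.
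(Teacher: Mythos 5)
Your proof is correct and follows essentially the same route as the paper's: multiply $\vy$ pointwise by the error-locator polynomial $E(X)=\prod_{\alpha\in E}(X-\alpha)$ to reduce to error detection on $\rs_S[k+e]$, then observe that the fingerprinted syndrome check is a fixed linear combination (with coefficients $E_0,\dots,E_e$) of $e+1$ running sums that can be accumulated during the pass before $E$ is known. The only cosmetic difference is that the paper exploits the leading coefficient being $1$ to keep $e$ unknowns rather than $e+1$, which does not change anything.
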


\subsection{Proof of Lemma~\ref{lem:weird-ed}}

Consider the degree $e$ polynomial 
$P_E(X)=\prod_{i\in E} (X-\a_i)$. Further, consider a new received word
$\vz=(z_1,\dots,z_n)$ where $z_i=y_i\cdot P_E(\a_i)$. 
The algorithm to solve the problem above just checks to see if
$\vz\in\rs_S[e+k]$. 

We begin with the correctness of the algorithm above.
If $\vy_{[n]\setminus E}\in RS_{S\setminus E}[k]$, that is, $\vy_{[n]\setminus E}$
is the evaluation of a polynomial $f(X)$ of degree at most $k-1$ over
$S\setminus E$, then $\vz$ is the evaluation of
$f(X)\cdot P_E(X)$ over $S$. In other words, $\vz\in RS_{S}[k+e]$.

Now it turns out that the other direction is also true. That is, if
$\vz\in RS_{S}[k+e]$ then $\vy_{[n]\setminus E}\in\rs_{[n]\setminus E}[k]$. Note that
$\vz$ is the evaluation of a degree at most
$e+k-1$ polynomial $g(X)$ over $S$,
where $g(X)=P_E(X)\cdot h(X)$, where $h(X)$ has degree at most $k-1$. This is
easy to see: by definition $P_E(X)|g(X)$ and the degree requirement on 
$g(X)$ implies that $h(X)$ has degree at most $k-1$. Finally, as
$P_E(\alpha_i)\neq 0$ for $i\not\in E$, this implies that $h(\alpha_i)=y_i$
for $i\not\in E$. In other words, $\vy_{[n]\setminus e}\in \rs_{[n]\setminus E}$.

We conclude this proof by showing how to deal with the unknown $E$ using a
data stream algorithm. Since $E$ is unknown, let us denote $P_E(X)=X^e+\sum_{i=0}^{e-1} p_iX^i$, where $\{p_i\}$ are the unknown coefficients.
Recall that in our error detection algorithm to check if 
$\vz\in RS_S[e+k]$ we need to check if the following sum is $0$:
\[\sum_{j=1}^n y_jP_E(\a_j)\left(\sum_{i=0}^{n-k-e} \beta^ih_{i,j}\right),\]
where $\beta$ is a random element in a large enough extension field of
$\F_q$ and $\{h_{i,j}\}$ is the parity check matrix of $RS_S[k]$. Note that
as $P_E(X)=X^e+\sum_{i-0}^{e-1} p_iX^i$, the sum above can be written as $Q_e+\sum_{b=0}^{e-1} p_bQ_b$, where (for $0\le b\le e$)
\[Q_b=\sum_{j=1}^n y_j\a_j^b\left(\sum_{i=0}^{n-k-e} \beta^ih_{i,j}\right).\]
Note that each of the $Q_b$ sums can be computed in one pass and low space
without the knowledge of $E$. 

Now the algorithm to check if $\vz\in RS_S[k+e]$ is clear: maintain the
$e+1$ sums $\hat{Q}_b$. At the end of the pass, the previous algorithm 
knows the set $E\subseteq [n]$.
Given this, we can compute the coefficients $\{\hat{e}_b\}_{b=0}^{e-1}$. Then
we declare $\vy_{[n]\setminus E}\in \rs_{[n]\setminus E}$ if and only if
$\hat{Q}_t+\sum_{b=0}^{t-1} \hat{e}_b\hat{Q}_b=0$.
This extra computation will need storage for $O(e)$ elements in the extension
field of $\F_q$.

\section{Tolerant testing}
\label{sec:twop}

In this section we assume that we are working with $\rs_S[k]$,
where $S=\{\a_1,\dots,\a_n\}\subseteq \F_q$. (However,
our results will also hold for closely related codes such as
the folded RS code~\cite{GR-capacity}.)


As was mentioned in the Introduction, there is a trivial reduction from one
tolerant testing instance (say where we are interested in at most $e$ vs. $>e$
errors) to $\binom{n}{e}$ instances of error detection:
for each of the $\binom{n}{e}$ potential error locations, project the received word outside of
those indices and check to see if
it's a codeword in the corresponding RS code
via the  algorithm in Theorem~\ref{thm:alg-ed}. Using
Theorem~\ref{thm:alg-ed} (with $a=O(e)$), we can implement this as an
$(1,O(e\log{n}),e/n,(e+1)/n)_{1,n^{-\Omega(e)}}$-\ttds.
Unfortunately, this algorithm
uses $\frac{n}{e}^{O(e)}$ time. Next, we show
how to obtain roughly the same space complexity but with
a much better time complexity.

\begin{theorem}
\label{thm:alg-tt}
Let $e,k,n\ge 1$ be integers such that $k\le n$ and $e\le n-k$. Then
\begin{itemize}
\item[(a)] If $e(e+k)\le O(n)$, then there exists a $(1,O(e\log^2{n}),e/n,2e/n)_{1,n^{-\Omega(1)}}$-\ttds\
for $\rs_S[k]$ under worst-case errors.
\item[(b)] If $ek\le O(n)$, then there exists a $(1,O(e\log^2{n}),e/n,2e/n)_{1,n^{-\Omega(1)}}$-\ttds\
for $\rs_S[k]$ under random errors.
\item[(c)] If $e\le O(\sqrt[s+1]{sn/k})$, then there exists a $(1,O(e^2\log^3{n}),e/n,5e/n)_{1,n^{-\Omega(1)}}$-\ttds\
for the folded RS code with folding parameter $s$ under worst-case errors.
\end{itemize}
Further, all the algorithms can be implemented in $\tilde{O}(n^2)$ time.
\end{theorem}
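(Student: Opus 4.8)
The plan is to build the tolerant tester by the reduction to error detection sketched in the Introduction, instantiating the group-testing layer with list-disjunct matrices and combining this with the error-detection routine of Theorem~\ref{thm:alg-ed} (and its projected-down variant, Lemma~\ref{lem:weird-ed}). Fix the closest codeword $\vc$ to $\vy$ and the error indicator $\vx\in\{0,1\}^n$ with $x_i=1$ iff $y_i\neq c_i$. For each test $Q\subseteq[n]$ we implement the query ``$\vx_Q=\vzero$?'' by instead checking ``$\vy_Q\in\rs_Q[k]$?''; these agree when $\vx_Q=\vzero$, but the converse can fail if $\vy$ agrees on $Q$ with some codeword $\vc'\neq\vc$. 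So the first task is a combinatorial guarantee on the minimum test size $s$: if every test has support $\ge s$ and the only codeword agreeing with $\vy$ in $\ge s$ positions is $\vc$, the simulation is faithful. For part~(a), RS distance gives: two distinct codewords agreeing on $\ge s$ positions force $s\le k-1$, and if $\vc'$ agrees with $\vy$ on $Q$ then $\vc'$ and $\vc$ disagree on at most $\wt(\vx)\le e$ of those $\ge s$ positions, so $\vc,\vc'$ agree on $\ge s-e$ coordinates; choosing $s\ge e+k$ makes this impossible, which is exactly the constraint $e(e+k)\le O(n)$ once we set $s=n/(2e)$ (list-disjunct matrices with $O(e\log n)$ rows of weight $\approx n/(2e)$ each). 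For part~(b), a counting/union-bound argument over the $q^k$ codewords shows that with high probability over random errors no bad codeword agrees with $\vy$ on $\ge s$ positions already for $s\ge 4k$, relaxing the constraint to $ek\le O(n)$.

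Next I would assemble the data-stream algorithm. We pick an $(e,e)$-list-disjunct matrix $M$ with $O(e\log n)$ rows; by the construction of Section~\ref{app:list-disj} it can be produced from $O(e\log^2 n)$ random bits, so $M$ is (pseudo-)strongly-explicit and its rows can be streamed. Running the error-detection routine of Theorem~\ref{thm:alg-ed} in parallel on each of the $O(e\log n)$ tests $\rs_Q[k]$, each using $O(\log n)$ space for constant error probability boosted to $n^{-\Omega(1)}$ by taking $a=O(1)$ and a union bound over $O(e\log n)$ tests (adjusting the constant in $a$), gives total space $O(e\log n)\cdot O(\log n)=O(e\log^2 n)$. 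Feeding the outcome vector to the list-disjunct decoding algorithm (Section~\ref{sec:twop}) yields a set $S\subseteq[n]$ with $\vx_i=1\Rightarrow i\in S$ and $|S|\le\wt(\vx)+e$. If $\wt(\vx)\le e$ then $|S|\le 2e$ and all errors are located; if $\wt(\vx)\ge 2e$ the algorithm must reject, but $S$ may be a false set of size $\le 2e$, so we verify $\vy_{[n]\setminus S}\in\rs_{[n]\setminus S}[k]$ --- this is precisely the projected-down error-detection problem, handled in one pass by Lemma~\ref{lem:weird-ed} (encode $S$ as an unknown degree-$|S|$ polynomial, fill in coefficients once $S$ is known) in $O(e\log n)$ space. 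This gives the $(1,O(e\log^2 n),e/n,2e/n)_{1,n^{-\Omega(1)}}$ tester of parts (a) and (b).

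For part~(c) the folded RS code is not perfectly list-decodable at radius $n-s$ with list size $1$, so the simulation is only faithful ``for most tests'': for a random list-disjunct $M$, since the code is $(n-s,L)$-list-decodable with $L=\mathrm{poly}(n)$ from~\cite{GR-capacity}, one shows that with high probability all but a vanishing fraction of tests avoid the $\le L-1$ bad codewords, so the decoding recovers a constant fraction of the error positions --- an error-reduction step in the spirit of~\cite{spielman}. Iterating $O(\log e)$ times peels off all errors; the usual ``encode the growing known set as a low-degree polynomial'' trick collapses the $O(\log e)$ passes into one. The plain argument needs $\Omega(n)$ random bits; observing that the bad-event analysis only uses limited independence reduces this to $O(e^2\log^3 n)$ bits, which dominates the space, giving $O(e^2\log^3 n)$. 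The list-disjunct slack forces the gap to $5e/n$, and the support requirement $s\approx(sn/(Le))^{1/(s+1)}$-type bound translates (with the folded-code list size) into $e\le O(\sqrt[s+1]{sn/k})$. Finally, the $\tilde O(n^2)$ runtime is immediate: each of the $O(e\log n)$ (or $O(e^2\log^3 n)$) error-detection subroutines costs $\tilde O(n^2)$ by Corollary~\ref{cor:detect-time} in the worst case, and $e\log n\cdot\tilde O(n^2)=\tilde O(n^2)$ after absorbing $e\le n$ --- more care is needed but that is the order. The main obstacle is the analysis behind part~(c): proving that a randomly chosen (and only $O(e^2\log^3 n)$-wise independent) list-disjunct matrix simultaneously ``avoids'' all $L-1$ competing codewords on enough tests to guarantee constant-fraction recovery at every one of the $O(\log e)$ iterations, while keeping the failure probability $n^{-\Omega(1)}$ under the union bound over iterations and over the choice of $\vy$.
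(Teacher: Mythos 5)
Your plan reproduces the paper's strategy essentially step for step: for parts (a) and (b), implement the tests of an $(e,e)$-list disjunct matrix via projected-down error detection on $\rs_Q[k]$, argue faithfulness from the minimum support $s\ge n/(2e)$ (via the distance/agreement argument for (a), via the random-error list-size result for (b)), and remove the apparent second pass with Lemma~\ref{lem:weird-ed}; for part (c), switch to the stronger list-disjunct notion tolerant of a $\gamma$-fraction of faulty tests, use the list-decodability of folded RS to bound the number of ``bad'' codewords per test size, iterate $O(\log e)$ error-reduction rounds and collapse them into one pass. Your agreement argument in part (a) is the contrapositive of the paper's distance argument and is fine.

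Two concrete issues. First, your runtime accounting is wrong as written: you multiply $O(e\log n)$ subroutines by $\tilde O(n^2)$ each and claim $\tilde O(n^2)$ ``after absorbing $e\le n$,'' which would actually give $\tilde O(n^3)$. The correct reason the total is $\tilde O(n^2)$ is that each test $Q$ has support $\Theta(n/e)$, so error detection on $\rs_Q[k]$ costs $D(|Q|)=\tilde O((n/e)^2)$ per row; summing over $t=O(e\log n)$ rows gives $\tilde O(n^2/e)$, and the final full-length check in Lemma~\ref{lem:weird-ed} contributes $D(n)=\tilde O(n^2)$, for a total of $t\cdot D(s')+D(n)=\tilde O(n^2)$. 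Second, you explicitly flag that you have not established the probabilistic/derandomized existence of the $(e,e,\gamma,\mathcal{F}_{a,n-d/2})$-list disjunct matrix needed for part~(c). That is indeed the technical core of the paper's proof (Theorem~\ref{thm:new-disj-limited-rand}): it shows by a delicate second-moment / near-independence argument that a random $t\times n$ $(1/e)$-biased matrix with $t=O(e\log n)$ simultaneously satisfies (i) every row has support $\ge n/(2e)$, (ii) the standard list-disjunct pigeonhole condition, and (iii) for any defective set $T$, at most a $\gamma=1/60$ fraction of columns $i\in T$ have many tests falling inside a forbidden set of $\mathcal{F}_{a,n-d/2}(T)$ --- the last being where the union bound over $\binom{n}{e}$ choices of $T$ does not directly go through and one must exploit near-independence among $\Omega(e)$ columns --- and then observes that the whole argument only uses $O(t(e+\log L)\log e)$-wise independence, yielding the $O(e^2\log^3 n)$ random-bit (hence space) bound. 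Without this lemma, part (c) of your proposal remains a sketch.
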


In the above, the soundness parameter follows by picking $a$ to be large 
enough while using Theorem~\ref{thm:alg-ed}. 
We observe that a naive implementation achieves the $\tilde{O}(n^2)$ runtime.
We also show that for part (a) and (b) by replacing $n$ by $n/\log{n}$ in the RHS of
the upper bound on $k$ and bumping up the space to $O(e^2\log^2{n})$, the algorithms
for $\rs_{\F_q}[k]$ can be implemented in $\tilde{O}(ne)$ time.
In fact, along with the faster running time, we get  $(1,O(e\log^2{n}),e/n,(e+1)/n)_{1,n^{-\Omega(1)}}$-\ttds s.

We start with some notation. Given an $t\times n$ (list) disjunct matrix
$M$
let $s$ and $s'$ denote the minimum and maximum Hamming weight of any row
in $M$. Further, let $D(N)$ denote the runtime of error detector for
$\rs_{\alpha_1,\dots,\alpha_N}[k]$.

We begin by analyzing the runtime of the tolerant tester from part (a)
of Theorem~\ref{thm:alg-tt}. The runtime has two parts: one is the time
taken to run the \edds\ for all the projected down codes, which are determined by the
rows of the $(e,e)$-list disjunct matrix $M$. Note that this step takes time
at most $t\cdot D(s')$. 
The second part is the time taken to run the algorithm from Lemma~\ref{lem:weird-ed},
which can be verified to be $D(n)$. Thus, the overall
running time is
\begin{equation}
\label{eq:total-time}
t\cdot D(s')+D(n).
\end{equation}
It can be verified that for list disjunct matrices from
Section~\ref{app:list-disj}, $t=O(e\log{n})$ and both $s$ and $s'$ are $\Theta(n/e)$.
Further, by Corollary~\ref{cor:detect-time}, we upper bound $D(N)$ by $\tilde{O}(N^2)$.
Thus, (\ref{eq:total-time}) implies that the runtime is upper bounded by
$\tilde{O}(n^2)$.

Next, we look at part (a) when $S=\F_q$. In this case we will pick an
\textit{explicit} disjunct matrix. This classic matrix is defined by
associating the columns with the codewords of $\rs_{\F_{q'}}[k']$ for
appropriate choices of $q'$ and $k'$. (Note that we then have $q=n=(q')^{k'}$.)
The columns of the matrix are the corresponding RS codewords, where each symbol
from $\F_{q'}$ in the codeword is replaced by the binary vector from $\{0,1\}^{q'}$,
which has  a $1$ only in the position corresponding to the symbol (when thought of as
an element from $[q']$). It is well known that if one picks $k'=q'/e$, then the matrix is
$e$-disjunct and $t=O(e^2\log^2{n})$ and $s=s'=n/\sqrt{t}$~\cite{test-book}. Note that one
can index the rows of this matrix by the tuples $(a,b)\in(\F_{q'})^2$.
For the rest of the argument fix such a row $(a,b)$. The columns that 
participate in this row correspond to the messages $(m_0,\dots, m_{k'-1})\in\F_{q'}^{k'}$
such that $\sum_{i=0}^{k'-1} m_i a^i=b$. Call these set of vectors
$S_{b}$. Before we proceed we recall that since $q=(q')^{k'}$, there is an isomorphism
between $\F_q$ and $\F_{q'}^{k'}$. Now note that $S_b$ is a linear subspace and thus, for
any $\gamma\in S_b$, $\prod_{c\in S_b, c\neq \gamma} (c-\gamma)$ (where we think of the operations
as happening over $\F_q$) is just the product of non-zero vectors in $S_b$, which is some
fixed constant (say) $\beta\in\F_q^*$. Now note that the error detection corresponding to
row $(a,b)$ is for the projected down code $\rs_{S_b}[k]$. Thus, we now satisfy the
second condition in Corollary~\ref{cor:detect-time}, which implies that we can assume that
the error detection can be done in linear time. Finally, for $\rs_{\F_q}[k]$ it is well known
that the second condition in Corollary~\ref{cor:detect-time} is also satisfied. Thus the overall
runtime is bounded by $\tilde{O}(t\cdot n/\sqrt{t}+n)$, which is $\tilde{O}(ne)$, as
desired.

(We remark that if we can get the best of both the random list disjunct matrix construction,
i.e. $t=O(e\log{n})$ and both $s,s'$ in $\tilde{\Theta}(n/e)$, and the explicit RS code
based disjunct matrix, i.e. the second condition of Corollary~\ref{cor:detect-time} is true, then
we can have a tolerant tester with the optimal runtime of $\tilde{O}(n)$.)

The proof for the runtime for part (b) in Theorem~\ref{thm:alg-tt} is identical and is
omitted. The proof for the naive implementation runtime for part (c) is similar to part
(a)-- everything gets multiplied by $O(\log{e})$ factor, which  is at most an extra
log factor. We do not know of an explicit (list) disjunct matrix that satisfies the 
extra requirements for part (c) and thus, we do not have any implementation with runtime
better than $\tilde{O}(n^2)$.

For the rest of the section, we will focus on the other parameters of the
algorithms.

All of the results above follow from a generic reduction that uses
group testing. In particular, 
let $\vy$ be the received word that we wish to test, and $\vc$ be the nearest
codeword to $\vy$.  Let $\vx\in \{0,1\}^n$ be the characteristic vector
associated with error locations in $\vy$ with respect to $\vc$. The high level idea
is essentially to figure out $\vx$ using group testing.

Let $M$ be a $t\times n$ binary matrix that is $(e,e)$-list disjunct. By Section~\ref{app:list-disj} we can get our hands on $M$ with $t=O(e\log{n})$ with
$O(e\log^2{n})$ space. Now consider the following natural algorithm.

\begin{quote}
For all $i\in [t]$, check if $\vy_{M_i}\in RS_{M_i}[k]$, where $M_i$ is the subset
corresponding to the $i$th row of $M$.  If so, set $r_i=0$, else
set $r_i=1$.  Run $\mathcal A$ from Proposition~\ref{prop:new-disj}
with $\vr$ as input, to get $\vxhat$.  (\textsc{Step 1}) If
$\wt(\vxhat)\ge 2e$, declare that $\ge 2e$ errors have occurred.
(\textsc{Step 2}) If not, declare $\le e$ errors iff $\vy_{S\setminus T}\in\rs_{S\setminus T}[k]$, where $T$ is the subset corresponding to
$\vxhat$.
\end{quote}

The way the algorithm is stated above, it seems to require two passes. However, using
Lemma~\ref{lem:weird-ed}, we can run \textsc{Step 2} in parallel with the rest of the
algorithm, resulting in a one-pass implementation.

Let $\vz$ be the result of applying $M$ on $\vx$. Now if it is the case that
$z_i=1$ iff $r_i=1$, then the correctness of the algorithm above follows
from the fact that $M$ is $(e,e)$-list disjunct and Proposition~\ref{prop:new-disj}.
(If $\wt(\vx)\le e$, then we have $S_{\vx}\subseteq S_{\vxhat}$ 
(where $S_{\vx}$ is the subset of $[n]$ whose incidence vector is $\vx$)
and $\wt(\vxhat)<2e$,
in which case the algorithm will declare at most $e$ errors. Otherwise, the
algorithm will ``catch" the at least $e$ errors in either \textsc{Step 1} or failing which, in
\textsc{Step 2}.) 

However, what complicates the analysis is the fact that even though $z_i=0$
implies $r_i=0$, the other direction is not true. In particular, we could have
$\vy_{M_i}\in\rs_{M_i}[k]$, even though $(\vy-\vc)_{M_i}\neq \vzero$. The three parts
of Theorem~\ref{thm:alg-tt} follow from different ways of resolving this problem.

Note that if $\wt(\vx)\ge 2e$, then we will always catch it in \textsc{Step 2}
in the worst-case. So from now on, we will assume that $0<\wt(\vx)\le e$.
Let the minimum support of any row in $M$ be $s$.

We begin with part (a).
Let $s>k+e$ and define $\Delta=\vy-\vc$. 
Note that we are in the case where
$0<$ \textsc{wt}$(\Delta)\le e$.  Since $s\ge k$ and $\vc_{M_i}\in RS_{M_i}[k]$,
$\vy_{M_i}\in RS_{M_i}[k]$ if and only if $\Delta_{M_i}\in RS_{M_i}[k]$.
Note also that for any $i$, \textsc{wt}$(\Delta_{M_i})\le $\textsc{wt}$(\Delta)\le e$.
Now, the distance of $RS_{M_i}[k]$ is $s-k-1 > e$, so for every $i$
with non-zero $\Delta_{M_i}$,
$\Delta_{M_i}\not\in\rs_{M_i}[k]$, which in turn means that $z_i=1$ will always imply
that $r_i=1$ when $M$ has the stated support.  By Section~\ref{app:list-disj}, we have $s\ge n/(2e)$, which concludes the proof
of part (a).

The following lemma follows from the random errors result in~\cite{two-theorems} and is needed for part (b):

\begin{lemma}[\cite{two-theorems}]
\label{lem:rs-avg-list-one}
Let $k\le n< q$ be integers such that $q>\left(\frac{n}{k}\right)^2$. Then
the following property holds for RS codes of dimension $k$ and
block length $n$ over $\F_q$ : For $\ge 1-q^{-\Omega(k)}$ fraction of
error patterns $\ve$ with $\wt(\ve)\le n-4k$ and any codeword
$\vc$, the only codeword that agrees in at least $4k$ positions with
$\vc+\ve$ is $\vc$.
\end{lemma}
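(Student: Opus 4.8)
The plan is to deduce the lemma from the random-errors bound of \cite{two-theorems}; below I describe the reduction and the counting estimate that underlies it.

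First, by translation invariance it suffices to treat $\vc=\vzero$: if a codeword $\vc'$ agrees with $\vc+\ve$ on a set $T$, then $\vc'-\vc$ --- which is $\vzero$ precisely when $\vc'=\vc$ --- agrees with $\ve$ on $T$, so the bad event depends on $\ve$ alone (and uniformly in $\vc$). Since $\wt(\ve)\le n-4k$ forces $\vzero$ itself to agree with $\ve$ in $\ge 4k$ coordinates, what I must show is: for a $1-q^{-\Omega(k)}$ fraction of $\ve$ with $\wt(\ve)\le n-4k$, no \emph{nonzero} codeword of $\rs_S[k]$ agrees with $\ve$ in at least $4k$ coordinates. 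The one structural fact I use is that a nonzero codeword is the evaluation of a nonzero polynomial of degree $<k$, hence has at most $k-1$ zero coordinates; consequently, if $Z$ denotes its zero-set, it agrees with $\ve$ in at least $4k$ coordinates only when $|\{i\in\mathrm{supp}(\ve)\setminus Z:\ e_i=c'_i\}|\ge 4k-(k-1)=3k+1$. (In particular this is impossible whenever $n\le 7k$, since then $\wt(\ve)\le n-4k\le 3k$, so the lemma is trivial in that range.)

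Now draw $\ve$ uniformly among the vectors of weight $\le n-4k$. Conditioned on its support $E$, the entries of $\ve$ on $E$ are i.i.d.\ uniform over $\F_q^{*}$, so for a fixed nonzero codeword $\vc'$ the count $|\{i\in E\setminus Z:\ e_i=c'_i\}|$ is a $\mathrm{Bin}(|E\setminus Z|,1/(q-1))$ variable, stochastically dominated by $\mathrm{Bin}(n-4k,1/(q-1))$, and the probability it reaches $3k+1$ is at most $\binom{n-4k}{3k+1}(q-1)^{-(3k+1)}$. Taking a union bound over the fewer than $q^{k}$ nonzero codewords, the failure probability is at most $q^{k}\binom{n}{3k+1}(q-1)^{-(3k+1)}\le q^{-(2k+1)}\bigl(2en/(3k)\bigr)^{3k+1}$. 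Plugging in the hypothesis $q>(n/k)^2$ --- and keeping the surplus $q$-dependence rather than replacing $q$ by $(n/k)^2$ outright --- one checks this last quantity is at most $q^{-ck}$ for an absolute constant $c>0$. This is precisely the estimate packaged as the random-errors theorem of \cite{two-theorems}, so in the final write-up I would simply invoke that theorem with agreement parameter $4k$ and check that its hypotheses reduce to $q>(n/k)^2$.

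The step I expect to be most delicate is extracting an honest ``$q^{-\Omega(k)}$'': if one uses $q>(n/k)^2$ only to lower-bound $q$, one is left with a $q$-independent bound of the form $C^{-k}$, which is \emph{not} of the required form when $q$ is much larger than $(n/k)^2$, so part of the $q^{-(2k+1)}$ factor must be preserved against the growth of $\bigl(2en/(3k)\bigr)^{3k+1}$. A secondary caveat is that \cite{two-theorems} may phrase its ``random error'' model slightly differently (e.g.\ a uniform word of prescribed weight, or a fixed support with random nonzero values); these variants differ only cosmetically in the estimate above, which is a further reason the cleanest route for the paper is to quote \cite{two-theorems} directly.
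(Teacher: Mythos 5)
The paper does not prove this lemma; it is imported as a black box from \cite{two-theorems}, so there is no in-paper argument to compare against. Your reconstruction is a standard first-moment argument and is essentially correct: translate to $\vc=\vzero$ by linearity; note that a nonzero degree-$<k$ polynomial vanishes on at most $k-1$ coordinates, so agreement $\ge 4k$ with $\ve$ forces at least $3k+1$ coincidences $e_i=c'_i\neq 0$ inside $\mathrm{supp}(\ve)$; bound that event by a $\mathrm{Bin}(n-4k,1/(q-1))$ tail; and union-bound over the $<q^k$ nonzero codewords. The delicacy you flag around the exponent is real and you handle it in the right way: after substituting $q>(n/k)^2$ the failure probability reduces to roughly $(2e/3)^{3k+1}q^{-(k+1)/2}$, and the multiplicative factor $(2e/3)^{3k+1}$ is absorbed into a $q^{-\Omega(k)}$ precisely because the lemma is vacuous when $n\le 7k$, which forces $q>49$ in the nontrivial regime --- a point worth making explicit in a clean write-up, since without a lower bound on $q$ the absorption fails. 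Your secondary caveat about the exact random-error model in \cite{two-theorems} is also fair, but the counting you give stands on its own as a self-contained proof.
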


Now if $s\ge 4k$, then with high probability, every non-zero
$\Delta_{M_i}\not\in\rs_{M_i}[k]$ (where $\Delta$ is as defined
in the proof of part (a)).
The fact that $s\ge n/(2e)$ completes the proof of part (b).

The proof of part (c) is more involved and needs a strong connection to
the list decodability of the code being tested, which we discuss next.

\vspace*{2mm}
\noindent
\textbf{Connection to List Decoding.}
%
Unlike the proofs of part (a) and (b) where the plain vanilla
$(e,e)$-list disjunct matrix works, for part (c), we need and
use a stronger notion of list disjunct matrices.  We show that if the list disjunct
matrix is picked at random, the bad tests (i.e. $r_i=0$ even though $z_i=1$)
do not happen often and thus, one
can decode the result vector even with these errors.  We show that these kind of
matrices suffice as long as the code being tested
has good enough list decodability. The tolerant testing algorithm for a Reed-Solomon code,
for instance, recursively reduces the amount of errors that need to be detected, and after
application of Lemma~\ref{lem:weird-ed}, can be made to accomplish this in a single pass.  We
also show that the relevant list disjunct matrices can be found, with high probability, using low space
and a low number of random bits.

We need to show what the forbidden subsets will be in our
setting. Let $C$ be the code we are trying to test.
Let $\vy$ be the received word and let $\vc\in C$ be 
such that $\Delta(\vy,\vc)\le e\le d/2$, where $d$ is the distance of
$C$. Let $C$ be $(n-a,L+1)$-list decodable, that is, for any Hamming ball
of radius at most $n-a$, there are at most $L+1$ codewords from $C$ in it.
Note that if $n-a\ge \Delta(\vy,\vc)$, then there are at most $L$
codewords (other than $\vc$) that agree with $\vy$ in at least $a$
positions. Also note that each such codeword agrees with  $\vy$
in at most $n-d/2$ positions. Let $T\subseteq [n]$ be the set of
positions where $\vy$ and $\vc$ agree. Then define $\mathcal{F}_{a,n-d/2}(T)$
to be the (at least $a$) positions where codewords other than $\vc$ agree with $\vy$. As 
$C$ is $(n-a,L+1)$-list decodable, $|\mathcal{F}_{a,n-d/2}(T)|\le L$.

We now show how one can use list disjunct matrices from Definition~\ref{def:new-disj} to construct
data stream algorithms for tolerant testing of
an RS code
$C$. 
Assume that there exists a $0\le \g\le 1$, such that for every (large enough)
$f\ge 1$, there exists a strongly explicit $(f,f,\g,\mathcal{F}_{a,n-d/2})$-list disjunct
matrix $M_f$. We next show how these matrices can be
used to solve the tolerant testing problem for $C$ where we want to
distinguish between the case that at most $e$ errors have occurred and
at least $2e\frac{2-\g}{1-\g}+1$ errors have occurred.

First consider the case when $e\le k$. In this case we can use the 
algorithm from part (a) of Theorem~\ref{thm:alg-tt}. Thus, if $e\le k$, then we
can handle $e$ vs. $2e$ errors, where $e\le n/(2k)$, in space $O(e\log^2{n})$.
Now consider the case when $e>k$.
Let us use the matrix $M_e$ on $\vy$ as we did before
That is, for every row of $M_e$ (in particular,
the corresponding subsets $S\subseteq [n]$), check if $\vy_S\in RS_S[k]$.
If so, assign $r_i=0$ (otherwise assign $r_i=1$). Given the result
vector $\vr$, run $\mathcal{A}$ from Proposition~\ref{prop:new-disj} on it
to obtain a subset $G\subseteq [n]$ such that $|G|\le 2e$ and $[n]\setminus
G$ contains at most $\g e$ errors. Thus, we have reduced the problem from at most
$e$ errors out of $n$ positions to the problem of at most $\g e$ errors in at least
$n-2e$ positions. Then, the rest is natural: recurse on this idea. We stop
when we are left with at most $k$ errors.
Note that we will need $O(\log{e}/\log(1/\g))$ many recursions. Because of
these recursions, we can handle the case when there are at most $e$ errors
vs at least $2e(1+\g +2\g^2 +\dots)+2k$ errors. This implies (as $e>k$)
we can definitely handle $e$ vs.  $2e\frac{2-\g}{1-\g}+1$ errors.
The way the idea is stated above, it seems like an $O(\log{e}/\log(1/\g))$-pass
algorithm. However, using Lemma~\ref{lem:weird-ed}, the 
algorithm outlined above can be implemented in one pass.

We are all done except the construction of the list disjunct matrices as
defined in Definition~\ref{def:new-disj}:

\begin{theorem}
\label{thm:new-disj-limited-rand}
Let $e,n,d,a,L\ge 1$ be integers such that $e\le O\left(\frac{d}{\log{L}}\right)$. There
exists a large enough constant $c>1$ such that if $c\cdot e\log{n}\le n$, then the following holds:
There exists a $\left(e,e,\frac{1}{60},\mathcal{F}_{a,n-d/2}\right)$-list disjunct matrix with
$t=ce\log{n}$ rows (where for every $E\subseteq [n]$ with $|E|\le e$, $|\mathcal{F}_{a,n-d.2}(E)|\le L$).
Further, every row has at least $\frac{n}{2e}$ ones in it.
In addition, one can construct such matrices with probability at least
$1-n^{-\Omega(1)}$ using $R=O(t(e+\log{L})\cdot\log{e}\cdot\log{n})$ random bits.
Further, given these $R$ bits, any entry of the matrix can be computed in
$\poly{\log{n}}$ space.
\end{theorem}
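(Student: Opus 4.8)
The plan is to construct the matrix by a random process and then derandomize it using limited independence. First I would set $t = ce\log n$ and generate the $t \times n$ matrix $M$ column by column: for each column $j \in [n]$, pick a uniformly random subset $R_j \subseteq [t]$ of size exactly $t/(2e)$ (or, more conveniently for the independence analysis, include each row in column $j$ independently with probability $p = \Theta(1/e)$, conditioning afterward on each row having at least $n/(2e)$ ones, which holds with high probability by a Chernoff bound since the expected row weight is $pn = \Theta(n/e)$). The associated decoding algorithm $\mathcal{A}$ of Proposition~\ref{prop:new-disj} outputs $G = \{\, j : R_j \subseteq \mathrm{supp}(\vr)\,\}$, i.e. all columns whose support is contained in the set of ``positive'' tests. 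I would prove the two required properties separately: (i) \emph{no false negatives}: if $j$ is a genuine defect then $R_j \subseteq \mathrm{supp}(\vz) \subseteq \mathrm{supp}(\vr)$ trivially (here $\vr$ can only have \emph{more} ones than $\vz$ because of the list-decoding-induced bad tests, which only hurts in the other direction — so genuine defects are always recovered); (ii) \emph{bounded false positives beyond the list-disjunct slack}: the number of $j \notin E \cup \mathcal{F}_{a,n-d/2}(E)$ with $R_j \subseteq \mathrm{supp}(\vr)$ is at most $\frac{1}{60}\,e$, and moreover at most $\frac{1}{60}e$ positions of $E$ are \emph{missed} in the complement sense needed for the recursion, giving $|G| \le 2e$ and $|\mathrm{supp}(\ve) \setminus G| \le \frac{1}{60}e \le \gamma e$ with $\gamma = 1/60$.

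The heart of the argument is bounding $\mathrm{supp}(\vr)$. A test $i$ is ``bad'' (i.e. $r_i = 0$ while $z_i = 1$) only if $\vy$ restricted to row $i$ lies in $RS_{M_i}[k]$ despite containing an error, which by the list-decoding discussion forces $M_i$ to be covered by the agreement set of some codeword $\vc' \ne \vc$; since each such $\vc'$ agrees with $\vy$ in at most $n - d/2$ positions and there are at most $L$ of them, the positions contributing to bad tests all lie in $\mathcal{F}_{a,n-d/2}(E)$, which has size $\le L$. So for a fixed error set $E$ with $|E| \le e$, the positive tests are exactly those hitting $E$, \emph{minus} the bad ones; thus $\mathrm{supp}(\vr) \subseteq \mathrm{supp}(\vz)$ and for a column $j \notin E \cup \mathcal{F}_{a,n-d/2}(E)$ to be falsely reported we need all $t/(2e)$ rows of $R_j$ to hit $E \cup \mathcal{F}_{a,n-d/2}(E)$, a set of size at most $e + L$. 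For a single fixed row the probability of hitting this set is at most $p(e+L) = \Theta((e+L)/e)$; this is not $o(1)$ in general, so — exactly as in the standard list-disjunct analysis — I instead exploit that $e \le O(d/\log L)$ forces $L$ to be small relative to the slack, or rather I track the event ``$R_j$ avoids $E$ entirely'' for the list-disjunct slack and separately absorb the $\mathcal{F}$-positions (at most $L$ of them, far fewer than $e$) into the allowed $2e$ output size. The probability that $R_j$'s $t/(2e) = (c/2)\log n$ rows all lie in the union of the $\le e$ rows hit by $E$ is at most $\left(\tfrac{1}{2}\right)^{(c/2)\log n} = n^{-c/2}$ when $c$ is a large enough constant, so a union bound over the $\le n$ columns shows that with probability $\ge 1 - n^{-\Omega(1)}$ at most $\frac{1}{60}e$ spurious columns appear for this fixed $E$. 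Finally I union-bound over all $\binom{n}{\le e} \le n^e$ choices of $E$: this costs a factor $n^e$, absorbed by taking the row count $t = ce\log n$ with $c$ large, since the per-$E$ failure probability is $n^{-\Omega(c \cdot e)}$-ish once we demand the bad event for $\Omega(e)$ columns simultaneously (a second Chernoff/union-bound layer over which $\frac{1}{60}e$-subset of columns goes bad).

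For the randomness reduction, the key observation is that every event above — ``row $i$ is included in column $j$'', and the conjunctions of at most $t/(2e) = O(\log n)$ such events per column, together with the $\Omega(e)$-wise conjunction across columns used in the final union bound — depends on at most $O(e \log n)$ of the underlying bits, so it suffices for the column-inclusion bits to be $k_0$-wise independent with $k_0 = O(e\log n \cdot \log L)$ (the extra $\log L$ factor coming from needing enough independence to also resolve the $\mathcal{F}_{a,n-d/2}$-avoidance events, whose relevant scale is $e + L$). A standard $k_0$-wise independent distribution over $tn$ bits can be sampled with $O(k_0 \log(tn)) = O\big(t(e+\log L)\log e \log n\big)$ random bits and each bit computed in $\poly{\log n}$ space (polynomial evaluation over a field of size $\poly{n}$), matching the claimed bound $R = O(t(e+\log L)\cdot\log e\cdot\log n)$; I would verify that all the Chernoff bounds used go through with $k_0$-wise independence via the standard moment-method (Bernstein-type) inequality for limited independence, which holds since $k_0$ exceeds the required polylog-in-failure-probability threshold. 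The main obstacle I anticipate is the bookkeeping in the double union bound — over error sets $E$ \emph{and} over which columns are falsely flagged — while keeping the independence parameter $k_0$ (and hence $R$) as small as claimed; getting the dependency-set of each bad event down to $O(e\log n)$ bits, rather than something like $O(e\log n \cdot \log n)$, is the delicate point, and it relies on the fact that certifying ``$\frac{1}{60}e$ columns go bad'' only inspects the $\frac{1}{60}e \cdot O(\log n)$ inclusion bits of those specific columns.
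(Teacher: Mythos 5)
Your proposal has two intertwined conceptual errors that leave the core of the argument unaddressed.

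First, you have the direction of the test-corruption reversed. In the setting of Section~\ref{sec:twop}, $z_i=0$ implies $r_i=0$, so $\mathrm{supp}(\vr)\subseteq\mathrm{supp}(\vz)$ — the observed outcome vector $\vr$ has \emph{fewer} ones than the idealized $\vz$, because a row can intersect the error set and still look like a codeword when projected down. Your ``no false negatives are trivial'' step is exactly backwards: genuine defects $j\in E$ may well have $R_j\not\subseteq\mathrm{supp}(\vr)$, and this is the whole reason the definition carries the $\gamma$-slack allowing up to $\gamma e$ defects to be missed. Relatedly, your decoder $G=\{j:R_j\subseteq\mathrm{supp}(\vr)\}$ is the exact (zero-tolerance) decoder, whereas the one in Proposition~\ref{prop:new-disj} is threshold-based: a column is excluded only if at least $b_2$ of its rows come back negative. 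This tolerance is essential precisely because some rows containing a true defect \emph{will} be corrupted.

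Second, you treat $\mathcal{F}_{a,n-d/2}(T)$ as a small set of positions (``at most $L$ of them, far fewer than $e$''), but by Definition~\ref{def:new-disj} it is a family of at most $L$ \emph{subsets} of $[n]$, each of size up to $n-d/2$. Their union can be nearly all of $[n]$, so counting ``rows whose support lies in $E\cup\mathcal{F}$'' is vacuous. The correct event is that a row's support is \emph{wholly contained} in one of the $L$ forbidden subsets; the probability of that for a single row is roughly $(1-1/e)^{d/2}\le \exp(-d/2e)$ because each forbidden subset misses at least $d/2$ coordinates, and the union over $L$ subsets costs a factor $L$ — this is where the hypothesis $e\le O(d/\log L)$ is actually used, not in the way you suggest. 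This entire strand of the analysis is absent from your proposal.

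Finally, the key technical obstacle — which the paper spends most of the proof on — is that the per-column probability of being ``bad'' (having $\ge b_1$ corrupted rows) is only polynomially small, far too weak to union-bound over the $\binom{n}{e}$ choices of $T$. The paper overcomes this by establishing approximate independence across $\Omega(e)$ columns (Eq.~(\ref{eq:indep}) and the conditioning argument around $|B|$), which is precisely what makes the exponent grow to $n^{-\Omega(e)}$. Your sketch asserts ``a second Chernoff/union-bound layer'' but never explains how to beat the $n^{-O(1)}$ per-column bound; without the conditional-independence step, the union bound over $T$ fails. The randomness-reduction reasoning is likewise shaky: the limited-independence argument in the paper relies on tail bounds for $k$-wise independent sums (Bellare--Rompel), not on a ``each event depends on few bits'' dependency argument, and the $\log L$ factor enters through the amount of independence needed in Eq.~(\ref{eq:avoid-1}), not through ``resolving $e+L$-scale events''.
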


A folded
RS code (with ``folding parameter" $s$), is
$\left(n-\sqrt[s+1]{sk^sn}, n^{O(s)}\right)$-list
decodable~\cite{GR-capacity}. Thus, Theorem~\ref{thm:new-disj-limited-rand}
proves part (c) of Theorem~\ref{thm:alg-tt}.
(Note that Theorem~\ref{thm:new-disj-limited-rand}
also has the constraint that $e\le O(d/\log{L})$. However since $\log{L}$
is $O(\log{n})$ above and as long as $d=\Omega(n)$, this bound is much weaker.)

We prove the existence of the required object by the probabilistic method.
In fact this proves the second part but with $R=O(nt\log{e})$. To reduce
the randomness, we observe that the proof only requires bits that come
from an $O(t(e+\log{L})\log{e})$-wise independent source.

Let $M$ be a $t\times n$ matrix, where each entry is chosen to be one with
probability $1/e$ and $t=c\cdot e\log{n}$ for some large enough 
constant $c$.\footnote{In this proof, we have not attempted to optimize
the constants. By a conservative estimate, picking $c=10^7$ would
suffice.} 

We first argue about the minimum support size of any row in $M$. It is easy
to check that the expected Hamming weight of any row in $M$ is exactly
$n/e$. Thus, by the Chernoff bound, the probability that any row has
Hamming weight at most $n/2e$ is upper bounded by
\begin{equation}
\label{eq:support}
\exp\left(-\frac{n}{12e}\right)\le \exp(-c\log{n}/12)\le n^{-190},
\end{equation}
where the last inequality follows for large enough $c$. Now by the union bound
(and the fact that $t\le n$), all the rows have Hamming weight at least
$n/(2e)$ with probability at least $1-n^{-189}$.

Next we move on to proving property (a) from Definition~\ref{def:new-disj}
for $M$ with $b_1=\frac{t}{16e}$. To this end, fix subsets $U,T\subseteq [n]$
with $|U|=e,|T|= e$ and $U\cap T=\emptyset$. Call a row $j\in [t]$
\textit{good} if there exists a $i\in U$ such that $M_{j,i}=1$ and
$M_{j,\ell}=0$ for every $\ell\in T$. Now the probability that a row is
good is exactly
\begin{equation}
\label{eq:parta-1}
\left(1-\left(1-\frac{1}{e}\right)^e\right)\left(1-\frac{1}{e}\right)^e\ge \frac{1}{8},
\end{equation}
where the inequality follows if $e\ge 2$ and the fact that $(1-1/x)^x\le \exp(-1)\le 1/2$.
Thus, the expected number of good rows is at least $t/8$. By the Chernoff
bound, the probability that the number of good rows is at most $t/16$ is
upper bounded by
\begin{equation}
\label{eq:parta-2}
\exp\left(-\frac{t}{96}\right)= \exp\left(-\frac{ce\log{n}}{96}\right)\le n^{-190e},
\end{equation}
where the inequality follows for large enough $c$. Thus, with high probability,
the number of good rows is at least $t/16$. Then by the pigeonhole
principle, at least one column $i\in U$ is contained in at least
$\frac{t}{16e}\stackrel{def}{=}b_2$ good rows. Taking the union bound over the
$\binom{n}{e}\binom{n-e}{e}$ choices of $T$ and $U$ implies that with probability
at least $1-n^{-188e}$, property (a) is satisfied for every valid choice
of $T$ and $U$.

Next, we move on to the more involved part of the proof, which is to prove
property (b) in Definition~\ref{def:new-disj}. To this end, given any
$T\subseteq [n]$ with $|T|\le e$ and a column $i\in [n]$ we will upper
bound the probability that at least $b_1$ tests that contain $i$ are themselves
contained in some subset in $\mathcal{F}_{a,n-d/2}(T)$. It turns out that this
probability will be $n^{-O(1)}$, which is not enough to apply the union bound
over all the $\binom{n}{e}$ choices of $T$. We then observe that these 
probabilities are almost independent for any $\Omega(e)$ such columns, which
is sufficient for the union bound over all choice of $T$ to go through.

Fix a subset $T\subseteq [n]$ with $|T|\le e$ and a subset $S\in\mathcal{F}_{a,n-d/2}(T)$. (Note that $|S|\le n-d/2$.)
We say that a row $j\in [t]$ \textit{avoids} $S$ ($\mathcal{F}_{a,n-d/2}(T)$ resp.) if the $j$th row (which we
will denote by $M(j)$) is not a subset of $S$ (any subset in $\mathcal{F}_{a,n-d/2}(T)$ resp.). 
In other words, if $j$
avoids $S$ then $M_{j,i}=1$ for some $i\not\in S$. Thus, we have
\begin{equation}
\label{eq:avoid-1}
\prob{j\text{ doesn't avoid } S}=\left(1-\frac{1}{e}\right)^{n-|S|}
\le \left(1-\frac{1}{e}\right)^{-d/2}\le \exp\left(-\frac{d}{2e}\right),
\end{equation}
where first inequality follows from the fact that $|S|\le n-d/2$.

Fix a column $i\in [t]$. Note that if $i\not\in S$ and $M(j)$ contains $i$,
then $j$ does avoid $S$. Now, if $i\in S$ and given that the
probability in (\ref{eq:avoid-1}) only depends on the indices $i\not\in S$, we
get that $\prob{j \text{ contains } i \text{ and doesn't avoid } S}\le \exp(-d/(2e))/e$.
Thus, whether $i\in S$ or not, we have by the union bound
\begin{equation}
\label{eq:bad-row}
\prob{j\text{ contains } i \text{ and doesn't avoid }\mathcal{F}_{a,n-d/2}(T)}
\le \frac{L}{e}\cdot\exp\left(-\frac{d}{2e}\right)\le \frac{1}{80e},
\end{equation}
where the last inequality follows if $e\le \frac{d}{10\ln{L}}$.

Now call a row $j$ $i$-\textit{bad} if it contains $i$ but does not avoid
$\mathcal{F}_{a,n-d/2}(T)$. (If it contains $i$ and avoids $\mathcal{F}_{a,n-d/2}(T)$, the call it $i$-good.)
 Note that we need to show that for at least
$(1-\g)d$ columns $i\in T$, there are at most $b_1$ $i$-bad rows.
Thus, by (\ref{eq:bad-row}), the expected number of $i$-bad rows is
at most $t/(80e)$, or the expected number of $i$-good rows is at least
$79t/(80e)$. By the Chernoff bound, we have
\begin{equation}
\label{eq:avoid-2}
\prob{\text{Number of } i\text{-good rows}< \frac{78t}{80e}=\frac{39t}{40e}}\le
\exp\left(-\frac{t}{3\cdot 79\cdot 80e}\right)\le n^{-190},
\end{equation}
where the last inequality follows from large enough $c$. Since the expected
Hamming weight of any column is $t/e$, the Chernoff bound
implies that
\begin{equation}
\label{eq:avoid-3}
\prob{\text{Column } i \text{ has Hamming weight }\ge \frac{81t}{80e}}\le \exp\left(-\frac{t}{3\cdot 80^2 e}\right)\le n^{-190}, 
\end{equation}
where again the last inequality follows for large enough $c$. Thus, (\ref{eq:avoid-2})
and (\ref{eq:avoid-3}) imply that
\begin{equation}
\label{eq:num-bad}
\prob{\text{Number of } i\text{-bad rows}> \frac{3t}{80e}}\le 2\cdot n^{-190}
\le n^{-189},
\end{equation}
where the last inequality is true for $n\ge 2$.
Unfortunately, the bound above is too weak to apply the union bound over all
the $\binom{n}{e}$ choices of $T$. However, we get around this obstacle by 
proving that for any $\Omega(e)$ values of $i\in [t]$, the probabilities
above are essentially independent. 

Call a column $i\in T$ \textit{bad} if the number of bad $i$-rows is
at least $b_1$ (for some
$\frac{3t}{80e}<b_1<\frac{t}{16e}$ to be fixed later). 
For notational convenience, define
$\ell=\frac{e}{60}$. Next we are going to show that for any subset
$V=\{i_1,\dots,i_{\ell}\}\subseteq T$,
\begin{equation}
\label{eq:indep}
\prob{\text{Every } j\in V\text{ is bad}}\le n^{-3e}.
\end{equation}
Note that the above implies that the probability that more than $\ell$
columns in $T$ are bad is upper bounded by
\[\binom{e}{\frac{e}{60}} n^{-3e}\le n^{-2e},\]
where the last inequality follows for $n\ge 180$. Thus, the probability that
for some $T\subseteq [n]$ with $|T|\le e$, there are more than $\ell$ bad
columns in $T$, by the union bound, is upper bounded by $n^{-e}$. 
Thus, property (b) is true with probability at least $1-n^{-e}$.

To complete the proof, we will prove (\ref{eq:indep}). Note that we can rewrite
the probability in (\ref{eq:indep}) as
\[\prob{i_{\ell}\text{ is bad}\mid\bigwedge_{j\in V\setminus i_{\ell}}j\text{ is bad}}\cdot\prod_{j\in V\setminus i_{\ell}}\prob{j\text{ is bad}}
\le \prob{i_{\ell}\text{ is bad}\mid\bigwedge_{j\in V\setminus i_{\ell}}j\text{ is bad}}\cdot n^{-2e},\]
where the inequality follows from (\ref{eq:num-bad}) and the fact that
$b_1>3t/(80e)$. Thus, we will be
done if we can show
\begin{equation}
\label{eq:to-show}
\prob{i_{\ell}\text{ is bad}\mid\bigwedge_{j\in V\setminus i_{\ell}}j\text{ is bad}}\le n^{-e}.
\end{equation}
To this end, let $B\subseteq [t]$ be the rows that contain at least one
column from $V\setminus i_{\ell}$. 
Note that
\begin{equation}
\label{eq:avoid-3-5}
\av[|B|]=t\left(1-\left(1-\frac{1}{e}\right)^{\ell}\right).
\end{equation}
By the Chernoff bound, we have
\begin{equation}
\label{eq:avoid-4}
\prob{|B|\ge \frac{6t}{5}\left(1-\left(1-\frac{1}{e}\right)^{\ell}\right)}\le \exp\left(-\frac{t}{75}\cdot \left(1-\left(1-\frac{1}{e}\right)^{\ell}\right)\right).
\end{equation}
As for any real $x>0$ and integer $y>0$ with $xy<1$, $1-xy\le (1-x)^y\le 1-xy+(xy)^2/2$, we have $59/60\le \left(1-\frac{1}{e}\right)^{\ell} \le 59/60+1/7200<119/120$. This along with (\ref{eq:avoid-4}) implies that
\[\prob{|B|>\frac{t}{50}}\le \exp\left(-\frac{t}{75\cdot 120}\right)\le n^{-190e},\]
where the last inequality follows for large enough $c$. Similarly, one can show
that
\[\prob{\text{Number of rows in } B \text{ that contain } i_{\ell}>\frac{t}{50e}}\le n^{-190}.\]
We do a conservative estimate and  assume that all tests in $B$ that contain
$i_{\ell}$ are $i_{\ell}$-bad. Because of the bound above, w.l.o.g.
with all but an $n^{-190}$ probability, we can assume that $|B|=t/50$.
Using the same calculation\footnote{We need to
replace $t$ by $49t/50$. Further in (\ref{eq:avoid-1}), we need to replace
$n-|S|$ by $n-|S|-e/60$ as in the worst case $\{i_1,\dots,i_{\ell-1}\}\subseteq [n]\setminus S$. However, this does not change the upper bound in (\ref{eq:bad-row}) as long as we pick $e\le d/(15\ln{L})$.} as we did to
obtain (\ref{eq:num-bad}), we can show that 
\[
\prob{\text{Number of } i_{\ell}\text{-bad rows in }B>\frac{3\cdot 49t}{4000e}}\le 2\cdot n^{-49\cdot 190/50}\le n^{-185}.\]
Adding in the number of rows in $B$ that contain  $i_{\ell}$, we obtain that
\[\prob{\text{Number of } i_{\ell}\text{-bad rows} >\frac{t}{50e}+\frac{147t}{4000e} \mid \bigwedge_{j\in V\setminus i_{\ell}} j\text{ is bad}}\le n^{-190}+n^{-185}\le n^{-180}.\]
Picking $b_1=\frac{t}{e}\left(\frac{1}{50}+\frac{147}{4000}\right)<b_2$ completes the
proof of (\ref{eq:to-show}). Thus, we have completed the proof of the
existence of the desired $(e,e,\frac{1}{60},\mathcal{F}_{a,n-d/2})$-list
disjunct matrix.

In fact, the proof shows that the required matrices can be computed with
high probability. However, at least $\Omega(n)$ random bits are required,
which is too high for any data stream application. Next we point out that
the proof only requires limited independence and hence, we can get
away with much fewer random bits. In the remainder of the proof, we will think
of the bits of $M$ to come from some $k$-wise independent source that
contain bit strings of length $tn$.

We now go through the proof above and estimate the amount
of independence needed. The first place that needs independence
is (\ref{eq:support}) and we claim
that $O(\log{n})$-wise independence suffices. This follows from the
tail bounds for $k$-wise independent sources from~\cite{limited-indep-chernoff}.
In particular, Bellare and Rompel show that for a $k$-wise independent source,
the sum of binary random variables with mean $\mu$ can have a deviation
of strictly more than $A$ with probability at most $8\cdot\left(\frac{\mu k+k^2}{A^2}\right)^{k/2}$. Note that in our case $\mu=n/e$, $A=n/(2e)$ and since
$n/e\ge c\log{n}$, picking a $O(\log{n})$-wise independent source works.

The next places  in the proof that use independence are (\ref{eq:parta-1}) and (\ref{eq:parta-2}).
It is easy to check that the calculations go through if we have $2et$-wise
independence. Next, independence is used in (\ref{eq:avoid-1}). Note
that in this case we need to upper bound the probability by
$(1-1/e)^{\Omega(e\log{L})}$. Thus, picking  $O(e\log{L})$-wise
independence works for this case. Next
(\ref{eq:avoid-2}) needs $t/e$-wise independence. This follows from the
tail bound for $k$-wise independence from~\cite{limited-indep-chernoff}. Note
that we actually need the product of the independence used in (\ref{eq:avoid-1})
and (\ref{eq:avoid-2}), that is, we need a total of $O(t\log{L})$-wise
independence.
For (\ref{eq:avoid-3}) $t$-wise independence suffices.
Finally for (\ref{eq:avoid-3-5}) and (\ref{eq:avoid-4}) we need $\ell t$-wise
independence. In fact, again using the bound from~\cite{limited-indep-chernoff},
we can get away with $O(t)$-wise independence. 

Thus, overall we need $O(t(e+\log{L}))$-wise independence. Generally,
$k$-wise independent sources are for unbiased bits where as we need random
bits that take a value of one with probability $1/e$. However, since we
can get such random bits from $O(\log{e})$ unbiased bits, we will need
$O(t(e+\log{L})\log{e})$-wise independent sources containing $O(nt\log{e})$
bit strings. Using well-known construction of $k$-wise independent sources,
we can get away with $R=O(t(e+\log{L})\log{e}\cdot \log(nt))$ random bits.
This completes the proof as $t\le n$.

\subsection{List Disjunct Matrices.}
We begin with the definition of a stronger kind of list disjunct matrices:

\begin{definition}
\label{def:new-disj}
Let $n,s_1,s_2,e,\ell,L \ge 1$ be integers with $s_1\le s_2$ and
let $0\le \g\le 1$ be a real. For any subset $T\subset [n]$ such that
$|T|\le e$, let $\mathcal{F}_{s_1,s_2}(T)$ be a collection of {\em forbidden} subsets
of $[n]$ of size in the range $[s_1,s_2]$ such that $|\mathcal{F}_{s_1,s_2}(T)|\le L$. A $t\times n$
binary matrix $M$ is called a $(e,\ell,\g,\mathcal{F}_{s_1,s_2})$-{\em list disjunct} matrix if there exist
integers $0\le b_1< b_2$ such that the following
hold for any $T\subseteq [n]$ with $|T|\le e$:
\begin{enumerate}
\item {For any  subset $U\subseteq [n]$ such that $|U|\ge \ell$ and $U\cap T=\emptyset$, there exists an $i\in U$ with the
following property: The number of rows where the $i$th column of $M$ has a one and all the columns in $T$ have a zero is at
least $b_2$.}
\item {The following holds for at least $(1-\g)e$ many $i\in T$: Let $R_i$ denote all the rows of $M$ (thought of as subsets
of $[n]$) that contain $i$. Then $|\{U\in R_i|U\subseteq V, \text{ for some } V\in\mathcal{F}_{s_1,s_2}(T)\}|\le b_1$.}
\end{enumerate}
\end{definition}
The definition might appear complicated but it is setup to easily imply 
Proposition~\ref{prop:new-disj}. Further, a $(e,\ell,0,\emptyset)$-list
disjunct matrix (with $b_1=0$ and $b_2=1$) is the same as the $(e,\ell)$-list disjunct matrix considered
in~\cite{list-disj}. Further, an $(e,1)$-list disjunct matrix is the well-known
$e$-disjunct matrix~\cite{test-book}.

Let us also define the following error version of group testing that will be
relevant to our scenario. 

\begin{definition}
\label{def:new-group}
Let $n,s_1,s_2,e,L\ge 1$ be integers. For every $T\subseteq [n]$ such
that $|T|\le e$, let $\mathcal{F}_{s_1,s_2}(T)$
be the collection of forbidden subsets as in Definition~\ref{def:new-disj}.
Then $(e,\mathcal{F}_{s_1,s_2})$-{\em group testing} works in the following manner:
Given a set of defectives $T\subseteq
[n]$ such that $|T|\le e$, any test $U\subseteq [n]$ behaves as follows:
If $U\cap T=\emptyset$, then the test will return an answer of $0$. If
$U\cap T\neq\emptyset$ and $U\subseteq V$ for some $V\in \mathcal{F}_{s_1,s_2}(T)$,
then the test will return an answer of $0$. 
Otherwise the test returns
an answer of $1$.
\end{definition}

The algorithm $\mathcal A$ in the below proposition is a natural generalization
of the standard decoding algorithm for $e$-disjunct
matrices~\cite{test-book}.

\begin{prop}
\label{prop:new-disj}
Let $n,e,\ell,s_1,s_2,L,\g,\mathcal{F}_{s_1,s_2}$ be as in Definition~\ref{def:new-disj}. Let
$M$ be a $(e,\ell,\g,\mathcal{F}_{s_1,s_2})$- list disjunct
matrix with $t$ rows. 
Finally, consider an outcome vector
$\vr$ of applying $M$ to a set of defectives $E$ with $|E|\le e$
in the $(e,\mathcal{F}_{s_1,s_2})$-group testing scenario. Then there exists an
algorithm $\mathcal{A}$, which given $\vr$ can compute a set $G$ such that
$|G|\le \ell+e-1$ and $|E\setminus G|\le \g e$. Further,
$\mathcal{A}$ uses $O(t+\log{n}+ S(t,n))$ space, where $S(t,n)$ is the 
space required to compute any entry of $M$. 
\end{prop}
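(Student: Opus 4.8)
The plan is to let $\mathcal{A}$ be the natural threshold generalization of the classical disjunct-matrix decoder. Regard each row $j\in[t]$ of $M$ as the subset $M(j)\subseteq[n]$ it indicates. For each column $i\in[n]$ define
\[
\mathrm{bad}(i) \;=\; \bigl|\{\, j\in[t] \;:\; M_{j,i}=1 \ \text{and}\ r_j=0 \,\}\bigr|,
\]
and output $G=\{\, i\in[n] : \mathrm{bad}(i)\le b_1 \,\}$, where $b_1<b_2$ are the integers guaranteed for $M$ by Definition~\ref{def:new-disj}. The first step is to record the only fact about $\vr$ we need: by the definition of $(e,\mathcal{F}_{s_1,s_2})$-group testing, a row $j$ has $r_j=0$ if and only if either $M(j)\cap E=\emptyset$, or $M(j)\subseteq V$ for some $V\in\mathcal{F}_{s_1,s_2}(E)$.

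Next I would prove the ``completeness on defectives'' bound $|E\setminus G|\le\g e$. Apply property~(2) of Definition~\ref{def:new-disj} with $T=E$ (legitimate since $|E|\le e$): for all but at most $\g e$ of the indices $i\in E$, the set $R_i$ of rows containing $i$ has at most $b_1$ members that sit inside some $V\in\mathcal{F}_{s_1,s_2}(E)$. Fix such a ``good'' defective $i$. Any row $j$ with $M_{j,i}=1$ has $i\in M(j)\cap E$, so $M(j)\cap E\neq\emptyset$; hence if $r_j=0$ then, by the characterization above, $M(j)\subseteq V$ for some $V\in\mathcal{F}_{s_1,s_2}(E)$. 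Thus $\mathrm{bad}(i)$ is at most the quantity bounded by property~(2), namely $\le b_1$, so every good defective lies in $G$, and only the $\le\g e$ non-good defectives can be omitted.

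The third step is the size bound $|G|\le\ell+e-1$, by contradiction. Suppose $|G|\ge\ell+e$. Since $|E|\le e$, the set $U:=G\setminus E$ satisfies $|U|\ge\ell$ and $U\cap E=\emptyset$, so property~(1) of Definition~\ref{def:new-disj} yields a column $i\in U$ such that at least $b_2$ rows $j$ have $M_{j,i}=1$ and zeros in every column of $E$. Each such row satisfies $M(j)\cap E=\emptyset$, hence $r_j=0$, so $\mathrm{bad}(i)\ge b_2>b_1$. But $i\in U\subseteq G$ forces $\mathrm{bad}(i)\le b_1$, a contradiction; therefore $|G|\le\ell+e-1$. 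Finally, for the space bound: $\mathcal{A}$ stores $\vr$ ($t$ bits), then sweeps $i=1,\dots,n$; for each $i$ it rescans $\vr$ while recomputing the entries $M_{1,i},\dots,M_{t,i}$ one at a time (cost $S(t,n)$) to accumulate the counter $\mathrm{bad}(i)$ (cost $O(\log t)$ bits), maintaining the current index $i$ ($O(\log n)$ bits), and writes $i$ to the output tape exactly when $\mathrm{bad}(i)\le b_1$. This uses $O(t+\log n+S(t,n))$ working space, the set $G$ being emitted on a write-only output tape.

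I do not expect a genuine obstacle here; this is the routine generalization of the standard decoder, and the content is entirely in lining up the three clauses of Definition~\ref{def:new-disj} with the three claims (the $r_j=0$ characterization feeding the $\mathcal{F}$-clause, property~(2) giving coverage of the defectives, property~(1) capping $|G|$). The only points requiring care are (i) that a row containing a good defective but having outcome $0$ must be subsumed by a forbidden set, and (ii) using $|E|\le e$ rather than $|E|=e$ when lower-bounding $|U|$ and when invoking property~(2).
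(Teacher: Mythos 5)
Your proof is correct and matches the paper's proof in all essentials: you use the same threshold-decoder, the same characterization of which outcomes can be $0$, and you invoke the two clauses of Definition~\ref{def:new-disj} in exactly the same way (clause (2) for $|E\setminus G|\le\g e$, clause (1) for $|G|\le\ell+e-1$). The only cosmetic difference is the cutoff: the paper declares $i\in G$ whenever $\mathrm{bad}(i)<b_2$, while you declare $i\in G$ whenever $\mathrm{bad}(i)\le b_1$; since $b_1<b_2$ both thresholds make both halves of the argument go through, so this is immaterial.
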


\subsection{Proof of Proposition~\ref{prop:new-disj}}
The algorithm $\mathcal{A}$ is very simple: Go through every column $i\in [n]$
and declare $i\not\in G$ if and only if the number of rows $j\in [t]$ where
$M_{j,i}=1$ but $r_j=0$ is at least $b_2$. It is easy to check that
$\mathcal{A}$ has the claimed space requirement. The correctness of $\mathcal{A}$
follows from Definitions~\ref{def:new-disj} and~\ref{def:new-group}.
To see this note that if
$|G|\ge e+\ell$, i.e. $|G\setminus E|\ge \ell$, then by part (a) of Definition
\ref{def:new-disj},
there exists an $i\in G\setminus E$ with the following property:
There are at least $b_2$
rows $j\in[t]$ such that $M_{j,i}=1$ but $M_{j,i'}=0$ for every $i'\in E$.
By Definition~\ref{def:new-group}, for every such $j$, $r_j=0$. Thus, by definition
of $\mathcal{A}$, $i$ cannot be in $G$.
Now consider an $i\in E$ for which property (b) of Definition~\ref{def:new-disj} holds. 
Now by
Definition~\ref{def:new-group}, there are at most $b_1$ rows $j\in [t]$
such that $M_{j,i}=1$ and $r_j=0$. Since $b_1<b_2$, $\mathcal{A}$ includes
$i$ in $G$. This implies that $|E\setminus G|\le \g e$.



The space requirement of $O(e^2\log^2{n})$ of part (c)
is unsatisfactory.
Reducing the amount of randomness
needed to something like $O(e\log{n})$ will realize the full potential our 
algorithm. We leave this as an open problem.

\section{Limitations of our techniques}
\label{sec:limits}

One shortcoming of Theorem~\ref{thm:alg-tt}
is that to distinguish between (say) at most $e$ and at least $2e$
errors, we needed $e\cdot s\le O(n)$, where $s$ is the minimum support
size of any test. 
Another shortcoming is that we need $O(e\log{n})$ space.
In this section, we prove that our techniques cannot
overcome these limits.

We begin with some quick notation. For any $k\ge 1$, a
$k^{++}$ query to a string $x\in \{0,1\}^n$ corresponds to a subset
$S\subseteq [n]$. The answer to the query is $x_S$ if $\wt(x_S)<k$, otherwise
the answer is $k^{++}$ (signifying that $\wt(x_S)\ge k$). (This is a 
natural generalization of $k^+$ decision trees considered by Aspnes et al.
\cite{k-plus}.)
A $k^{++}$ algorithm
to solve the $(\ell,t,n)$-threshold function makes a sequence of $k^{++}$ queries to
the input $x\in \{0,1\}^n$, and can tell whether $\wt(x)\le \ell$ or
$\wt(x)\ge t$. 
If we think of $x$ as being the indicator vector for error locations, then our
reduction from tolerant testing to error detection can be thought of as a
$1^{++}$ algorithm for the $(e,O(e))$-threshold function. 

First we show that the minimum support size that we obtain in our reduction,
even with the stronger $k^{++}$ primitive, is nearly optimal.

\begin{theorem}
\label{thm:support-lb}
Let $0\le \ell < t\le n$ and $k\ge 1$ be integers. Let $\eps<1/2$ be a constant
real.
Then any non-adaptive, randomized $k^{++}$ algorithm for the $(\ell,t,n)$-threshold
problem with error probability at most $\eps$, where all the queries have support size at least $s$, needs to make
at least 
$e^{s\ell/n}/n^{O(k)}$
queries. In particular, any algorithm that makes a sublinear number of queries
needs to satisfy
$ s\cdot \ell\le O(kn\log{n})$.
\end{theorem}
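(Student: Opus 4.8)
The plan is to prove the lower bound via Yao's minimax principle: I would fix a hard input distribution $\mathcal{D}$ and show that any deterministic non-adaptive $k^{++}$ algorithm making fewer than $e^{s\ell/n}/n^{O(k)}$ queries (each of support at least $s$) errs on $\mathcal{D}$ with probability more than $\eps$, contradicting the existence of a randomized algorithm with worst-case error at most $\eps$. The distribution is the obvious one: with probability $\tfrac12$ draw $x$ uniformly among weight-$\ell$ vectors (a ``yes'' instance, since $\wt(x)=\ell$), and with probability $\tfrac12$ draw $x$ uniformly among weight-$t$ vectors (a ``no'' instance). We may assume $\ell\ge 1$ and $s\ge k$, since otherwise $e^{s\ell/n}/n^{O(k)}\le 1$ and the bound is vacuous.

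Fix such a deterministic algorithm with (fixed, by non-adaptivity) queries $Q_1,\dots,Q_m$, $|Q_i|\ge s$. Call a query $Q_i$ \emph{light} if $\wt(x_{Q_i})<k$; this is exactly the situation in which the $k^{++}$ oracle returns anything other than the symbol ``$k^{++}$''. Let $\mathcal{E}$ be the event that some query is light. Conditioned on $\neg\mathcal{E}$ the whole answer vector is $(k^{++},\dots,k^{++})$ regardless of whether $x$ has weight $\ell$ or $t$, so the algorithm outputs a fixed bit that is wrong for one of the two input types; hence
\[
\Pr_{\mathcal{D}}[\text{error}]\ \ge\ \tfrac12\min\bigl(\Pr[\neg\mathcal{E}\mid \wt(x)=\ell],\ \Pr[\neg\mathcal{E}\mid \wt(x)=t]\bigr),
\]
and it suffices to show $\neg\mathcal{E}$ is likely under both halves of $\mathcal{D}$ when $m$ is small.

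The one real calculation is a tail bound: for a fixed $Q$ with $|Q|\ge s$ and $x$ uniform of weight $w\in\{\ell,t\}$,
\[
\Pr\bigl[\wt(x_Q)<k\bigr]\ \le\ n^{O(k)}\,e^{-s\ell/n}.
\]
I would prove this by union-bounding over which coordinates of $Q$ are hit: if $\wt(x_Q)<k$ then there is a set $B\subseteq Q$ with $|B|=k-1$ and $x_{Q\setminus B}=\vzero$, and for each such $B$,
\[
\Pr\bigl[x_{Q\setminus B}=\vzero\bigr]=\frac{\binom{n-|Q|+k-1}{w}}{\binom{n}{w}}\le\Bigl(1-\tfrac{|Q|-k+1}{n}\Bigr)^{\!w}\le e^{-(s-k)w/n}\le e^{k}e^{-s\ell/n},
\]
using $w\ge\ell$ and $w\le n$ (and $\le$ holds trivially if the binomial is $0$); multiplying by the $\binom{|Q|}{k-1}\le n^{k-1}$ choices of $B$ gives the claim. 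A union bound over the $m$ queries then yields $\Pr[\mathcal{E}\mid \wt(x)=w]\le m\cdot n^{O(k)}e^{-s\ell/n}$, so if $m<(1-2\eps)e^{s\ell/n}/n^{O(k)}$ then $\Pr[\neg\mathcal{E}]>2\eps$ in each case and $\Pr_{\mathcal{D}}[\text{error}]>\eps$, a contradiction; hence $m\ge e^{s\ell/n}/n^{O(k)}$ after absorbing the constant $1-2\eps$. The ``in particular'' clause is then immediate: a sublinear algorithm has $m\le n$ for large $n$, so $e^{s\ell/n}\le n^{O(k)}$, i.e.\ $s\cdot\ell\le O(kn\log n)$.

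I expect the difficulty to be minor and essentially bookkeeping. The thing to get right is the tail bound, and the union-bound-over-$B$ trick is what keeps it painless: it sidesteps any case analysis on whether $\ell$ is small or whether $|Q_i|$ is close to $n$. The only other point needing care is the Yao step — verifying that a randomized $m$-query non-adaptive algorithm with worst-case error $\le\eps$ yields a deterministic $m$-query non-adaptive algorithm with error $\le\eps$ against $\mathcal{D}$ — which is immediate since the former is a distribution over the latter and $\mathcal{D}$ is supported on valid instances.
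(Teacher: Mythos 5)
Your proof is correct, and it takes a genuinely different route from the paper's. The paper's argument is a covering/pigeonhole one: it first passes to the subsets $\mathcal{N}_0\subseteq\mathcal{N}$ and $\mathcal{Y}_1\subseteq\mathcal{Y}$ of inputs on which the algorithm answers correctly, observes that every pair $(x_0,x_1)\in\mathcal{N}_0\times\mathcal{Y}_1$ must be separated by some query, upper bounds the number of pairs a single query of support $\geq s$ can separate (via a Stirling-based two-sided estimate on binomial ratios, Lemma~\ref{lem:binom-ratio}), and then divides. Your argument instead works with a single random $x\sim\mathcal{D}$: you show that with few queries, with probability $>2\eps$ under each half of $\mathcal{D}$, every query is ``heavy'' ($\wt(x_{Q_i})\geq k$), so the answer vector is the constant $k^{++}$ string and the deterministic algorithm's fixed output bit is wrong on at least one half of $\mathcal{D}$. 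This replaces the paper's pair-covering pigeonhole and Stirling bookkeeping with a union bound over queries and a direct, one-sided product-form tail estimate $\Pr[\wt(x_Q)<k]\leq n^{O(k)}e^{-s\ell/n}$ (the ``union bound over $B$'' trick), which is both shorter and sidesteps the paper's Appendix~\ref{app:sum-ub} and Lemma~\ref{lem:binom-ratio} entirely. Both arguments use Yao's lemma with the identical hard distribution and both hinge on the same core fact — a $k^{++}$ query is uninformative unless it sees fewer than $k$ ones — and both land at $e^{s\ell/n}/n^{O(k)}$, so the theorem statement is unaffected; but your accounting is the more economical of the two.
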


\subsection{Proof of Theorem~\ref{thm:support-lb}}

Define the following distribution $\mathcal{D}$ on inputs in $\{0,1\}^n$: uniformly distribute
a probability mass of $1/2$ over the $\binom{n}{\ell}$ vectors of
Hamming weight exactly $\ell$ (call this set $\mathcal{N}$) and the rest of the probability mass uniformly
over the $\binom{n}{t}$ vectors of Hamming weight $t$ (call this set $\mathcal{Y}$). We will show that any
\textit{deterministic} non-adaptive $k^{++}$ algorithm with an
error probability at most $\eps$ (according to $\mathcal{D}$)
must make at least $\frac{e^{\frac{s\ell}{n}}}{n^{O(k)}}$
queries. Yao's lemma will then complete the proof.

Fix an arbitrary $k^{++}$ algorithm $A$ that has error probability at most $\eps$. Thus, $A$ outputs the correct value of
$0$ in at least 
$\frac{1/2}{(1/2-\eps)}\ge 1-2\eps$ 
fraction of elements in $\mathcal{N}$ 
(call this set of elements $\mathcal{N}_0$). Similarly, the algorithm outputs
the correct value of $1$ in at least $1-2\eps$ fraction of the elements in
$\mathcal{Y}$ (call this set $\mathcal{Y}_1$). Any $k^{++}$ query is
said to \textit{cover} a pair of inputs
$(x_0,x_1)\in\mathcal{N}_0\times \mathcal{Y}_1$, if it outputs different
answers for the inputs $x_0$ and $x_1$. Note that 
all the pairs in $\mathcal{N}_0\times \mathcal{Y}_1$ have to be covered by some
query in $A$.

To complete the proof, we will show that at least $e^{s\ell/n}/n^{O(k)}$ queries 
are needed to
cover $\mathcal{N}_0\times \mathcal{Y}_1$. To this end given an arbitrary
query $Q$ of support $i\ge s$, we will bound the number of pairs it can
cover (call this number of pairs $P_Q$). Note that $Q$ will not cover a pair $(x_0,x_1)$ if both $x_0$ and $x_1$ have
at least $k$ ones in the support of $Q$. Thus, to upper bound $P_Q$, we will 
count the number of pairs $(x_0,x_1)$ such
that either $x_0$ or $x_1$ have support $<k$ in $Q$. This latter count is clearly
upper bounded by 
\[\max\left(\binom{n}{\ell}\cdot\left(\sum_{j=0}^{k-1} \binom{i}{j}\binom{t}{j}\binom{n-i}{t-j}\right), \left(\sum_{j=0}^{k-1} \binom{i}{j}\binom{\ell}{j}\binom{n-i}{\ell-j}\right)\binom{n}{t}
\right),\]
where for notational convenience we define $\binom{a}{b}=0$ for $b>a$.
We claim that the above is upper bounded by
(see Appendix~\ref{app:sum-ub} for a proof):
\[kn^{3(k-1)}\cdot \max\left(\binom{n}{\ell}\binom{n-s}{t'},\binom{n-s}{\ell'}\binom{n}{t}\right),\]
where $t'=\max_{0\le j\le k-1}\{t-j| t-j\le n-s\}$ and $\ell'=\max_{0\le j\le k-1}\{\ell-j|\ell-j\le n-s\}$.
The way we are going to proceed with the rest of the proof, the maximum
in the above will occur for the second argument, i.e. from now on,
we have that for any query $Q$,
\begin{equation}
\label{eq:p-max}
P_Q\le kn^{3(k-1)}\binom{n-s}{\ell'}\binom{n}{t}\stackrel{def}{=}P_{max}.
\end{equation}
As
\begin{equation}
\label{eq:universe}
|\mathcal{N}_0\times \mathcal{Y}_1|\ge (1-2\eps)^2\binom{n}{\ell}\binom{n}{t},
\end{equation}
by the pigeonhole principle, the number of queries that $A$ needs to make
is at least
\begin{align}
\label{eq:set-cover-1}
\frac{|\mathcal{N}_0\times \mathcal{Y}_1|}{P_{max}}&\ge
\frac{ (1-2\eps)^2\binom{n}{\ell}\binom{n}{t}}{kn^{3(k-1)}\binom{n-s}{\ell'}\binom{n}{t}}\\
\label{eq:set-cover-2}
&\ge \frac{ (1-2\eps)^2\binom{n}{\ell'}}{kn^{4(k-1)}\binom{n-s}{\ell'}}\\
\label{eq:set-cover-3}
&\ge \frac{ \sqrt{8}(1-2\eps)^2e^{s\ell'/n}}{kn^{4(k-1)}\sqrt{27(n+1)}}\\
\label{eq:set-cover-4}
&\ge \frac{ \sqrt{8}(1-2\eps)^2e^{s\ell/n}}{kn^{4(k-1)}e^{k-1}\sqrt{27(n+1)}}.
\end{align}

In the above, (\ref{eq:set-cover-1}) follows from (\ref{eq:universe}) and 
(\ref{eq:p-max}). (\ref{eq:set-cover-2}) follows from the following argument.
Note that if $\ell <n/2$ then $\binom{n}{\ell}\ge \binom{n}{\ell'}$. 
If $\ell'>n/2$, then $\binom{n}{\ell}\ge\binom{n}{\ell}/n^{\ell-\ell'}$.
Finally if $\ell'<n/2$ and $\ell\ge n/2$, then $\binom{n}{\ell}\ge\binom{n}{\ell'}$ if $|n/2-\ell|<|n/2-\ell'|$ otherwise $\binom{n}{\ell}\ge\binom{n}{\ell'}/n^{|n/2-\ell|-|n/2-\ell'|}$. Thus, in all cases, $\binom{n}{\ell}\ge\binom{n}{\ell'}/n^{\ell-\ell'}\ge \binom{n}{\ell'}/n^{k-1}$, where the last inequality
follows from the fact that $\ell-\ell'\le k-1$.
(\ref{eq:set-cover-3}) follows from Lemma~\ref{lem:binom-ratio}. Finally (\ref{eq:set-cover-4}) follows from the fact that $\ell'\ge \ell-k+1$
and $s\le n$.

We are done except for the following lemma:

\begin{lemma}
\label{lem:binom-ratio}
Let $a\le n$ and $b\le n-a$ be integers. Then
\[\frac{\binom{n}{b}}{\binom{n-a}{b}}\ge e^{ab/n}\cdot\sqrt{\frac{8}{27(n+1)}}.\]
\end{lemma}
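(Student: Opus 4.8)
The plan is to sidestep Stirling's approximation and instead exploit the fact that the quotient of the two binomial coefficients telescopes into a short product of $a$ terms. First I would write
\[
\frac{\binom{n}{b}}{\binom{n-a}{b}}=\frac{n!\,(n-a-b)!}{(n-a)!\,(n-b)!}=\prod_{i=0}^{a-1}\frac{n-i}{\,n-b-i\,},
\]
since the numerator contributes the $a$ consecutive factors $n,n-1,\dots,n-a+1$ and the denominator the $a$ consecutive factors $n-b,n-b-1,\dots,n-b-a+1$. This identity is well defined precisely because the hypotheses $a\le n$ and $b\le n-a$ force every factor $n-b-i$ with $0\le i\le a-1$ to be at least $n-b-a+1\ge 1$, so there are no vanishing or negative denominators.

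Next I would lower bound each factor individually. For $0\le i\le a-1$ put $m=n-i$; then $m\ge n-a+1>n-a\ge b$, so $0\le b/m<1$, and the elementary inequality $-\ln(1-x)\ge x$ (for $0\le x<1$) gives
\[
\ln\frac{n-i}{\,n-b-i\,}=-\ln\!\left(1-\frac{b}{m}\right)\ge\frac{b}{m}\ge\frac{b}{n}.
\]
Summing this over the $a$ values $i=0,\dots,a-1$ yields $\ln\dfrac{\binom{n}{b}}{\binom{n-a}{b}}\ge \dfrac{ab}{n}$, i.e. $\dfrac{\binom{n}{b}}{\binom{n-a}{b}}\ge e^{ab/n}$.

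Finally, since $\sqrt{8/(27(n+1))}\le\sqrt{8/27}<1$ for every $n\ge 1$, the bound just obtained is already stronger than the one claimed, so the lemma follows; the degenerate cases $a=0$, $b=0$ or $n=0$ make both sides equal to (or at most) $1$ and are immediate. I expect essentially no real obstacle here: the only points needing a moment's care are verifying that the telescoping identity is valid and that the argument of the logarithm stays in $[0,1)$, both of which are direct consequences of $b\le n-a$. (If one insisted on reproducing the precise form of the bound, one could instead route the argument through the entropy/Stirling estimates $\sqrt{n/(8b(n-b))}\,2^{nH(b/n)}\le\binom{n}{b}$ and $\binom{n-a}{b}\le 2^{(n-a)H(b/(n-a))}$, with the $\sqrt{8/(27(n+1))}$ appearing as the lower-order correction; but that route is both messier and weaker, so I would not take it.)
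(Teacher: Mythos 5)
Your proposal is correct, and it takes a genuinely different route from the paper that in fact yields a \emph{stronger} result. The paper proves the lemma by bounding all four factorials with explicit Stirling-type estimates ($\sqrt{2\pi n}(n/e)^n \le n! \le \sqrt{3\pi n}(n/e)^n$), reorganizing the resulting quotient into the form $f(a,b,n)\cdot\frac{n^n(n-a-b)^{n-a-b}}{(n-b)^{n-b}(n-a)^{n-a}}$, and then showing separately that the main term is at least $e^{ab/n}$ and that the Stirling correction $f(a,b,n)$ is at least $\sqrt{8/(27(n+1))}$; the polynomial loss in $n$ is an artifact of those Stirling constants. Your argument instead telescopes the ratio into the product $\prod_{i=0}^{a-1}\frac{n-i}{n-b-i}$, lower-bounds the logarithm of each factor by $-\ln(1-\tfrac{b}{n-i})\ge \tfrac{b}{n-i}\ge\tfrac{b}{n}$, and sums; this gives the clean bound $\binom{n}{b}/\binom{n-a}{b}\ge e^{ab/n}$ with no correction factor at all, and the $\sqrt{8/(27(n+1))}$ in the statement is then absorbed for free since it is less than $1$. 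The hypotheses $a\le n$ and $b\le n-a$ are exactly what you need to guarantee that every denominator $n-b-i$ is positive and that $b/(n-i)<1$ so the logarithm inequality applies, and you have checked both points. In short: same conclusion, but your proof is more elementary, avoids Stirling entirely, and establishes the sharper inequality without the $1/\sqrt{n}$ loss; the paper's bound is weaker only because its authors routed through Stirling estimates. (One cosmetic remark: the final parenthetical about the entropy/Stirling route is fine as a side comment but unnecessary; the telescoping argument stands on its own.)
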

\begin{proof}
Stirling's approximation can be used to obtain the following bound for $n\ge 1$,
\[\sqrt{2\pi n}\left(\frac{n}{e}\right)^n\le n!\le \sqrt{3\pi n}\left(\frac{n}{e}\right)^n.\]
In particular, this implies that for any $x\le y$,
\[\frac{1}{3}\sqrt{\frac{2y}{\pi x(y-x)}}\cdot \frac{y^y}{x^x(y-x)^{y-x}}\le
\binom{y}{x}\le \frac{1}{2}\sqrt{\frac{3y}{\pi x(y-x)}}\cdot \frac{y^y}{x^x(y-x)^{y-x}}.
\]
Using the bound above, we get
\begin{equation}
\label{eq:ratio-binom-1}
\frac{\binom{n}{b}}{\binom{n-a}{b}}\ge f(a,b,n)\cdot \frac{n^n (n-a-b)^{n-a-b}}{(n-b)^{n-b}(n-a)^{n-a}},
\end{equation}
where
\begin{equation}
\label{eq:ratio-binom-2}
f(a,b,n)= \sqrt{\frac{8}{27}\cdot\frac{n(n-a-b)}{(n-b)(n-a)}}
\ge \sqrt{\frac{8}{27\left(1+\frac{ab}{n(n-a-b)}\right)}}\ge \sqrt{\frac{8}{27(1+n)}},
\end{equation}
where the last inequality used the facts that $ab\le n^2$ and $n-a-b\ge 1$.

Now consider the following sequence of relationships
\begin{align}
\frac{n^n (n-a-b)^{n-a-b}}{(n-b)^{n-b}(n-a)^{n-a}}&=\left(\frac{n}{n-a}\right)^b\left(\frac{n}{n-b}\right)^a\left(\frac{n(n-a-b)}{(n-a)(n-b)}\right)^{n-a-b}\nonumber\\
&=\frac{1}{\left(1-\frac{a}{n}\right)^b}\cdot 
\frac{1}{\left(1-\frac{b}{n}\right)^a}\cdot
\frac{1}{\left(1+\frac{ab}{n(n-a-b)}\right)^{n-a-b}}\nonumber\\
&\ge \frac{1}{e^{-ab/n}}\cdot\frac{1}{e^{-ab/n}}\cdot \frac{1}{e^{ab/n}}\nonumber\\
\label{eq:ratio-binom-3}
&=e^{ab/n},
\end{align}
where the inequality follows from the following two facts (for $x,y>0$):
\[\left(1+\frac{x}{y}\right)^y \le e^x\text{ and } (1-x)^y\le e^{-xy}.\]

(\ref{eq:ratio-binom-1}), (\ref{eq:ratio-binom-2}) and (\ref{eq:ratio-binom-3})
complete the proof.
\end{proof}

Note that our reduction maps one tolerant testing problem instance (where say
we want to distinguish between at most $e$ error vs. at least $2e$ errors) to
$O(e\log{n})$ many instances of error detection. Next we show that this is
essentially unavoidable even if we use $k^{++}$ queries for constant $k$.
The following result follows from the results in~\cite{k-plus}:

\begin{theorem}
\label{thm:query-lb}
Let $0\le \ell < t\le n$ and $k\ge 1$ be integers. 
Then any {\em adaptive}, deterministic $k^{++}$ algorithm for the $(\ell,t,n)$-threshold
problem  makes $\Omega(\ell/k)$ queries.
\end{theorem}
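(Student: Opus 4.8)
The plan is to prove the lower bound by an adversary argument, in the spirit of the one used by Aspnes et al.~\cite{k-plus} for $k^+$ decision trees; the only thing that needs checking is that passing to the more informative $k^{++}$ primitive does not help the algorithm. Since a $(\ell,t,n)$-threshold tester's behaviour on inputs of weight strictly between $\ell$ and $t$ is unconstrained, it suffices to fool it on inputs of weight $\le \ell$ versus inputs of weight $\ge t$, and because $t\le n$ it is in fact enough for the adversary to keep alive one consistent completion of Hamming weight at most $\ell$ and one of Hamming weight $n$.

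First I would set up the adversary. It processes the adaptively chosen query sequence $S_1,S_2,\dots$ in order, maintaining a growing partial assignment $\pi$ in which it only ever fixes coordinates to $1$, never to $0$; write $o$ for the number of coordinates currently fixed to $1$. On an incoming query $S$, let $a:=\#\{i\in S:\pi_i=1\}$. If $a\ge k$, the adversary answers with the saturation flag $k^{++}$ and leaves $\pi$ unchanged. Otherwise, if $S$ has at least $k-a$ coordinates that are still free, the adversary fixes $k-a$ of them to $1$ and answers $k^{++}$. Otherwise $S$ has fewer than $k-a$ free coordinates, and the adversary fixes all of them to $1$ and answers honestly with $\pi_S$, which now carries $a+(\text{number newly fixed})<k$ ones. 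In every case $o$ increases by at most $k$, and in every case the answer is consistent with all completions of $\pi$: a flag answer is correct because $S$ already carries $\ge k$ ones under $\pi$, and an honest answer is correct because all of $S$ has been fixed under $\pi$.

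Then I would carry out the standard indistinguishability conclusion. After $m$ queries we have $o\le mk$; as long as $o\le\ell$, the completion $x^{lo}$ obtained by setting all still-free coordinates to $0$ has $\wt(x^{lo})=o\le\ell$, while the completion $x^{hi}$ obtained by setting all still-free coordinates to $1$ has $\wt(x^{hi})=n\ge t$. Both $x^{lo}$ and $x^{hi}$ are consistent with the entire transcript, so the deterministic algorithm produces the same output on both and hence errs on one of them. Therefore the algorithm cannot have halted while $mk\le\ell$, which gives $m=\Omega(\ell/k)$.

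The step I expect to require the most care is the second one: arranging that the adversary is never forced to fix a coordinate to $0$ (which, when $t$ is close to $n$, it could not afford) and that each query — whether it returns the full restriction $\pi_S$ or merely the flag — costs at most $k$ units of the ``one-budget''; this is exactly what pins the bound at $\Omega(\ell/k)$ and what makes the argument go through verbatim for $k^{++}$ rather than only for $k^+$. Alternatively, one may simply invoke the corresponding lower bound from~\cite{k-plus} and remark that its adversary argument does not exploit any information a $k^{++}$ query reveals beyond that provided by the weaker $k^+$ query.
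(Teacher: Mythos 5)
Your proposal is correct and follows essentially the same adversary argument as the paper: both maintain a growing set of coordinates pinned to~$1$ (at most~$k$ new ones per query), answer $k^{++}$ exactly when $|S|\ge k$, and then play off the all-zero completion of weight $\le mk$ against the all-ones completion of weight $n$. The bookkeeping differs only cosmetically (you top up to exactly $k$ ones in $S$ rather than always adding $k$ fresh coordinates), but the invariant, the case analysis, and the final indistinguishability step match the paper's proof.
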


\subsection{Proof of Theorem~\ref{thm:query-lb}}
The proof will be by an adversarial argument to show that if 
$r<\ell/k$ $k^{++}$ queries are made then there exist two inputs $\vx$ and $\vy$ on which the
answers to the queries will be the same, yet $\wt(\vx)\le \ell$ and $\wt(\vy)\ge t$. Note
that the existence of such a pair of inputs will complete our proof.

We will think of the adversary as maintaining a set of positions $U(i)$ after the $i$th step.
The invariance that the adversary will maintain is that $U(i-1)\subseteq U(i)$ and
more importantly, that any input $\vx$ such that $\vx_{U(r)}=\myvec{1}$ will be consistent with
answers to the queries. Finally, it is also the case that $|U(i)|\le ki$. Note that
if we can come up with a way to construct these subsets $\{U(i)\}_{i=1}^r$,
then the proof will be done (consider the inputs $\myvec{1}$ and $\va$ such that $\va_{U(r)}=\myvec{1}$
and $\va_{[n]\setminus U(r)}=\myvec{0}$).

To complete the proof, we will show how the adversary can construct the set $U(i)$.
Given the $i$th $k^{++}$ query $S\subseteq[n]$, the adversary constructs $U(i)$ as follows:
Let $S'=S\setminus U(i-1)$. If $|S'|\le k$, then let $U(i)=U(i-1)\cup S$. Otherwise pick
an arbitrary subset $T\subseteq S'$ such that $|T'|=k$ and define $U(i)=U(i-1)\cup T$.
In both cases, the adversary answers the query as follows: If $|S|\ge k$, return an
answer of $k^{++}$, otherwise report that the substring indexed by $S$ is the
all ones vector. It is easy to check that $U(i)$ satisfies all the required 
properties.

\section{Randomness Efficient Construction of List Disjunct Matrices}
\label{app:list-disj}

In this section, we show that $(e,e)$-list disjunct matrices can be 
constructed with $t=O(e\log{n})$ rows (each with support at least $n/(2e)$)
with $O(e\log^2{n})$
random bits.

For this, we will need Nisan's PRG for space bounded computation~\cite{nisan-prg}.
Nisan's result states that there exists a function $G:\{0,1\}^T\rightarrow\{0,1\}^R$ such
that any Finite State Machine that uses
$O(S)$ space and $R$ random bits, cannot distinguish between truly unbiased
random
$R$ bits and the bits $G(x)$ (for $x$ chosen randomly from $\{0,1\}^T$)
for $T=O(S\log{R})$ with probability more than $2^{-O(S)}$. Further, 
any bit of $G(x)$
can be computed,
given
the $T$ random bits $x$ (and $O(S)$ extra storage).

We first use the probabilistic method to show that the required object exists
with high probability. Then we show that the proof can be implemented in
low space and use Nisan's PRG to complete the proof.

Let $t=c\cdot e\log{n}$, where $c$ is some large enough constant so that
all calculations go through. Also let $\a\ge 1$ be a large enough constant to
be determined later.
We will also assume that $t\le n$ so that $n/e\ge c\log{n}$.
Let $M$ be a random $t\times n$ matrix, where each entry is one independently with
probability $1/e$. Now to prove that $M$ has the required property, we show
that it satisfies the following two properties with high probability:
\begin{enumerate}
\item [(a)] Every row of $M$ has Hamming weight at least $\frac{n}{2e}$.
\item[(b)] For any two disjoint subsets $S,T\subseteq [n]$ such that
$|S|=|T|=e$, there is at least one row such that at least one column in
$T$ has a one in it while all the columns in $S$ have a zero in it.
\end{enumerate}

We begin with (a). Note that in expectation any row has $n/e$ ones in it. Thus,
by Chernoff bound, the probability that any row has Hamming weight at most
$n/(2e)$ is upper bounded by
\[\exp\left(-\frac{n}{12e}\right)\le \exp\left(-\frac{c}{12}\cdot\log{n}\right)
\le n^{-2\alpha},\]
where the last inequality follows for $c\ge 24\a$ and the first inequality
follows from the assumption that $t\le n$.

Next, we move to (b). Fix a row $j\in [t]$. Now the probability that
$\vee_{i\in T} M_{j,i}=1$ and $\vee_{i\in S} M_{j,1}=0$ is exactly
\[\left(1-\left(1-\frac{1}{e}\right)^e\right)\left(1-\frac{1}{e}\right)^e\ge \frac{1}{8},\]
where the last inequality follows for $e\ge 2$. Thus, the probability that there does
not exist a row as desired in part (b) is upper bounded by 
\[\left(\frac{7}{8}\right)^{ce\log{n}}\le n^{-(2+\a)e},\]
where the last inequality follows for $c\ge 20(2+\a)$. Thus, by the union bound,
part (a) does not hold with probability at most $n^{-2\a}$ (for $e\ge 2$).

Thus, $M$ does not have the desired property with probability at most $n^{-\a}$
(for $n\ge 2$).

Next, we estimate the space required to implement the proof above, i.e.
given $R=nt$ bits of the entries in $M$, we need to figure out how much space
is needed to verify whether $M$ has the required property or not. For
part (a), we need $O(\log{t}+\log{n})$ space to keep track of the row and $O(\log(n/e))$
bits to check if the row has Hamming weight at least $n/(2e)$. So we can
implement part (a) with $O(\log{n})$ space. For part (b), we need $O(e\log{n})$
space to keep track of the subsets $S$ and $T$. For given $S$ and $T$, we need
$O(\log{t})$ space to keep track of the rows and $O(\log{n})$ space to verify
if it is the row that ``takes care" of $S$ and $T$. Thus, for part (b) we
need $O(e\log{n})$ space.

Thus, overall we have $S=O(e\log{n})$. 
We are almost done, except for one small catch:
Nisan's PRG deals with unbiased bits but we need random bits that are biased.
However, we can obtain a random bit that is one with probability $1/e$
from $O(\log{e})$ unbiased bits (by declaring the final bit to be one if and
only if all the unbiased bits are $1$). Thus, we can convert the proof above to
use $R'=O(\log{e}\cdot R)$ unbiased random bits. Further, this conversion needs
an extra $O(\log\log{e}+\log{R})$ space, which implies that the
total space used is  $S'=O(e\log{n})$.

Thus, by Nisan's PRG we would be
done with $O(S'\log{R'})=O(e\log^2{n})$ random bits. Using Nisan's PRG will
increase the error probability to $n^{-\a}+2^{-O(S)}$, which can be
made to be polynomially small by picking $\a$ appropriately.

\subsection*{Acknowledgments} We thank Venkat Guruswami, Steve Li and Ram Swaminathan for helpful discussions. Thanks to Chris Umans for pointing out~\cite{limited-indep-chernoff} to us.

\bibliographystyle{abbrv}
\bibliography{code-test-ds}

\newpage

\appendix

\section{Upper bounding a sum}
\label{app:sum-ub}

We begin with the sum
\[\sum_{j=0}^{k-1} \binom{i}{j}\binom{t}{j}\binom{n-i}{t-j}.\]

Let $s, i, t$ and $n$ be such that $s\le i\le n$ and $t\le n$. The sum
above is then upper bounded by
\[\sum_{j=0}^{k-1} \binom{n}{j}\binom{n}{j}\binom{n-s}{t-j}.\]
Now define $j^*$ to be the minimum $0\le j\le k-1$ such that 
$t-j^*\le n-s$ (if no such $j^*$ exists then the sum is $0$). Now upper bounding $\binom{n}{j}\le n^{k-1}$ for
$j\le k-1$, we can again upper bound the sum above by
\[n^{2(k-1)}\sum_{j=j^*}^{k-1}\binom{n-s}{t-j}.\]
From the bound that $\binom{a}{b}\le a\binom{a}{b-1}$, we get that
$\binom{n-s}{t-j}\le (n-s)^{j-j^*}\binom{n-s}{t-j^*}\le n^{k-1}\binom{n-s}{t-j^*}$. This along with the bound above implies that
\[\sum_{j=0}^{k-1} \binom{i}{j}\binom{t}{j}\binom{n-i}{t-j}\le kn^{3(k-1)}\binom{n-s}{t'},\]
where $t'=t-j^*$, as desired.
Similarly one can show that
\[\sum_{j=0}^{k-1} \binom{i}{j}\binom{\ell}{j}\binom{n-i}{\ell-j}\le kn^{3(k-1)}\binom{n-s}{\ell'},\]
where $\ell'=\max_{0\le j\le k-1} \{\ell-j|\ell-j\le n-s\}$.

\end{document}